\newcolumntype{C}[1]{>{\centering\let\newline\\\arraybackslash\hspace{0pt}}m{#1}}
\theoremstyle{plain}
\newtheorem{theorem}{Theorem}[section]
\newtheorem{claim}{Claim}
\newtheorem{lemma}[theorem]{Lemma}
\newtheorem{property}{Property}
\newtheorem{corollary}[theorem]{Corollary}
\newtheorem{definition}{Definition}
\newcommand{\sett}[1]{ \left\{ #1 \right\} }
\DeclareMathOperator{\poly}{poly}
\DeclareMathOperator{\polylog}{\mathrm{polylog}}
\DeclareMathOperator{\dist}{dist}
\DeclareMathOperator*{\argmin}{arg\,min}
\def\modelname {Heterogeneous MPC}
\newcommand{\N}{\mathbb{N}}
\newcommand{\degout}{\deg_{\mathrm{out}}}
\newcommand{\Boruvka}{\text{Bor{\r{u}}vka}}
\newcommand{\hrulealg}[0]{\vspace{1mm} \hrule \vspace{1mm}}
\begin{document}

\title{Massively Parallel Computation in a Heterogeneous Regime}

\author{Orr Fischer \thanks{Computer Science Department, Weizmann Institute of Science, Israel. Email: orr.fischer@weizmann.ac.il} ~ Adi Horowitz \thanks{Blavatnik School of Computer Science, Tel-Aviv University, Israel. Email: adihorowitz1@mail.tau.ac.il}  ~   Rotem Oshman \thanks{Blavatnik School of Computer Science, Tel-Aviv University, Israel. Email: roshman@mail.tau.ac.il}}

\date{}

\maketitle

\abstract{Massively-parallel graph algorithms have received extensive attention over the past decade, with research focusing on three memory regimes: the superlinear regime, the near-linear regime, and the sublinear regime. The sublinear regime is the most desirable in practice, but conditional hardness results point towards its limitations.

In this work we study a \emph{heterogeneous} model, where the memory of the machines varies in size. We focus mostly on the heterogeneous setting created by adding a single near-linear machine to the sublinear MPC regime, and show that even a single large machine suffices to circumvent most of the conditional hardness results for the sublinear regime: for graphs with $n$ vertices and $m$ edges, we give
\begin{inparaenum}[(a)]
\item 
an MST algorithm that runs in $O(\log\log(m/n))$ rounds;
\item an algorithm that constructs an $O(k)$-spanner of size $O(n^{1+1/k})$ in $O(1)$ rounds; and
\item a maximal-matching algorithm that runs in $O(\sqrt{\log(m/n)}\log\log(m/n))$ rounds.
\end{inparaenum}
We also observe that the best known near-linear MPC algorithms for several other graph problems which are
conjectured to be hard in the sublinear regime
(minimum cut, maximal independent set, and vertex coloring) can easily be transformed to work in the heterogeneous MPC model with a single near-linear machine, while retaining their original round complexity in the near-linear regime.
If the large machine is allowed to have \emph{superlinear} memory, all of the problems above can be solved in $O(1)$ rounds.}

\section{Introduction}\label{sec:intro}


The massively-parallel computation (MPC) model was introduced in~\cite{mapreduce} as a theoretical model of the popular Map-Reduce framework and other types of large-scale parallel computation, and has received significant attention from the distributed computing and algorithms communities.
In the MPC model we have $K$ machines, each with a local memory of size $s$,
operating on an input of size $N$ that is initially distributed arbitrarily across the machines. It is typically (but not always) assumed that $K \cdot s = \tilde{O}(N)$,
so that the total amount of memory in the system is of the same order as the input size.
All machines can communicate directly with one another;
the computation proceeds in synchronous rounds, with each machine sending and receiving at most $s$ bits in total per round.
Of particular interest in this model are graph problems, where the input is a graph $G = (V,E)$ on $n$ vertices, with the edges of the graph initially distributed arbitrarily among the machines.

The research on MPC graph algorithms typically considers three memory regimes:
the \emph{superlinear} regime, where the memory of each machine is
$s = O(n^{1+\gamma})$ for some $\gamma \in (0,1)$;
the \emph{near-linear} regime, where $s = O(n\polylog{n})$;
and the \emph{sublinear} regime, where $s = O(n^{\gamma})$ for some $\gamma \in (0,1)$.
Many graph algorithms have been developed for the various regimes (e.g., \cite{mapreduce,sorting,filtering,ANOY14,BKS17,coloring_in_sublinear,conn_log_diam,conn_log_diam_2,Sparsifying,MaximalMatching,efficient_coloring,GLM19,KPP20,GJN20,GGJ20,CFGUZ19,MIS,hardness,unweightewMincut,weightewMincut,CDP21,spanners,spanners2,MISinSublinear,CLMMOS20,NS19,CDP21a}).
The sublinear regime is the most desirable in practice:
when designing a distributed server cluster,
it is most economical to deploy many weak servers.
However, this configuration is also the most challenging, since it does not allow a single machine to store information about all the vertices of the graph.
It is conjectured that even some simple problems on sparse graphs are somewhat hard in the sublinear MPC model:
the ``2-vs-1 cycle'' problem, where we are asked to distinguish between a graph that is a single large cycle and graphs that comprise two cycles, is conjectured to require $\Omega(\log{n})$ rounds, and based on this conjecture and a beautiful connection to local distributed algorithms, \cite{hardness} establishes several conditional hardness results for sublinear MPC,
and these are refined in~\cite{CDP21a} (see, e.g.,~\cite{NS19} for an overview of more hardness results in the sublinear MPC model).

In this paper our goal is to interpolate between the various MPC regimes,
and ask what can be done in a \emph{heterogeneous} MPC regime where we have a small number of machines with large memories, and many machines with small memories.
Compared to the larger-memory regimes (near-linear and superlinear), the heterogeneous regime is more practical; thus, our hope is to have the best of both worlds --- 
the efficiency of the larger-memory regimes,
together with
the practicality and economic feasibility of the sublinear regime,
with the addition of a few strong (near-linear or superlinear) servers.


The starting point for our work is the observation that the ``$2$-vs-$1$ cycle'' problem
becomes trivial if we have even a single machine with memory $\Omega(n\log{n})$.
This motivates us to ask whether the problems whose conditional hardness rests on the hardness of the ``2-vs-1 cycle'' problem --- connectivity, minimum-weight spanning tree, maximal matching, and others --- also become easy given a small number of large machines.
For the sake of concreteness, since one large machine suffices to beat the ``2-vs-1 cycle'' problem, we focus throughout on a model where we have a single machine with memory $O(n\polylog{n})$, and the other machines have sublinear memory, $O(n^{\gamma} \polylog{n})$ for some $\gamma \in (0,1)$.
Unless specified explicitly differently, by saying \modelname{} model we refer to this setting that consists of a single large machine.
We show that indeed, for several of the problems from \cite{hardness}, a single large machine suffices to circumvent the conditional hardness results from the sublinear regime,
and either match or come close to the complexity of these problems in the near-linear regime.

\paragraph*{Our results}

Table~\ref{tab:results_summary} below summarizes our results:
for a graph on $n$ vertices and $m$
edges, with maximum degree $\Delta$ and diameter $D$, 
we compare the best known algorithms for the sublinear regime, the near-linear regime, and our results for the heterogeneous regime.

Although the heterogeneous model has not been explicitly studied in the past (to our knowledge),
several algorithms that were developed for the near-linear regime easily translate to the heterogeneous regime, as they either
already require only one large machine or can be easily modified to do so,
and we indicate those in the table.
We highlight in bold the three problems --- MST, spanners, and maximal matching --- for which we developed substantially different algorithms for the heterogeneous regime.

\begin{table}[H]\centering
\begin{minipage}{\textwidth}
\renewcommand*{\thempfootnote}{\arabic{mpfootnote}}
\begin{tabular}{|C{0.25\textwidth}|C{0.22\textwidth}|C{0.22\textwidth}|C{0.22\textwidth}|}
 \hline
 Problem & Sublinear MPC & Heterogeneous MPC &  Near-linear MPC
 \\ [0.5ex]
 \hline\hline
 Connectivity & $O(\log{D}+\log\log{n})$ \footnote{Here, $D$ is the diameter of a minimum spanning forest of the input graph.} \cite{conn_log_diam_2} 
 &
 $O(1)$ \cite{linearMeas}
 &
 $O(1)$ \cite{linearMeas}
 \\ \hline
 MST
 &
 $O(\log{n})$  \cite{conn_log_diam}
 &
 \boldmath{$O(\log\log(\frac{m}{n}))$}
 \textbf{[new]} 
 &
$O(1)$ \cite{linearMeas}
 \\
 \hline  
  $(1+\epsilon)$-approx. MST 
 &
  no better result known for $(1+\epsilon)$-approx.\ than for exact
 &
 $O(1)$ \cite{linearMeas} 
 &
exact in $O(1)$ \cite{linearMeas}
 \\ \hline
 $O(k)$-spanner%
 \footnote{For unweighted graphs.
 To obtain a spanner for weighted graphs,
 one can use the reduction from~\cite{spanners}.}
 of size $O(n^{1+1/k})$\footnote{We remark that for the \modelname{} model, since we can construct an $O(\log{n})$-spanner of size $\tilde{O}(n)$ in $O(1)$ rounds, 
we can also compute an $O(\log{n})$-approximation to all-pairs shortest paths (APSP), by storing the spanner on the large machine.}
 &
 $O(\log{k})$ \footnote{The algorithm finds an $O(k^{\log{3}})$-spanner of size $O(n^{1+1/k}\log{k})$.} \cite{spanners2}
 &
 \boldmath{$O(1)$}
 \textbf{[new]}
 &
  $O(1)$ \cite{spanners}
 \\ \hline
  Exact unweighted min-cut
  &
  $O(\polylog{n})$ \footnote{This algorithm was developed for PRAM, but it also works in sublinear MPC.} \cite{min_cut_PRAM}
  &
  $O(1)$ \cite{unweightewMincut}
  &
  $O(1)$ \cite{unweightewMincut}
  \\ \hline
 Approx.\ weighted min-cut
 &
 $(2+\epsilon)$-approx. in $O(\log{n} \cdot \log\log{n})$ \cite{weightewMincut}
 &
 $(1 \pm \epsilon)$-approx. in $O(1)$ \cite{weightewMincut} 
 &
exact in $O(1)$ \cite{weightewMincut}
 \\ \hline
  $(\Delta+1)$ vertex coloring
  &
  $O(\log\log\log{n})$ \cite{coloring_in_sublinear}
  &
  $O(1)$ \cite{coloring}
  &
  $O(1)$ \cite{coloring}
  \\ \hline 
 Maximal independent set
 &
 $O(\sqrt{\log\Delta}\cdot \log\log\Delta + \sqrt{\log\log{n}})$  \cite{Sparsifying}
 &
     $O(\log\log(\Delta))$ \cite{MIS}
  &
 $O(\log\log(\Delta))$ \cite{MIS}
 \\  \hline 
  Maximal matching
  &
  $O(\sqrt{\log\Delta}\cdot \log\log\Delta + \sqrt{\log\log{n}})$  \cite{Sparsifying}
  &
   \boldmath{
  $O(\sqrt{\log\frac{m}{n}}\log\log\frac{m}{n})$}
  \textbf{[new]} 
  &
$O(\log\log(\Delta))$ \cite{MaximalMatching} 
  \\  
 \hline
 \end{tabular}
 \caption{\label{tab:results_summary}Algorithms for fundamental graph problems in  sublinear MPC, \modelname{} and near-linear MPC}
\end{minipage}
\end{table}

These results can be further improved if the single large machine has more memory:
with memory
$\tilde{O}(n^{1+f(n)})$
for a function $f(n) = \Omega(1/\log{n})$,
we can solve MST in $O(\log(\log(m/n)/(f(n)\log{n})))$ rounds,
and the maximal matching algorithm of~\cite{filtering} can be used
to
find a maximal matching in $O(1/f(n))$ rounds.
However, to go all the way down to $O(1)$ rounds, we require $f = \Omega(1)$, that is,
the large machine needs to have $\Omega(n^{1+\gamma})$ memory for some constant $\gamma \in (0,1)$. It remains an intriguing open problem whether this can be improved.

\paragraph*{Our techniques}
In order to utilize the single large machine at our disposal, our algorithms for the \modelname{} model have the following flavor:
first, we select a \emph{sparse subgraph} $G'$ of our input graph $G$,
such that $G'$ has only $\tilde{O}(n)$ edges; the subgraph $G'$ is either sampled
at random or obtained by working to sparsify the input graph, or both.
We send $G'$ to the large machine
and have it compute some \emph{partial solution} for $G$.
Since the large machine sees only the subgraph $G'$,
the partial solution it finds typically induces an ``over-approximation''
of the true solution for $G$: for example, in the case of spanners,
the partial solution found by the large machine induces a spanner with more
edges than necessary.
However, we prove that the induced solution is ``not too large'' (or, in the case of MST, use the KKT sampling lemma~\cite{KKT}, which asserts exactly this).
The large machine then encodes the partial solution
in the form of a \emph{labeling}, $L : V' \rightarrow \sett{0,1}^{\polylog(n)}$,
of the vertices of $G'$, which it sends to the small machines:
each small machine is given the labels of all vertices whose edges it stores.
Finally, the small machines use the labels to select edges 
that are part of the true solution for $G$,
and we somehow combine the selected edges to come up with 
the correct solution for $G$, either by using the large machine again
or by other means.

Similar approaches of sub-sampling and over-approximation were previously used to obtain many efficient algorithms for a wide range of graph problems in different models, such as the streaming model (\cite{linearMeas, linear_sket_stream_1}), the Congested Clique model (\cite{connectivity_in_cc,random_sampling_cc,MSTinCC,JN18}), the near-linear MPC model (\cite{weightewMincut, unweightewMincut, coloring}), and the super-linear MPC model (\cite{filtering}).
Although not explicitly stated in these terms,
some of these results in fact require only a single large machine, or can be adapted to do so (see Table~\ref{tab:results_summary}).


\paragraph{Hardness results of sublinear MPC} 
The connection between lower bounds in the sublinear MPC model and Boolean circuit complexity was investigated first in \cite{RVW18}, and later on by \cite{circuit_simulation}. In \cite{RVW18} it is shown that proving a super-constant lower bound in sublinear MPC for any problem in $\mathsf{P}$ would imply $\mathsf{NC}_1 \subsetneq \mathsf{P}$, which suggests that any such lower bound is beyond the reach of current techniques. This has led the research community to focus more on conditional lower bounds (e.g. \cite{hardness,NS19,CDP21a}).
As previously discussed, the ``1-vs-2 cycle'' problem is conjectured to have round complexity
$\Omega(\log{n})$ in sublinear MPC;
this immediately implies the same lower bound for problems such as MST and shortest-paths.
A general framework for lifting existing lower bounds from the distributed $\mathsf{LOCAL}$ model to obtain conditional lower bounds in sublinear MPC was given in~\cite{hardness}, and refined and extended in~\cite{CDP21a}.%
\footnote{The framework from~\cite{hardness,CDP21a} applies only to algorithm satisfying a natural property called \emph{component stability}, and this limitation is somewhat inherent, as~\cite{CDP21a} showed that in fact component-unstable algorithms can break the conditional lower bounds from~\cite{hardness}.
The algorithms we present in this paper can trivially be made component-stable, because we can first solve connectivity on the large machine, and then work on each connected component separately but in parallel.}
%
The conditional lower bounds of \cite{hardness,CDP21a} include an $\Omega(\log\log{n})$ conditional lower bound for any component-stable algorithm computing a constant approximation of maximum matching, a constant approximation of vertex cover, or a maximum independent set, an $\Omega(\sqrt{\log\log{n}})$ conditional lower bound for computing $(\Delta+1)$-coloring, and an $\Omega(\log{k})$ conditional lower bound on computing an $O(k)$-spanner with $O(n^{1+1/k})$ edges.

\paragraph{Additional related work}

Beyond the results summarized in Table~\ref{tab:results_summary},
there are some algorithms that can achieve good dependence on 
graph parameters other than $n$ and $\Delta$.
Of some relevance, for graphs with arboricity $\alpha$,
in the sublinear regime,
a maximal independent set 
or a maximal matching can be computed in
      $O(\sqrt{\log\alpha}\log\log\alpha + \log\log{n} \log\log\Delta)$
      rounds~\cite{MISinSublinear}.
  The arboricity $\alpha$ of a graph satisfies $m/n \leq \alpha \leq \Delta$,
  so our result for the \modelname{} is quantitatively better
  (of course,~\cite{MISinSublinear} works in the sublinear regime, so no direct comparison is possible).
  
Another approach that extends the MPC model into a more realistic one is the \emph{Adaptive MPC model}~\cite{AMPC}.
In the AMPC model, for an input of size $N$ there are $m$ machines, each with memory of size $S$, such that $m \cdot S = \Tilde{O}(N)$. Additionally, there is a collection of distributed data stores, denoted by $D_0, D_1, ...$, such that in the $i$'th round, each machine can read data from $D_{i-1}$ and
write to $D_i$. The number of bits a machine can read and write in one round is limited by $O(S)$. As opposed to the native MPC, it is not required for a machine to read all of the $O(S)$ bits at once in the beginning of the round. Instead, each machine can query bits based on previous queries that it has already made in the same round. This adaptive behavior was the reasoning for the model name. It was claimed by \cite{AMPC} that the AMPC model could actually be more practical than the general MPC in many cases. Specifically, they pointed out that the MPC model does not take into consideration the ability to perform RDMA operations, by which a data stored on a remote machine can be read with only a few microsecond latency without requiring a synchronized round of communication. The AMPC model tries to close this gap between theory and practice. Indeed, this model efficiently solves a few problems which were considered hard for the native MPC model. Among them are the minimum spanning tree problem, which can be solved in $O(\log\log_{\frac{m}{n}}(n))$ rounds, and maximal independent set, which can be solved in $O(1)$ rounds.

\section{Preliminaries}\label{sec:prelims}

\paragraph*{Graph notation}
In this work we consider undirected graphs $G = (V,E)$,
which may be either weighted or unweighted.
We let $n$ denote the number of vertices in the graph,
and $m$ the number of edges.
If the graph is weighted, the weight function is given by $w : V \rightarrow \{1,\dots,\poly(n)\}$ (positive integer weights bounded by some fixed polynomial of $n$),
and we make the standard assumption that all edge weights are unique, and can each be represented in $O(\log{n})$ bits.

For a set $S \subseteq V$ of nodes,
let $E[S, V \setminus S] \subseteq E$ denote the edges that have one endpoint
in $S$ and one in $V \setminus S$.
Let $\dist_G(u,v)$ denote the unweighted (resp.\ weighted) distance between $u,v \in V$ in the unweighted (resp.\ weighted) graph $G$.
We omit the subscript $G$ when the graph is clear from the context.
 We use the words ``node'' and ``vertex'' interchangeably.

\paragraph*{The \modelname{} model}

In the \modelname{} model, 
we have one \emph{large machine}, which has memory $O(n \polylog(n))$,
and $K = m/n^{\gamma}$ \emph{small machines},
each with memory $O(n^{\gamma} \polylog(n))$, where $\gamma \in (0,1)$
is a parameter of the model.
The small machines are numbered $1,\ldots,K$, 
and when we say \emph{predecessor}, \emph{successor}, or \emph{consecutive machines}, we refer to
this numbering.
We assume that each machine has a source of private randomness (no shared randomness is assumed).

The input to the computation is a weighted or
unweighted undirected graph $G = (V,E)$.
The vertices $V$ of the input graph are fixed in advance,
and the edges $E$ are initially stored on the small machines
arbitrarily.
If $G$ is weighted, then the weight function
is specified by representing each edge as $(\sett{u,v}, w(\sett{u,v}))$.

The computation proceeds in synchronous rounds, where each machine can communicate with all the other machines, subject to each machine $M$ sending and receiving in total only as many bits as it can store in its memory (i.e., no more than $O(n \polylog{n})$ bits for the large machine,
or $\tilde{O}(n^{\gamma})$ bits for the small machines).
Between rounds, the machines may perform arbitrary computations on their local data, unbounded by time or by space.
At the end of the computation, the output is either stored on the large machine, if the size of the output allows, or distributed across the small machines if the output is too large.
 
 Throughout the paper, "with high probability" means with probability $1 - 1/n^d$ for some $d \geq 1$.

\paragraph*{Algorithmic tools}
In this section we describe some general tools that can be useful in \modelname{} model and we will use throughout the paper. They rely mostly on known results for the sublinear MPC regime.

\begin{claim} [Sorting \cite{sorting}] \label{sorting_claim}
     Given a set of $N$ comparable elements as input
     stored on the small machines, there is an
     $O(1)$-round algorithm that sorts the items on the small machines,
     such that at the end,
     for any two small machines $M < M'$,
     each item stored on $M$
     is no greater than any item stored on $M'$.
\end{claim}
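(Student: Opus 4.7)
The plan is to reduce directly to the known $O(1)$-round sublinear MPC sorting algorithm of~\cite{sorting}, with the large machine sitting idle. The key observation is that the small machines of the \modelname{} model, on their own, form a standard sublinear MPC system: each holds $O(n^{\gamma}\polylog n)$ bits of memory, any pair may exchange up to their memory bound per round, and they come equipped with the natural numbering $1, \ldots, K$. Therefore any algorithm designed for sublinear MPC can be executed on the small machines without modification and without penalty in round complexity.

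For the sorting algorithm of~\cite{sorting} to run in $O(1)$ rounds, one requires per-machine memory $N^{\epsilon}$ for some constant $\epsilon > 0$, where $N$ is the total number of items to sort. In our setting $N$ is polynomial in $n$, since the items are initially stored across small machines whose combined memory is at most $O(m \polylog n) = \poly(n)$. Consequently $n^{\gamma} \polylog n = N^{\Omega(1)}$, supplying the required polynomial slack, and the sorting algorithm applies as a black box.

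Finally, one should verify that the output format aligns with the statement of the claim: namely, that after the sort, small machine $M$ holds only items no greater than those on small machine $M'$ whenever $M < M'$. This is exactly the standard guarantee produced by Goodrich-style MapReduce sorting over a linearly-indexed set of machines, so no extra work is needed. The only (mild) obstacle is checking the parameter alignment between the sublinear MPC setting assumed by~\cite{sorting} and the \modelname{} model's small machines, but this is immediate from the definitions.
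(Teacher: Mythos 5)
Your proposal is correct and matches the paper's treatment: the claim is simply imported from~\cite{sorting} as a black box, run on the small machines alone (which by themselves constitute a standard sublinear MPC system), with the parameter check $n^{\gamma}\polylog n = N^{\Omega(1)}$ being exactly the routine verification needed. Nothing further is required.
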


One useful set of tools from prior work is fast aggregation or dissemination of information in sublinear MPC.
We use minor variants of these techniques that also may involve the large machine.

\begin{definition}
We say that $f : 2^S \rightarrow S$ is an
\emph{aggregation function}
if for all $k \geq 1$
and subsets $X_1,\ldots,X_k \subseteq S$
we have
$f( \sett{ f(X_1), \ldots, f(X_k) } ) = f( X_1 \cup \ldots \cup X_k)$.
\label{def:agg}
\end{definition}

Let $\N^S$ denote the set of all multisets whose elements
are drawn from $S$.
\begin{claim} [Aggregation \cite{efficient_coloring,distance_sketches}]\label{aggregation_claim}
Fix a domain $S$,
partitioned into predetermined subsets
$S_1,\ldots,S_t$, known in advance to all machines,
and let
$f : \N^S \rightarrow S$
 be an aggregation function that is fixed in advance 
 and known to all machines.
 We assume that the elements of $S$
 can be represented in $O(\polylog{n})$ bits.
Let $A \subseteq S$ be a set or a multiset
stored on the small machines, and let $A_i = A \cap S_i$.
Then there is a constant-round algorithm that computes $f(A_1),\ldots,f(A_t)$,
such that
at the end of the algorithm, for each $i$,
the value $f(A_i)$ is known to some small machine $M_i$
that is fixed in advance.
\end{claim}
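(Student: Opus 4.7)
The plan is to use the standard sort-then-tree-aggregate pattern from the sublinear MPC literature; because the total data is already distributed across the small machines and each aggregate fits in $\polylog(n)$ bits, the large machine plays no role, and the argument mirrors the one in \cite{efficient_coloring,distance_sketches}. First I would collapse each machine's data partition-by-partition, then use the sorting primitive of Claim \ref{sorting_claim} to place all partial aggregates for the same $S_i$ on consecutive small machines, then run an $O(1)$-depth combining tree along each resulting contiguous range, and finally route each aggregate to its predetermined destination $M_i$ in one extra round.

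In more detail, each small machine locally computes $f$ on its share of $A \cap S_i$ for every partition $i$ it touches, producing one tagged pair $(i, v)$ per such partition, where $v$ is the local partial aggregate. These pairs are then sorted by partition index $i$ using Claim \ref{sorting_claim}, after which the pairs belonging to the same $S_i$ occupy a contiguous range of small machines, with at most two distinct partitions meeting at any single machine boundary. Inside each range we run an $n^\gamma$-ary combining tree: in each round we designate, for every block of at most $n^\gamma$ consecutive pairs from the same partition, a canonical survivor machine; the pairs in the block send their current partial value to the survivor, which applies $f$ and keeps the combined result as the new partial aggregate. This shrinks the number of live partial aggregates for each partition by a factor of $n^\gamma$ per round, so after $\lceil \log_{n^\gamma}(K)\rceil = O(1/\gamma) = O(1)$ rounds each partition is left with a single partial aggregate, which equals $f(A_i)$ by the associativity property of Definition \ref{def:agg}. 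A final one-round routing ships each $f(A_i)$ to its pre-assigned machine $M_i$.

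The main thing to verify is that no machine exceeds its $\tilde{O}(n^\gamma)$-bit bandwidth budget at any step. After the local collapse and the sort, each small machine holds at most $O(n^\gamma \polylog n)$ pairs belonging to at most two distinct partitions, so within the combining tree it is the survivor for at most a constant number of blocks per round and thus receives $O(n^\gamma \polylog n)$ bits; senders transmit only a single $\polylog(n)$-bit value per round. The final routing round sends $t$ messages of $\polylog(n)$ bits each to predetermined destinations, with one value per $M_i$, well within budget. The only subtle point is that ``survivor'' identities must be consistent across machines without further communication, but since all survivor choices can be computed from the global positions fixed by the sort of Claim \ref{sorting_claim}, no extra coordination is required.
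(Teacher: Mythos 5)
Your proposal is essentially the paper's own argument: sort so that each partition's data occupies a contiguous range of small machines, then aggregate along an $n^{\gamma}$-ary tree of constant depth $O(1/\gamma)$, exactly as in the paper (which sorts the raw elements rather than locally pre-collapsed pairs and lets the large machine, after one round of successor messages, tell each range's root machine its extent so the tree structure can be agreed upon). The only points you gloss over are minor: "each machine holds pairs belonging to at most two distinct partitions" should be "at most two partitions that straddle machine boundaries" (a machine may hold many fully-contained partitions, which it aggregates locally), and making survivor/tree identities common knowledge does require the small coordination step the paper performs explicitly (each machine informing its successor of the largest partition index it holds, plus the large machine or an equivalent prefix computation), but this costs only $O(1)$ rounds and does not change the approach.
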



\begin{claim}[Dissemination \cite{efficient_coloring,distance_sketches}]\label{spreading_claim}
Fix a domain $S$, partitioned into predetermined subsets $S_1,\ldots,S_t$ which are
known to all machines,
and let $A \subseteq S$ be stored on the small machines.
Let $A_i = A \cap S_i$.
Suppose the large machine holds values $x_1, \ldots, x_t \in \sett{0,1}^{(\polylog{n})}$.
There is a constant-round algorithm that disseminates
$x_i$ to every small machine that stores an element from $A_i$,
in parallel for all $i = 1,\ldots,t$.
\end{claim}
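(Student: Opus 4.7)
The plan is to use the same machinery underlying Claim~\ref{aggregation_claim} in reverse: whereas aggregation routes information from the small machines up a constant-depth communication tree to designated result machines, dissemination pushes values down that tree from the large machine to all small machines requesting them. The prerequisite is a sorted arrangement in which all elements sharing a subset index reside on consecutive small machines, so that ``sending $x_i$ to the machines holding $A_i$'' becomes ``sending $x_i$ to a consecutive block of machines.''

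Concretely, I would first have each small machine locally tag every element $a$ it stores with the unique index $i(a)$ such that $a \in S_{i(a)}$ (free, since the partition is known to all machines), and then apply Claim~\ref{sorting_claim} to sort these tagged elements by $i(a)$ in $O(1)$ rounds. Afterwards, for each non-empty $A_i$ the elements of $A_i$ occupy a contiguous range $[\ell(i),r(i)]$ of small machines, and the indices held by any single small machine $M_j$ form an interval $I_j \subseteq [1,t]$ of size at most $O(n^{\gamma}\polylog n)$.

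Next, I would build a virtual tree over the small machines of fanout $n^{\gamma}$ and depth $O(1/\gamma) = O(1)$, rooted at the large machine, whose leaves are the sorted small machines. Using one invocation of Claim~\ref{aggregation_claim} with $\min$ and $\max$ as aggregation functions, I would label every internal node with the interval of subset indices present in its subtree -- genuinely an interval, because subtrees correspond to contiguous blocks of leaves. The large machine then pushes its vector top-down: an internal node, holding the $x_i$'s for its subtree's interval, forwards to each of its children only the sub-interval demanded by that child. After $O(1)$ levels, each small machine $M_j$ holds exactly $\{x_i : i \in I_j\}$, as required.

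The main obstacle is checking bandwidth at each level of the tree. Because the large machine must already store all of $x_1,\ldots,x_t$ within its $O(n\polylog n)$ memory, we have $t = O(n)$, so the outgoing load from the root is $O(n\polylog n)$ bits, within its per-round budget. At a deeper level, the children's intervals are disjoint except for at most $O(n^{\gamma})$ indices per node that straddle a child boundary and must be duplicated; since each small machine is responsible for an interval of size $O(n^{\gamma}\polylog n)$, the per-machine send and receive loads stay within $\tilde O(n^{\gamma})$. Combining the $O(1)$ sorting cost, the $O(1)$ aggregation cost to label tree nodes, and the $O(1/\gamma)$-depth top-down broadcast yields the claimed constant round complexity.
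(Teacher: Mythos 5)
Your overall plan (sort elements by subset index so each $A_i$ occupies a contiguous block of machines, then fan the values out along constant-depth trees of fanout $n^{\gamma}$) is in the spirit of the paper's proof, but the single global tree you build has a genuine bandwidth problem at its internal nodes. In your scheme an internal node must receive from its parent, store, and re-forward $x_i$ for \emph{every} index $i$ present in its subtree. An internal node is a small machine with only $\tilde{O}(n^{\gamma})$ memory and per-round communication, yet the interval of indices covered by its subtree is not bounded by the machine's own load: a child of the root has a subtree spanning roughly $K/n^{\gamma}$ leaves, and when $m \geq n^{1+\gamma}$ those leaves can collectively hold elements from up to $\Theta(t) = \Theta(n)$ distinct indices. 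That child would then have to receive and relay $\Theta(n\,\polylog n)$ bits in a single round, violating the model. Your bandwidth argument conflates the interval a machine is ``responsible for'' as a \emph{leaf} (at most $\tilde{O}(n^{\gamma})$ indices, since it stores that many elements) with the interval it must relay as an \emph{internal} node (the union over all leaves below it); the boundary-straddling duplication you bound by $O(n^{\gamma})$ is not where the cost lies.

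The paper avoids this by building a \emph{separate} tree $T_u$ for each index $u$, spanning only the consecutive machines $[M_u, M'_u]$ that actually store elements of $A_u$, rooted at $M_u$. The large machine injects the single value $x_u$ at the root of $T_u$ (total $\tilde{O}(t) = \tilde{O}(n)$ bits leaving the large machine), and within each tree every machine forwards just that one $\polylog(n)$-bit value to at most $n^{\gamma}$ children; the sorted arrangement guarantees each machine is an inner node of at most one nontrivial tree, so per-machine loads stay at $\tilde{O}(n^{\gamma})$ and the depth is $O((1-\gamma)/\gamma) = O(1)$. To repair your write-up you would need to replace the single global relay tree by such per-index trees (or otherwise ensure that no small machine ever relays more than $\tilde{O}(n^{\gamma})$ bits of the vector), since neither direct sending from the large machine to all requesting leaves (total demand can be $\tilde{\Theta}(m)$) nor relaying whole subtree intervals through small machines fits the memory and bandwidth constraints.
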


\begin{proof}[Proof of Claims~\ref{aggregation_claim} and \ref{spreading_claim}.]

For concreteness we focus on the context of graph problems and assume that for all $1 \leq i \leq t$, the set $A_i$ is associated with some node $u \in V$. These assumptions hold in all the places we use Claim~\ref{aggregation_claim} and \ref{spreading_claim} through out the paper, but we note that the claim can be proved without them.

First, for each edge $(u,v) \in E$ we make two directed copies, $(u,v)$ and $(v,u)$. We sort all directed edges on the small machines
in lexicographic order.
Let $M_u$ be the first machine that
stores some outgoing edges of vertex $u$.
Each small machine $M$
sends to its successor 
the largest vertex $u$
such that some edge $(u,v)$ is stored
on $M$.
If $M$ holds some edge $(u,v)$ 
such that the predecessor of $M$
holds no outgoing edges of $u$, then 
$M$ informs the large machine that $M$
is the first machine that holds an outgoing
edge of $u$ (i.e., $M_u = M$).
Let $M'_u$ be the last machine that holds
some outgoing edge of $u$.
The large machine now knows the range of machines $[M_u, M'_u]$ that hold edges adjacent to $u$, for each $u \in V$.
Let $T_u$ be a tree 
with branching factor $n^{\gamma}$,
rooted at $M_u$,
whose nodes are the machines
in the range $[M_u, M_u']$,
such that for any two machines $M < M'$,
the depth of $M$ (its distance from the root)
is no greater than the depth of $M'$.
There are at most $n^{1-\gamma}$ machines that store outgoing
edges of $u$,
so the depth of $T_u$ is $O((1-\gamma)/\gamma)$.
We fix $T_u$ in advance as a function of $u$, $M_u$, $M'_u$. The large machine informs $M_u$ about the range $[M_u, M'_u]$. Then, the machine $M_u$ knows all its children in $T_u$ and what is the size of all its children's sub-trees, so it sends to each of its children the size of its sub-tree. By induction, each machine $M$ in the range $[M_u,M'_u]$ that receives the size of its sub-tree from its parent in $T_u$, knows all its children in $T_u$ and what is the size of all its children's sub-trees. So, in $O((1-\gamma)/\gamma)$ rounds all machines in the range $[M_u, M'_u]$ can know the IDs of 
their parent and their children in $T_u$. 
Note that $M$ can only be an inner node in one tree $T_u$:
if $M$ is not a leaf in $T_u$, then there is some machine $M'$ in $T_u$
whose depth is greater than $M$'s depth, meaning that $M < M'$
and so $M$
is not the last machine that stores outgoing edges of $u$;
there can be at most one node $u$ such that $M$ stores
some but not all edges of $u$.
Thus, each machine participates in at most one tree.

For the proof of Claim~\ref{aggregation_claim}, we have each machine $M$ in the leaf of the tree $T_u$ computes $f(B_u)$ such that $B_u \subseteq A_u$ is the set of elements that $M$ holds and send the computed value to its parent. Each inner-node machine computes $f(Y)$ such that $Y$ is the set of computed values received from all its children. The machine $M_u$, which is the root of $T_u$ will have the final result $f(A_u)$ in $O((1-\gamma)/\gamma)$ rounds.

To prove Claim~\ref{spreading_claim},
we disseminate a value $x_u$ to the machines in $T_u$ by having each machine in $T_u$ sending $x_u$ to all its children, starting from the root $M_u$. All the machines in $T_u$ that hold an element from $A_u$ will know the value $x_u$ in $O((1-\gamma)/\gamma)$ rounds.
    
\end{proof}

Occasionally we will need to arrange the edges of a graph,
so that all edges adjacent to a given node are stored
on consecutive machines.
For this purpose we switch to a directed version of the graph,
where each edge $\sett{u,v}$ appears in both orientations,
$(u,v)$ and $(v,u)$. We then sort the edges by their source node:

\begin{claim}[Arranging Nodes]\label{consec_machines}
In $O(1)$ rounds, it is possible to arrange the edges of a directed graph $G = (V,E)$ on the small machines, such that 
\begin{enumerate}
    \item For each vertex $v \in V$, 
    the outgoing edges of $v$ are stored
    on consecutive small machines.
    \item For each $v \in V$,
    let $M_{\mathrm{first}}(v)$
    be the first small machine
    that holds some outgoing edge of $v$.
    Then the large machine knows $M_{\mathrm{first}}(v)$ for each $v \in v$,
    and each small machine $M$ knows whether $M = M_{\mathrm{first}}(v)$
    for each $v \in V$.
   \item For each $v \in V$, the large machine and the small machine $M_{\mathrm{first}}(v)$ know the out-degree $\degout(v)$ of $v$.
\end{enumerate}
\end{claim}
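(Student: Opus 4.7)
The plan is to reduce everything to a single sort followed by one round of local boundary exchange and one round of aggregation. First, invoke Claim~\ref{sorting_claim} to sort the directed edges (each $\{u,v\}$ appearing as $(u,v)$ and $(v,u)$) lexicographically by (source, destination). After $O(1)$ rounds, on any two small machines $M<M'$ every edge stored on $M$ has source no greater than any edge stored on $M'$, so the outgoing edges of each $v$ occupy a contiguous range $[M_{\mathrm{first}}(v), M_{\mathrm{last}}(v)]$ of consecutive small machines. This immediately establishes property~1.

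Next, to establish property~2, each small machine $M$ sends to its successor the largest source vertex it stores (equivalently, each $M$ receives the largest source stored on its predecessor). After this single communication step, $M$ can decide locally, for each source vertex $v$ appearing among its stored edges, whether $M=M_{\mathrm{first}}(v)$: namely, iff $v$ is the smallest source on $M$ and the predecessor's largest source is strictly smaller than $v$ (or $M$ is the first machine). Every small machine whose predecessor satisfies this boundary condition for some $v$ then sends the pair $(v,M)$ to the large machine. Since a machine stores at most $O(n^{\gamma}\polylog n)$ edges and hence can be $M_{\mathrm{first}}(v)$ for at most that many distinct $v$'s, its outgoing bandwidth bound is respected; and there are at most $n$ such pairs in total, which fits in the $O(n\polylog n)$ memory and incoming bandwidth of the large machine.

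For property~3, each small machine locally counts, for every source vertex $v$ among its stored edges, the number of outgoing edges of $v$ it holds. Then apply Claim~\ref{aggregation_claim} with $S=V$ partitioned by source vertex and $f=\mathrm{sum}$. Inspecting the proof of Claim~\ref{aggregation_claim}, the aggregation tree $T_v$ is rooted at $M_v=M_{\mathrm{first}}(v)$, so the final aggregated value $\degout(v)$ is delivered to $M_{\mathrm{first}}(v)$ itself. Finally, $M_{\mathrm{first}}(v)$ appends $\degout(v)$ to the tuple $(v,M_{\mathrm{first}}(v))$ it already sends to the large machine in the previous step, giving the large machine all of $M_{\mathrm{first}}(v)$ and $\degout(v)$ in $O(1)$ additional rounds.

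The main subtlety is the bandwidth accounting for the reporting phase: one has to argue that although \emph{per-vertex} information is being sent to a single machine, the \emph{total} volume across all small machines is $O(n\polylog n)$ and no single small machine needs to send more than its memory allows, which follows because being $M_{\mathrm{first}}(v)$ requires physically storing the first $(v,\cdot)$ edge, so the bookkeeping is at most proportional to the local edge count. Everything else is standard and the round count is clearly $O(1)$.
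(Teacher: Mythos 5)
Your proposal is correct and follows essentially the same route as the paper: sort the directed edges by source (Claim~\ref{sorting_claim}), use the aggregation machinery of Claim~\ref{aggregation_claim} (whose trees are rooted at the first machine holding edges of each source, so the degree lands at $M_{\mathrm{first}}(v)$), and have $M_{\mathrm{first}}(v)$ and $\degout(v)$ reported to the large machine; the paper differs only cosmetically, in that it reports the degrees to the large machine and lets it deduce $M_{\mathrm{first}}(v)$ centrally from the sorted layout, whereas you detect the boundaries distributively via the predecessor/successor exchange (a trick the paper itself uses inside the proof of Claims~\ref{aggregation_claim} and~\ref{spreading_claim}). One small slip: your stated criterion ``$M=M_{\mathrm{first}}(v)$ iff $v$ is the smallest source on $M$ and the predecessor's largest source is strictly smaller than $v$'' wrongly excludes sources $v$ that are not the smallest on $M$, for which $M$ is automatically first; the correct rule is simply that $v$ exceeds the predecessor's largest source (or $M$ is the first machine), which is what your argument clearly intends and does not affect correctness or the $O(1)$ round count.
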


\begin{proof}[Proof of Claim~\ref{consec_machines}.]
    We sort all the edges using Claim~\ref{sorting_claim}.
    Next, using Claim~\ref{aggregation_claim},
    we compute the degree $\degout(v)$ of each node $v \in V$,
    using the
    domain $S = V \times \N$,
    the partition $\sett{ S_u = \sett{ u} \times \N }_{u \in V}$,
    the
    aggregation function $f(\sett{(u, x_1),\ldots,(u,x_t)}) = (u, \sum_{i = 1}^t x_i)$,
    and initial multiset $A = \sett{ (u,1) : (u,v) \in E, u < v }$.
    The small machines then report the node degrees
    to the large machines.
    Using this information,
    the large machine is able to identify the machine $M_{\mathrm{first}}(v)$ for each $v \in V$,
    and inform $M_{\mathrm{first}}$ itself.
\end{proof}



\section{MST in $O(\log\log({m}/{n})$) Rounds}\label{MST_section}
\label{sec:MST}

In this section we show how to compute a minimum-weight spanning tree in
the \modelname{} model.
Although we are mostly interested in the case where the large machine
has near-linear memory,
we state a more general result, which can use a large machine
with memory $n^{1+f(n)}$ for any $f(n) = \Omega(1/\log{n})$:


\begin{theorem}\label{MST_with_super_lin}
Given a single machine with memory size $\Tilde{\Omega}(n^{1+f(n)})$ and 
$\Omega(m/n^{\gamma})$ machines with memory $\tilde{\Omega}(n^{\gamma})$
(where $\gamma \in (0,1), f(n) = \Omega(1/\log{n})$),
there is an algorithm
that
computes a minimum spanning tree with high probability
in  $O(\log(\log (m/n)/(f(n)\log{n})) )$ rounds.
\end{theorem}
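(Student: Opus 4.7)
The plan is to apply the Karger--Klein--Tarjan random sampling lemma iteratively, using the single large machine as a centralized MSF solver on progressively sparser subgraphs. In iteration $i$, with $m_i$ surviving candidate edges, each small machine samples each of its edges independently with probability $p_i$; the combined sample, of total size at most $M = n^{1+f(n)}$, is sent to the large machine. The large machine computes the MSF $F_i$ of the sample, and then broadcasts a compact per-vertex labeling of $F_i$ (for instance, the depth in $F_i$ together with an LCA-query label of $O(\polylog n)$ bits per vertex) back to the small machines via Claim~\ref{spreading_claim}. Each small machine then discards every edge it stores that is $F_i$-heavy, i.e., the strict maximum on the unique $F_i$-cycle it creates; by the KKT lemma, the expected number of surviving $F_i$-light edges is $O(n/p_i)$, and a standard Chernoff-style tail version of the lemma controls $m_{i+1}$ with high probability.

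The key is to choose $p_i$ so that $\log(m_i/n)$ is roughly halved per iteration, yielding a doubly-exponential decrease of $m_i$. In the ``balanced'' regime $m_i \le M^2/n = n^{1+2f(n)}$, I would set $p_i = \sqrt{n/m_i}$, so that both the sample size $m_i p_i$ and the post-pruning residual size $n/p_i$ equal $\sqrt{n\, m_i} \leq M$; this gives $m_{i+1} = \tilde O(\sqrt{n\, m_i})$ and halves $\log(m_i/n)$ in a single MPC iteration. Iterating the balanced step drives $m_i$ down to $O(M)$ within $O(\log(\log(m/n)/(f(n)\log n)))$ iterations, after which the entire remaining graph fits on the large machine and the MSF is computed there directly in $O(1)$ rounds.

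The anticipated main obstacle is the regime $m_i > n^{1+2f(n)}$, in which the balanced sample $\sqrt{n\, m_i}$ already overflows the large machine's memory. I plan to handle this via recursion: sample at the maximum feasible rate $p_i = M/m_i$, so the sample exactly fills the large machine and is processed in $O(1)$ MPC rounds, while the pruned residual graph (of size $n/p_i = m_i/n^{f(n)}$) is still too large to gather centrally but strictly smaller, and is fed into a recursive invocation of the algorithm. By arranging the recursion so that each level strictly halves the logarithm of the edge count relative to $n$, the recursion depth becomes $O(\log(\log(m/n)/(f(n)\log n)))$, matching the target bound. The delicate point of the proof will be to show that each recursive level contributes only an additive $O(1)$ MPC rounds, using the sorting, aggregation, and dissemination tools of Section~\ref{sec:prelims} together with Claim~\ref{consec_machines} to keep per-round communication within each machine's budget.
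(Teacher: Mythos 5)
Your scheme has a genuine gap in exactly the regime that matters. The balanced step $p_i=\sqrt{n/m_i}$ is only feasible when $m_i\le n^{1+2f(n)}$, and at that point a single halving of $\log(m_i/n)$ already brings you to $O(n^{1+f(n)})$, so the balanced phase is essentially trivial; the whole difficulty is getting from $m$ down to $n^{1+2f(n)}$. There your max-rate step $p_i=M/m_i$ leaves a residual of $n/p_i = n m_i/M = m_i/n^{f(n)}$, i.e.\ it decreases $\log(m_i/n)$ only \emph{additively} by $f(n)\log n$, not multiplicatively by half. Moreover no choice of $p_i$ can do better within your framework: the KKT bound is (number of vertices)$/p_i$, your vertex set is always all of $V$, and $p_i\le M/m_i$ is forced by memory, so each iteration shrinks the edge count by at most a factor $M/n=n^{f(n)}$. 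Hence you need $\Theta\bigl(\log(m/n)/(f(n)\log n)\bigr)$ iterations, exponentially more than the claimed $O\bigl(\log(\log(m/n)/(f(n)\log n))\bigr)$; concretely, for the near-linear case $f(n)=1/\log n$ and $m=n^{3/2}$ your algorithm takes $\Theta(\log n)$ rounds against the theorem's $O(\log\log n)$. The appeal to recursion does not repair this, because the recursive instance still has $n$ vertices, and the sentence ``arranging the recursion so that each level strictly halves the logarithm'' is precisely the step that the stated sampling rates cannot deliver.

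The missing ingredient is \emph{vertex contraction}, which is how the paper proceeds: it first runs $t=O(\log(\log_n(m/n)/f(n)))$ steps of doubly-exponential Bor{\r u}vka, where in step $i$ each surviving component's $n^{2^i f(n)}$ lightest outgoing edges are gathered on the large machine; these are certified MST edges, so the graph can be contracted, and after $t$ steps only $n/n^{2^t f(n)}$ vertices remain. Only then is a \emph{single} KKT sampling step applied, with $p=n^{-(2^t+1)f(n)}$, so that both the sample ($mp$) and the $F$-light residual ($n'/p$) fit in $\tilde O(n^{1+f(n)})$ memory. Since the KKT residual scales with the vertex count, shrinking vertices rather than only edges is what makes one sampling step suffice; your iteration never certifies any edge as belonging to the MST (the MSF of a random sample need not be contained in the true MST), so it has no mechanism for contraction and cannot mimic this. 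A minor additional point: a ``depth plus LCA label'' does not let a small machine decide $F$-heaviness — you need a labeling scheme answering maximum-edge-weight-on-tree-path queries, such as the flow labeling of~\cite{KKKP04} used in the paper — though this is easily fixed and is not the main issue.
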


For most of the section, we focus on the case where 
the large machine has near-linear memory,
i.e.,
$f(n) = 1/\log{n}$.
In this case the round complexity we get is
$O(\log\log(m/n))$.  Finally, we show how the proposed algorithm can be generalized for any $f(n) = \Omega(1/\log{n})$ to prove Theorem~\ref{MST_with_super_lin}.


Before presenting our MST algorithm, we review the KKT sampling lemma, on which our algorithm relies.

\paragraph*{The KKT sampling lemma~\cite{KKT}}
The sampling lemma of Karger, Klein and Tarjan~\cite{KKT} was developed in the context of sequential MST computation, and has also been used in distributed MST algorithms, e.g., ~\cite{random_sampling_cc, MSTinCC, connectivity_in_cc,JN18}. 
Informally, the lemma asserts that for any graph $G$ whose MST we wish to compute, if we sample a random subgraph $H$ where each edge of $G$ is included with probability $p$, then a minimum-weight spanning forest (MSF) $F$ of $H$ can be used to dismiss from consideration all but $O(n/p)$ edges of $G$.

More formally, let $G = (V, E)$, let $H = (V, E')$ be a subgraph of $G$,
and let $F = (V, E'')$ be a spanning forest of $H$.
An edge $e \in E$ is called \emph{$F$-heavy} if adding $e$ to the forest $F$
creates a cycle, and $e$ is the heaviest edge in that cycle.
If $e$ is not $F$-heavy then we say that $e$ is \emph{$F$-light}.
No $F$-heavy edge can be in the MST of $G$; thus, we can restrict our attention to the $F$-light edges of $G$. The KKT sampling lemma states that the MSF of a \emph{random} subgraph $H$ of $G$ suffices to significantly sparsify $G$.

%
%
%
\begin{lemma}[Random-Sampling Lemma~\cite{KKT}]
Let $H$ be a subgraph obtained from $G$ by including each edge of $G$ independently
with probability $p$, and let $F$ be the minimum spanning forest of $H$. The expected number of $F$-light edges in $G$ is at most $n/p$, where $n$ is the number of vertices of $G$.
\label{lemma:KKT}
\end{lemma}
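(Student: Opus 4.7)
The plan is to use the principle of deferred decisions, interleaving the coin flips that determine membership in $H$ with a Kruskal-style sweep through the edges of $G$ in order of increasing weight. First I would establish the structural characterization: if $e_1 < e_2 < \cdots < e_m$ is the sorted list of edges of $G$ and $F_{i-1}$ denotes the MSF of $H \cap \{e_1,\ldots,e_{i-1}\}$, then $e_i$ is $F$-light iff its endpoints lie in different components of $F_{i-1}$. This holds because Kruskal's algorithm implies $F \cap \{e_1,\ldots,e_{i-1}\} = F_{i-1}$, so $e_i$ closes a cycle in $F$ using only edges lighter than itself exactly when its endpoints are already connected in $F_{i-1}$.

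Next I would set up the sampling as a sequential process: walk through the edges in sorted order, and for each $e_i$ first determine whether it is light or heavy (using $F_{i-1}$), and only afterward flip the coin $B_i \sim \mathrm{Bernoulli}(p)$ that decides whether $e_i \in H$. The crucial point is that the light/heavy status of $e_i$ is a function only of $B_1,\ldots,B_{i-1}$, so $B_i$ remains an independent $\mathrm{Bernoulli}(p)$ even after conditioning on $e_i$ being light. Consequently the subsequence of coins $(B_{i_1}, B_{i_2}, \ldots)$ corresponding to the light edges $e_{i_1}, e_{i_2}, \ldots$ is i.i.d.\ $\mathrm{Bernoulli}(p)$. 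Moreover, each success in this subsequence adds its edge to $F$, since a light edge included in $H$ is never discarded by Kruskal.

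Finally I would close the argument with a negative-binomial stopping bound. Let $X$ be the number of $F$-light edges. Because $F$ is a forest on $n$ vertices, the number of successes among the first $X$ trials is at most $n-1$. Coupling $(B_{i_1}, B_{i_2}, \ldots)$ with an extended infinite i.i.d.\ $\mathrm{Bernoulli}(p)$ sequence $C_1, C_2, \ldots$, let $T = \min\{t : C_1 + \cdots + C_t = n\}$. Then $T$ is negative binomial with mean $n/p$, and since the first $X$ trials produce at most $n-1 < n$ successes we get $X < T$, so $\mathbb{E}[X] \le \mathbb{E}[T] = n/p$.

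The main obstacle is justifying the independence step cleanly: one must verify carefully that the light/heavy classification of $e_i$ truly depends only on strictly earlier coin flips, so that the light-edge coins genuinely form an i.i.d.\ sequence. Once this is pinned down, the Kruskal characterization and the negative-binomial tail argument combine into a short proof with no further technical work.
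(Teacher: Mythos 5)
Your proof is correct, and in fact the paper does not prove this lemma at all --- it is imported verbatim from Karger--Klein--Tarjan~\cite{KKT}. Your argument (the Kruskal-order deferred-decisions sweep, the observation that the light/heavy status of $e_i$ depends only on the earlier coin flips so the light-edge coins behave as i.i.d.\ Bernoulli$(p)$ trials with at most $n-1$ successes, and the negative-binomial bound $\mathbb{E}[X] \le n/p$) is essentially the original KKT proof, so there is nothing to reconcile with the paper beyond the citation.
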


\paragraph*{Overview of our MST algorithm}
Our MST algorithm consists of two parts:
in the first part, we apply $O(\log\log(m/n))$ steps of the doubly-exponential $\Boruvka$ technique from~\cite{lotker}, where in the $i$-th step, each remaining vertex selects its $2^{2^i}$ lightest outgoing edges and merges with the vertices on the other side of those edges.
		After $O(\log\log(m/n))$ such steps, there remain at most $n' = n^2 / m$ vertices.
		This number is small enough that we can afford to switch to a different approach and use the KKT lemma.
		
	In the second part of the algorithm, we sub-sample the remaining graph, taking each edge with independent probability $p = n/m$, to obtain a subgraph with $O(n)$ edges.
		By the KKT sampling lemma, the expected number of $F$-light edges is at most $n'/p = (n^2 / m) / (n / m) = n$.
		Thus,
		we can now complete the MST computation by having the small machines identify which of the edges
		they store are $F$-light and send those edges to the large machine,
		which then computes the MST.
		However, the small machines cannot \emph{store} $F$, as it is too large;
		each machine needs to identify its $F$-light edges without knowing all of $F$.
		To that end, the large machine computes a labeling of the vertices of $F$,
		such that for each edge $\sett{u,v} \in E$, we can determine
		whether $\sett{u,v}$ is $F$-light from the labels of $u$ and $v$.
		We use the flow labeling scheme from \cite{KKKP04} for this purpose.
		The large machine sends to each small machine the labels of all vertices
		for which the small machine is responsible, and this allows the small machine to
		identify the $F$-light edges it stores and send them to the large machine.

Next, we explain how each part of the algorithm is carried out in the \modelname{} model.

\paragraph*{Doubly-exponential $\Boruvka$}\label{lotker_impl}
As we said above, the first part of our algorithm implements the doubly-exponential $\Boruvka$ technique,
which was first implemented by~\cite{lotker} in the Congested Clique.
The algorithm of~\cite{lotker} gradually contracts vertices, so that 
after $i$ steps we have at most $n/2^{2^i}$ remaining vertices:
in the $i$-th step, each vertex finds its $2^{2^i}$ lightest outgoing edges,
and merges with the vertices at the other side of those edges.
This reduces the number of vertices to at most $(n/2^{2^{i}})/2^{2^i} = n/2^{2^{i+1}}$.
Note that only $O(n)$ edges in total are identified as lightest edges in each step,
since at the beginning of the $i$-th step 
there are at most $n / 2^{2^i}$ vertices, and each selects at most $2^{2^i}$ edges.
After $O(\log\log{n})$ steps, only one vertex remains, and the MST then consists of all the
edges along which contractions were performed.

To implement doubly-exponential $\Boruvka$ in the \modelname{} model,
we use the small machines to find the lightest edges, and the large machine to perform vertex contractions
and store the MST edges identified along the way.
Only $O(\log\log(m/n))$ steps will be performed, so at the end, the graph is contracted down to $n^2 / m$ vertices instead of a single vertex.

At the beginning of step $i$, we represent the current graph by $G_i = (V_i, E_i, w_i)$,
where $V_i$ is a set of at most $n / 2^{2^i}$ contracted vertices,
each represented by an $O(\log{n})$-bit identifier,
and the edges $E_i$ are stored on the small machines together with their weights $w_i$.
To eventually find the MST of the original graph,
the large machine maintains a mapping $c_i : V \rightarrow V_i$ that associates each original vertex $v \in V$
with the vertex into which $v$ has been merged.
In addition,
together with each edge $\sett{u,v} \in E_i$ we also store the original graph edge $\sett{u', v'} \in E$
that $\sett{u,v}$ ``represents'':
the edge $\sett{u', v'}$ such that 
node $u'$ was merged into $u$, node $v'$ was merged into $v$, and $w_i(\sett{u,v}) = w(\sett{u', v'})$.
The original graph edge $\sett{u', v'}$ is sent together with $\sett{u, v}$ whenever $\sett{u,v}$ is sent anywhere. 


To perform step $i$, we first arrange the edges $E_i$ on the small machines, so that
the outgoing edges of each vertex $v \in V_i$ are stored on consecutive machines, sorted by their weight.
This involves creating two copies of each edge, as we need to consider both endpoints:
let $\tilde{E}_i = \sett{ (u,v) : \sett{u,v} \in E_i}$ be the set of all outgoing edges in $G_i$.
Using Claim~\ref{sorting_claim}, we sort $\tilde{E}_i$ across the small machines in order of the first vertex,
and then by edge weight (that is, $(u,v) < (u', v')$ if $u < u'$ or if $u = u'$ and $w(\sett{u,v}) < w(\sett{u', v'})$).

The large machine now collects,
for each vertex $v \in V_i$,
the $\min(2^{2^i},\deg_{\mathrm{out}}(v))$ lightest outgoing edges
of $v$.
This is done as follows:
\begin{itemize}
	\item Using Claim~\ref{consec_machines},
		the large machine learns the out-degree of each node $v \in V_i$,
		which small machines store outgoing edges of $v$.
	\item The large machine locally computes, for each $v \in V_i$
	and small machine $M$,
	the number $k(v,M)$ of edges among $v$'s lightest 
		$\min(2^{2^i}, \deg_{\mathrm{out}}(v))$ outgoing edges that are stored on
		machine $M$.
		(Recall that the edges are sorted by source node and then by weight, and the large machine knows how many edges are stored on each small machine, 
		so it can compute which machines hold the first $\min(2^{2^i}, \deg_{\mathrm{out}}(v))$ outgoing edges of $v$ and how many of those edges are stored on each machine.)
			\item For each $v \in V_i$,
		the large machine sends a query of the form $(v, k(v, M))$ to each small machine $M$ such that 
		$k(v, M) > 0$.
		(A single small machine may receive multiple queries,
		but no more than the number of nodes whose edges it stores.)

		We have $\vert V_i \vert \leq n / 2^{2^i}$,
		and at most $2^{2^i}$ machines $M$ have $k(v, M) > 0$;
		thus, the total number of bits sent by the large machine is $O(n \log{n})$.
	\item Each small machine $M$ that received a query of the form $(v, k)$ (possibly multiple queries per small machine) sends the lightest $k$ outgoing edges of $v$ that it stores to the large machine. The total number of bits received by the large machine is at most $(n / 2^{2^i}) \cdot 2^{2^i} \cdot \log{n} = O(n \log{n})$.
\end{itemize}

Finally, the large machine contracts the graph along all the edges it collected:
it examines the edges by weight, starting from the lightest,
and for each edge examined, it merges the vertices at the endpoints
of the edge into one vertex, re-names the vertices of all remaining (heavier) edges accordingly,
and discards heavier edges that have become internal (i.e., both their endpoints
are merged into the same vertex).
An $O(\log{n})$-bit unique identifier is assigned to the new vertex.%
\footnote{Identifiers may be re-used in different steps of doubly-exponential $\Boruvka$.}
The large machine also stores the set of original graph edges attached to all lightest edges that it used in some merging step;
these edges will be part of the MST output at the end.
 %

Let $G_{i+1} = (V_{i+1}, E_{i+1})$ be the contracted graph computed
by the large machine,
and let $c'_i : V_i \rightarrow V_{i+1}$ map each vertex of $V_i$ to the vertex into which it was merged in $V_{i+1}$.
The large machine creates an updated contracted-vertex map, $c_{i+1} : V \rightarrow V_{i+1}$,
where $c_{i+1}(v) = c_i'(c_i(v))$.
Using Claim~\ref{spreading_claim}, the large machine disseminates the update map $c_i'$
to the small machines, so that in $O(1)$ rounds, every small machine that holds some edge adjacent
to a node $v \in V_i$ learns $c_i'(v)$.
Each small machine then updates the edges it stores: it discards any edge $\sett{u,v}$ that became internal
($c'_i(u) = c'_i(v)$), and re-names the vertices of the remaining edges according to $c'_i$ (preserving the weight of the edge and the original graph edge attached to it).
With the help of the large machine, the small machines also ensure that if parallel edges are created by the contraction,
then only the lightest edge between any two nodes is kept,
and the others are discarded. (This is easily done using a variant of Claim~\ref{aggregation_claim}.)

This first part of the algorithm ends after $\log\log(m/n)$ steps of doubly-exponential $\Boruvka$.
At that point, at most $n / 2^{2^{\log\log(m/n)}} = n^2 / m$ vertices remain.

\paragraph*{Sampling edges}
Let $G' = (V', E')$ be the contracted graph on $n^2 / m$ vertices computed in the first part.
In the second part of the algorithm, we randomly sample a subgraph $G_p = (V', E_p)$ of $G'$, 
where each edge is chosen independently with probability $p = n/m$.
The sampling is carried out by the small machines:
we arrange the edges $E$ on the small machines in some arbitrary order (without the duplication that was needed 
in the first part of the algorithm), and each machine selects each edge that it holds with independent probability $p$.
Since $p = n/m$,
with high probability we have $\vert E_p \vert = \tilde{O}( n )$, so the large machine can store the graph $G_p$.

\paragraph*{Identifying $F$-light edges}
After receiving the edges of $G_p$, the large machine computes an MSF $F$ of $G_p$.
Next we apply the flow labeling scheme from \cite{KKKP04},
which consists of a pair of algorithms:
a \emph{marker} algorithm $\mathcal{M}_{\mathrm{flow}}$ that takes $F$ and returns a vertex labeling $L : V' \rightarrow \sett{0,1}^{O(\log^2 n)}$,
and a \emph{decoder} algorithm $\mathcal{D}_{\mathrm{flow}}$ such that for any $u ,v \in V'$,
$\mathcal{D}_{\mathrm{flow}}( L(u), L(v) )$ returns the weight of the heaviest edge on the path between $u$ and $v$ in $F$.%
\footnote{In \cite{flow_labeling} the decoder algorithm returns the \emph{lightest} edge on the path,
but as pointed out in \cite{flow_labeling}, it is easy to modify the scheme so that the decoder
instead returns the \emph{heaviest} edge, and this is what we need here.}

To identify the $F$-light edges, the large machine applies $\mathcal{M}_{\mathrm{flow}}$ to $F$
to obtain the labeling $L : V' \rightarrow \sett{ 0,1}^{O(\log^2 n)}$.
By using Claim~\ref{spreading_claim}, the large machine then disseminates the labels $L(v)$ for each $v \in V'$, such that each small machine that holds an edge of $v$ knows the lable $L(v)$.
Finally, each small machine examines every edge $\sett{u,v}$ that it stores,
and discards $\sett{u,v}$ from memory iff $w(\sett{u,v}) > \mathcal{D}_{\mathrm{flow}}( L(u), L(v) )$.
The remaining edges are sent to the large machine, which computes and outputs an MST on all the edges it receives,
together with the edges of $G_p$.

By Lemma~\ref{lemma:KKT},
the expected number of $F$-light edges is at most $O( (n^2 / m) / (n / m) ) = O(n)$,
and by Markov,
for some constant $\alpha > 0$,
there are at most $\alpha \cdot n$ $F$-light edges with probability $1/2$.
We count the $F$-light edges by having each small machine send to the large machine the number of edges it selected, and if the total number of $F$-light edges is at most $\alpha n$, the small machines send them to the large machine; otherwise we abort.
To reach success probability $1 - 1/n^c$,
we repeat the entire process (sampling $G_p$, computing $F$, finding and counting the $F$-light edges) $O(\log{n})$ times, in parallel.

After receiving the $F$-light edges for some successful instance $G_p$, the large machine completes the MST computation by adding the $F$-light edges to $F$, and computing an MST $F'$ on the resulting graph.
Finally, the large machine outputs the MST of the original graph $G$: this consists of all the edges that were used to merge nodes during the doubly-exponential $\Boruvka$ phase, plus the edges of the MST $F'$ of $G_p$.

We conclude this section by analyzing the general case described in Theorem~\ref{MST_with_super_lin}, where the large machine may have superlinear memory:
\begin{proof}[Proof of Theorem~\ref{MST_with_super_lin}.]
Applying $t$ steps of doubly-exponential Bor{\r u}vka algorithm results in a contracted graph with $n' = \frac{n}{n^{2^tf(n)}}$ vertices (since in the $i$'th step, there is enough space in the large machine to hold up to $n^{2^if(n)}$ outgoing edges from each component). For the random-sampling step, we fix the sampling probability at $p=\frac{1}{n^{2^tf(n)+f(n)}}$, so that the number of light edges will be at most $O(n'/p) = O(n^{1+f(n)})$ which fits the memory of the large machine.
We get that for all $t \geq  \log(\frac{\log_n(m/n)}{f(n)})$, it holds that
\begin{equation*}
\frac{\log_n(m/n)}{f(n)} - 2 \leq \frac{\log_n(m/n)}{f(n)} \leq 2^t.
\end{equation*}
Simplifying yields
\begin{equation*}
           m \cdot p \leq n^{1+f(n)},
\end{equation*}
and thus, after $O(\log(\frac{\log_n(m/n)}{f(n)}))$ iterations of doubly-exponential Bor{\r u}vka, we can apply the final step of random-sampling (which takes constant number of rounds) to find the MST.
\end{proof}

\section{$O(k)$-Spanner of Size $O(n^{1+1/k})$ in $O(1)$ Rounds}
\label{sec:spanners}

Recall that \emph{$k$-spanner} of a weighted or unweighted graph $G = (V,E)$ is a subgraph $H$ of $G$,
such that for every $u,v \in V$ we have 
$\dist_H(u,v) \leq k \cdot \dist_G(u,v)$ (with the distances weighted if $G$ is weighted,
or unweighted if $G$ is unweighted).
The \emph{size} of the spanner $H$ is the number of edges in $H$. It is known that for any graph $G$, there exists a $(2k-1)$-spanner of size $O(n^{1+1/k})$ (see e.g. \cite{AIDDJ93}), and this bound is believed to be tight, assuming Erd{\H{o}}s's Girth Conjecture \cite{E64}.
In this section we show that for unweighted graphs, it is possible to compute a $(6k-1)$-spanner of size $O(n^{1+1/k})$ in $O(1)$ rounds in the \modelname{} model.
For weighted graphs, a known reduction to the unweighted case (see, e.g.,~\cite{spanners})
yields a $(12k-1)$-spanner of size $O(n^{1+1/k} \log{n})$ in $O(1)$ rounds.

\begin{theorem}\label{spanner_the}
For every $k \leq \log{n}$, there is an $O(1)$-round algorithm for the \modelname{} model that 
computes an $O(k)$-spanner of expected size $O(n^{1+1/k})$ 
in the unweighted case, or $O(n^{1+1/k}\log{n})$ in the weighted case.
\end{theorem}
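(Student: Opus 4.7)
The plan is to adapt the Baswana--Sen $(2k-1)$-spanner construction to the heterogeneous MPC model, using the large machine to collapse Baswana--Sen's $k$ sequential clustering phases into $O(1)$ MPC rounds. Recall that Baswana--Sen maintains progressively coarser clusterings $\mathcal{C}_0, \mathcal{C}_1, \ldots, \mathcal{C}_k$ of $V$, where $\mathcal{C}_i$ is obtained from $\mathcal{C}_{i-1}$ by sampling each cluster to survive with probability $n^{-1/k}$ and having each non-surviving cluster either merge into an adjacent surviving cluster (adding one representative edge) or contribute one representative edge per adjacent cluster before retiring. The standard analysis yields a $(2k-1)$-spanner of expected size $O(kn^{1+1/k})$, which for $k \le \log n$ is $O(n^{1+1/k} \log n)$ and gives stretch $O(k)$, matching the theorem.

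First, I would have every vertex independently toss coins to generate a nested chain $V = A_0 \supseteq A_1 \supseteq \cdots \supseteq A_k = \emptyset$, where each vertex of $A_{i-1}$ lies in $A_i$ with probability $n^{-1/k}$. The $O(k \log n)$-bit level vector of each vertex is sent to the large machine (total $O(n \log^2 n)$ bits) and redistributed via Claim~\ref{spreading_claim} so that every small machine knows the level vector of both endpoints of every edge it stores. Next, for each vertex $v$ and each level $i$, the small machines use Claim~\ref{aggregation_claim} to aggregate a compact neighborhood sketch of $v$: a designated neighbor of $v$ in $A_i$ (say the one of minimum id) together with the count of such neighbors. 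These aggregations all touch only one-hop neighborhoods, so they execute in parallel in $O(1)$ rounds, producing $O(n \log^2 n)$ bits total that fit on the large machine.

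The large machine then simulates all $k$ Baswana--Sen phases locally from these sketches, computing for each vertex its cluster assignment at every phase, and encodes the result in a labeling $L : V \to \{0,1\}^{O(\log^2 n)}$. Using Claim~\ref{spreading_claim} it disseminates $L$, so each small machine knows $L(u)$ and $L(v)$ for every edge $(u,v)$ it stores. Each small machine applies the Baswana--Sen edge-selection rule locally to its edges and outputs the selected ones; for weighted graphs, we prepend the standard reduction of~\cite{spanners} that loses an extra $O(\log n)$ factor in size and a factor of $2$ in stretch.

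The main obstacle will be justifying that the one-hop sketches suffice for the large machine to faithfully reconstruct all $k$ phases. Phase $i$ of Baswana--Sen is naively defined on the cluster graph at phase $i{-}1$, whose adjacencies depend on up to $i$-hop information in the original graph, so it is not immediate that one-hop sketches into the sampled sets $\{A_i\}$ are enough. I would resolve this inductively: each surviving cluster at phase $i$ is identified by a center in $A_i$, so the phase-$i$ assignment of $v$ can be derived from $v$'s sketch to $A_i$ together with the phase-$(i{-}1)$ assignments already computed on the large machine. If the straightforward induction falls short in some corner case (e.g., when cluster growth needs longer paths), a fallback is to have the small machines aggregate, for each level $i$, one representative edge per (vertex, phase-$(i{-}1)$-cluster) pair — a total of at most $O(n^{1+1/k})$ edges in expectation by the standard Baswana--Sen counting — and forward these to the large machine, which keeps the total communication within the large machine's memory while supplying exactly the adjacencies needed to simulate the next phase.
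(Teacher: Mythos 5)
There is a genuine gap in the central step of your plan: the one-hop sketches you propose do not determine the Baswana--Sen clustering beyond the first phase, and neither your inductive fix nor your fallback repairs this within the model's constraints. At phase $i\ge 2$, whether $v$ gets re-clustered depends on whether $v$ has a neighbor $u$ whose phase-$(i-1)$ \emph{cluster center} $c_{i-1}(u)$ lies in $A_i$; the vertex $u$ itself need not lie in $A_i$ (or even $A_1$), so ``the minimum-id neighbor of $v$ in $A_i$'' plus the counts is simply the wrong information, and the large machine cannot recover cluster adjacency from the phase-$(i-1)$ assignments alone because it does not hold the edges. Your fallback --- aggregating one representative edge per (vertex, phase-$(i-1)$-cluster) pair --- is not bounded by $O(n^{1+1/k})$: the Baswana--Sen counting only bounds the edges added for \emph{removed} vertices (via the survival sampling), not the total vertex-to-adjacent-cluster incidences, which at phase $1$ is literally all $m$ edges and at phase $2$ can be as large as $\Theta(n\cdot n^{1-1/k})$, overflowing the large machine. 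Moreover, that aggregation needs the small machines to already know the phase-$(i-1)$ assignments, which are only available after the large machine has processed phase $i-1$; done honestly this forces $\Theta(k)$ sequential dissemination/aggregation rounds, not $O(1)$.

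There is also a quantitative mismatch with the statement: exact Baswana--Sen gives expected size $O(kn^{1+1/k})$, i.e.\ $O(n^{1+1/k}\log n)$, whereas the theorem claims $O(n^{1+1/k})$ for unweighted graphs (the $\log n$ factor appears only in the weighted case, and comes from the weight-class reduction, not from the $k$ phases). The paper gets around both problems differently: it does \emph{not} simulate Baswana--Sen exactly. Instead, all $k$ subgraphs $G_1,\ldots,G_k$ are sampled up front with probability $p$ chosen so that $p\cdot|E|=\tilde O(n)$ and shipped to the large machine, which runs a \emph{modified} Baswana--Sen in which re-clustering only consults sampled neighbors; this makes all $k$ phases computable locally on the large machine with no further interaction, at the cost of removing more vertices, and one proves the resulting spanner has expected size $O(kn^{1+1/k}/p)$ (Lemma~\ref{lemma:modifiedBS}). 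The removal edges are then added by the small machines from the disseminated cluster labels. Finally, to kill both the factor $k$ and the factor $1/p$, the algorithm is not run on $G$ directly but on the $O(\log\Delta)$ clustering graphs of~\cite{spanners}, with a density-dependent sampling probability $p_i$ for each, so that the individual spanner sizes sum to $O(n^{1+1/k})$ and combine into a $(6k-1)$-spanner. If you want to salvage your write-up, you would need both of these ingredients (or substitutes for them): a way to make the phases computable without per-phase interaction, and a mechanism that removes the extra factor of $k$ in the unweighted size bound.
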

While the result is stated in terms of the expected size of the spanner, we can
of course get a spanner whose size is $O(n^{1+1/k})$ w.h.p.\ (or $O(n^{1+1/k}\log{n})$,
 in the weighted case) by repeating the algorithm $O(\log{n})$ times in parallel and taking the smallest spanner found.

By Theorem~\ref{spanner_the}, for $k=\log{n}$, we get a $O(\log{n})$-spanner of size $\Tilde{O}(n)$ in $O(1)$ rounds.
This spanner can fit in the memory of the large machine,
and it can be used to solve approximate all-pairs shortest paths (APSP):

\begin{corollary}
 There is an $O(1)$-round algorithm in \modelname{} that computes w.h.p.\ an $O(\log(n))$-multiplicative approximation to all-pairs shortest paths
 in weighted or unweighted graphs.
\end{corollary}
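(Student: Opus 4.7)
The plan is to use Theorem~\ref{spanner_the} with $k = \lceil \log n \rceil$ as a black box. First I would invoke the spanner algorithm with this choice of $k$, which in $O(1)$ rounds produces an $O(\log n)$-spanner $H$ of expected size $O(n^{1+1/k}) = O(n)$ in the unweighted case, or $O(n^{1+1/k} \log n) = O(n \log n)$ in the weighted case, using the fact that $n^{1/\log n} = 2 = O(1)$. Running $O(\log n)$ independent copies in parallel and selecting the output with the fewest edges yields a spanner of size $\tilde{O}(n)$ with high probability, exactly as noted in the remark following Theorem~\ref{spanner_the}.

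Next I would collect all edges of $H$ onto the large machine. Since $|E(H)| = \tilde{O}(n)$, the spanner fits within the large machine's $O(n \polylog n)$ memory budget, and the routing can be performed in a constant number of rounds (for example, by sorting the edges marked as belonging to $H$ via Claim~\ref{sorting_claim} and then forwarding them to the large machine, whose total incoming communication is bounded by its memory). Once the large machine holds the entire spanner, it locally runs any classical single-source shortest-path algorithm from every vertex of $V$, computing $\dist_H(u,v)$ for all pairs. No additional rounds of communication are required for this step.

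Correctness follows immediately from the spanner guarantee: for every pair $u,v \in V$ we have $\dist_G(u,v) \leq \dist_H(u,v) \leq O(\log n) \cdot \dist_G(u,v)$, so the distances computed on $H$ constitute an $O(\log n)$-multiplicative approximation to all-pairs shortest paths in $G$. The output convention is the one described in the preliminaries and the footnote accompanying the spanner row of Table~\ref{tab:results_summary}: since a table of $\Theta(n^2)$ distances is in general too large to materialize, we instead regard the spanner, stored on the large machine, as the output, and any pairwise distance query can be answered locally without further communication.

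The only mildly subtle point is ensuring that the size of the spanner stays within $\tilde{O}(n)$ with high probability rather than only in expectation, but this is already handled by the parallel-repetition trick above. Beyond this, the proof is a direct combination of the spanner construction from Theorem~\ref{spanner_the} with purely local post-processing on the large machine, so no genuinely new technical obstacle arises.
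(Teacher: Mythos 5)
Your proposal is correct and matches the paper's argument: the paper likewise takes $k=\log n$ in Theorem~\ref{spanner_the} to get an $O(\log n)$-spanner of size $\tilde{O}(n)$ in $O(1)$ rounds, stores it on the large machine, and answers approximate distance queries from the spanner. The extra details you supply (parallel repetition for a w.h.p.\ size bound and routing the spanner edges to the large machine) are consistent with what the paper already notes around the theorem statement.
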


\paragraph*{Overview of our spanner algorithm}
We focus on unweighted graphs, since the weighted case can be solved by reduction
 to the unweighted case, as we said above.
 
 The algorithm consists of two main ingredients. The first is the clustering-graph method from~\cite{spanners}, which was developed in the context of the Congested Clique and near-linear MPC:
 it computes a collection of
 $O(\log\Delta)$ \emph{clustering graphs}, $A_0,\ldots,A_{\log\Delta - 1}$, where the $i$-th graph $A_i$ has at most $O(n \cdot i / 2^i)$ vertices and at most $O(n \cdot 2^i)$ edges,
 such that we can combine spanners for each of the clustering graphs into a (slightly worse) spanner for the original graph.
 Although it was originally developed for machines with near-linear or larger memory, this method from~\cite{spanners} is readily adapted to the \modelname{} model.
 
 The next step is to compute a spanner for each of the clustering graphs.
For this purpose,
we present a modified version of the well-known Baswana-Sen spanner~\cite{BS}, where instead of computing the spanner over the entire graph, we first sub-sample the edges to obtain a subgraph that can fit on the large machine.
The large machine operates on the subgraph,
and computes \emph{some} of the information needed to find a spanner.
It then sends this information to the small machines,
which locally select which edges to add to the spanner.
Because the large machine has access only to a subset of edges,
this results in an ``overapproximation'', with the small machines adding ``too many'' edges compared to the true Baswana-Sen spanner, but not too many ---  
we prove that for a graph on $r$ vertices, if we sample each edge independently with probability $p$, the resulting spanner will be
of size $O(k r^{1+1/k} / p)$.
This is larger by a factor of $1/p$ compared to the Baswana-Sen spanner, whose size is $O(k r^{1+1/k})$.
By setting the sampling probability $p$ appropriately for each of the clustering graphs, we are able to obtain spanners that can be combined into one $O(k)$-spanner of size $O(n^{1+1/k})$ for the original graph, in $O(1)$ rounds.

\paragraph*{The clustering graphs}

Following~\cite{spanners}, for a graph $G = (V,E)$ with maximum degree $\Delta$,
we construct $O(\log\Delta)$ \emph{clustering graphs}, $A_0,\ldots,A_{\log\Delta - 1}$,
with the following properties:
\begin{itemize}
    \item The graph $A_0$ has $n$ vertices and $O(n)$ edges,
    and for  each $1\leq i \leq \log\Delta -1$, the graph $A_i$ has $O(n i / 2^i)$ vertices and $O(n \cdot 2^i)$ edges.
    \item There exists a transformation
    that takes the clustering graphs $A_0,...,A_{\log\Delta-1}$ and a $(2k-1)$-spanner $H_i$ for each such $A_i$, 
    and yields a  $(6k-1)$-spanner 
    $H$ for $G$, of size $O(n) + \sum_{i = 0}^{\log\Delta - 1} \vert H_i \vert$.%
    \footnote{Essentially, one takes the union of the spanners of the individual
    clustering graphs, but replacing each edge $e$ of a clustering graph $A_i$
    with an edge of the original graph $G$ to which $e$ is associated.}
\end{itemize}

In~\cite{spanners} it is shown that such clustering graphs can be constructed in the Congested Clique and in near-linear MPC.
With minor adaptations, similar graphs can be constructed in $O(1)$
rounds in the \modelname{}
model. We also show that if the $(2k-1)$-spanners $H_0,\ldots,H_{\log\Delta - 1}$ of $A_0,\ldots,A_{\log\Delta - 1}$ are initially stored on the small machines, we can compute
the transformation that yields a $(6k-1)$-spanner of $G$ in $O(1)$ rounds in the \modelname{} model.
More details can be found in the appendix.

\paragraph*{Computing spanners for the clustering graphs}
To compute a spanner for each clustering graph $A_i$, we present a modified version of the well-known Baswana-Sen algorithm~\cite{BS}, which allows us to construct
a spanner in $O(1)$ rounds for each clustering graph.
We begin by reviewing the Baswana-Sen spanner, and then explain our modified algorithm.
See Fig.~\ref{fig:spanner} for an illustration of the first step taken by the two versions of the algorithm, to demonstrate the difference between them.

\paragraph*{The Baswana-Sen spanner~\cite{BS}}
The Baswana-Sen spanner of a graph $G = (V,E)$ is computed by first choosing $k$ \emph{center sets},
$V = C_0 \supseteq C_1 \supseteq \ldots \supseteq C_{k-1} \supseteq C_k = \emptyset$,
where for each $i = 1,\ldots,k-1$,
the set $C_i$ is obtained from $C_{i-1}$ by selecting each vertex of $C_{i-1}$ with probability $1/n^{1/k}$.
Thus, in expectation, $C_i$ is of size $O(n^{1-i/k})$.
Each vertex $c \in C_i$ will serve as the center of a \emph{level-$i$ cluster}, $X_i(c) \subseteq V$,
such that in the subgraph induced by $X_i(c)$,
the eccentricity of $c$ is at most $i$.
The level-$i$ clusters are vertex-disjoint, but do not necessarily cover all vertices in $V$.

During the algorithm, as we go through the levels $i = 0,\ldots,k$,
vertices may be moved from cluster to cluster;
we let $c_i(v)$ denote the center of $v$'s level-$i$ cluster,
that is, the center $c \in C_i$ such that $v \in X_i(c)$,
or $c_i(v) = \bot$ if there is no such center.
Initially, $c_0(v) = v$ for all $v \in V$.
If $c_{i-1}(v) \neq \bot$ but $c_{i-1}(v) \not \in C_i$ (in other words, if $v$ belonged to the cluster of center $c$ after step $i-1$, but $c$ is no longer a center in $C_i$), then we say that $v$ \emph{becomes unclustered} in step $i$.
In this case Baswana-Sen attempts to re-cluster $v$ by adding it to some adjacent cluster
whose center $c$ is ``still alive'', $c \in C_i$.
If there is no such adjacent cluster, node $v$ remains unclustered, and we add to the spanner
one edge from $v$
to each adjacent level-$(i-1)$ cluster;
when this occurs, we say that vertex $v$ is \emph{removed} at step $i$.

The pseudocode for the Baswana-Sen algorithm is given in Algorithm~\ref{alg:BS_0} below.
If $c_i(v)$ is not explicitly set by the algorithm, then $c_i(v) = \bot$.


\begin{algorithm2e} [ht]
\caption{BaswanaSen($G$,$k$)}\label{alg:BS_0}
\DontPrintSemicolon
\SetKwInOut{Input}{Input}
\SetKwInOut{Output}{Output}

\Input{$G$ is a weighted, undirected graph, $k$ is an integer}
\Output{$H$ is a $(2k-1)$-spanner of $G$ of expected size $O(kn^{1+1/k})$}
\hrulealg

$H \gets \emptyset$, $C_0 \gets V$, $\forall v \in V: \  c_0(v) \gets v$\;
\For{$i = 1,\ldots,k$} {
	\If{$i = k$} {
		$C_i \gets \emptyset$\;
	}
	\Else {
		$C_i \gets$ sample each $ c \in C_{i-1}$ w.p. $1/n^{1/k}$\;
	}
	\ForEach{$v \in V$ with $c_{i-1}(v) \neq \bot$} {
		\If{$c_{i-1}(v) \in C_i$} {
			$c_i(v) \gets c_{i-1}(v)$\;
		}
		\ElseIf{ $\exists u \in N(v)$ with $c_{i-1}(u) \in C_i$} {
		\label{line:BS0_neighbor}
			\label{line:recluster0}
			$c_{i}(v) \gets c_{i-1}(u)$\;
			$H \gets H \cup \sett{u,v}$\;
			\label{line:recluster1}
		}
		\Else {
		\ForEach{$c \in C_{i-1}$ s.t. $\exists u \in N(v)$ s.t. $c_{i-1}(u) = c$} {  \label{line:uncluster0}
				$H \gets H \cup \sett{ \sett{ u,v } }$\;
		}
		\label{line:uncluster1}
		}
	}
}
\end{algorithm2e}


\paragraph*{Modified Baswana-Sen}
We describe a simple modification of Baswana-Sen that makes it suitable for implementation in the \modelname{} model,
at the cost of yielding a larger spanner;
in some sense, we ``over-approximate'' the set of edges that would be taken
for the Baswana-Sen spanner.
In each step $i$ of the algorithm, we sample a subgraph $G_i$ of the original graph $G$,
where every edge of $G$ is included with independent probability $p$.
Then, in line~\ref{line:BS0_neighbor} of Algorithm~\ref{alg:BS_0},
instead of examining all neighbors of $v$ in the original graph $G$,
we only consider neighbors of $v$ in the subgraph $G_i$;
that is, we replace ``$u \in N(v)$'' with ``$u \in N_i(v)$'', where $N_i(v)$
denotes the neighborhood of $v$ in $G_i$.
The rest of the algorithm remains the same. However, for didactic purposes,
we re-arrange the pseudocode, and move lines~\ref{line:uncluster0}--\ref{line:uncluster1}
of Algorithm~\ref{alg:BS_1}
to a separate loop at the end of the algorithm,
as these lines will be executed later in our implementation (and they will by the small machines
rather than the large machine).
The result is given in Algorithm~\ref{alg:BS_1} below, with the 
change from Algorithm~\ref{alg:BS_0} underlined for clarity.


\begin{algorithm2e} [ht]
\caption{ModifiedBaswanaSen($G = (V,E)$,$k$,$p$)}\label{alg:BS_1}
\DontPrintSemicolon
\SetKwInOut{Input}{Input}
\SetKwInOut{Output}{Output}

\Input{$G$ is a weighted, undirected graph, $k$ is an integer, $p$ is a probability}
\Output{$H$ is a $(2k-1)$-spanner of $G$ of expected size $O(kn^{1+1/k}/p)$}
\hrulealg

$H \gets \emptyset$, $C_0 \gets V$, $\forall v \in V: \  c_0(v) \gets v$\;
\label{line:BS_start}
\For{$i = 1,\ldots,k$} {
	$V_i \gets V$\;
	$E_i \gets$ sample each $e \in E$ w.p. $p$\;
	\label{line:BS_sample_G_p}
}
\For{$i= 1,\ldots,k$} {
	\If{$i = k$} {
		$C_i \gets \emptyset$\;
	} 
	\Else {
		$C_i \gets$ sample each $ c \in C_{i-1}$ w.p. $1/n^{1/k}$\;
	}
	\ForEach{$v \in V$ with $c_{i-1}(v) \neq \bot$} {
		\If{$c_{i-1}(v) \in C_i$} {
			$c_i(v) \gets c_{i-1}(v)$\;
	    }
		\ElseIf{\underline{$\exists u \in N_i(v)$} with $c_{i-1}(u) \in C_i$} {
			$c_{i}(v) \gets c_{i-1}(u)$\;
			$H \gets H \cup \sett{u,v}$\;
		}
	}
}
\label{line:BS_mid}
\ForEach{$v \in V$ and $i = 1,\ldots,k$ s.t. $c_{i-1}(v) \neq \bot \land c_i(v) = \bot$} {
\label{line:BS_mid1}
	\ForEach{$c \in C_{i-1}$ for which $\exists u \in N(v)$ s.t. $c_{i-1}(u) = c$} {
		$H \gets H \cup \sett{ \sett{ u,v } }$\;
	}
}
\label{line:BS_end}
\end{algorithm2e}

In our analysis of the modified Baswana-Sen algorithm we use the following simple claim:
\begin{claim}
	For every $x > 1$ and every $\ell \geq 1$
	we have $\ell (1 - 1/x)^{\ell} < x$.
	\label{claim:exp_bound}
\end{claim}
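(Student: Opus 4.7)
The plan is to bound $(1 - 1/x)^\ell$ by the usual exponential estimate and then reduce the two-variable inequality to a single-variable one. Specifically, since $x > 1$, we have $1 - 1/x \in (0,1)$ and so the standard bound $(1 - 1/x)^\ell \leq e^{-\ell/x}$ applies. Substituting this into the claim reduces the problem to showing
\[
\ell \cdot e^{-\ell/x} < x.
\]

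The key observation is that this inequality depends only on the ratio $t := \ell/x$. Writing $\ell = tx$, the inequality becomes $tx \cdot e^{-t} < x$, or equivalently $t e^{-t} < 1$, which is just $t < e^t$. This last inequality is an elementary fact about the exponential function: for $t \geq 0$ it follows from $e^t \geq 1 + t > t$, and for $t < 0$ it is trivial since $e^t > 0 > t$. In particular, for our case $t = \ell/x > 0$, we get the desired strict inequality $te^{-t} < 1$.

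Putting the two steps together gives $\ell(1 - 1/x)^\ell \leq \ell e^{-\ell/x} = x \cdot (te^{-t}) < x$, which is exactly the claim. There is no real obstacle here; the only care needed is to ensure the first inequality is non-strict (so we cannot derive a strict bound directly from it) and then to recover strictness from the strict bound $te^{-t} < 1$ in the second step.
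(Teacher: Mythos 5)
Your proof is correct and follows essentially the same route as the paper's: bound $(1-1/x)^{\ell}$ by $e^{-\ell/x}$ and then invoke the single-variable fact $t e^{-t} < 1$ (the paper phrases it as $z e^{-z} < 1$ for all $z$). Your handling of strictness — non-strict in the exponential bound, recovered from the strict inequality $te^{-t}<1$ — is fine and matches the substance of the paper's argument.
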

\begin{proof}[Proof.]
	Fix $x > 1$.
	Since $(1-1/x)^x < 1/e$ for all $x > 1$,
	we have
	\begin{equation*}
		\ell(1-1/x)^{\ell} 
		=
		\ell \left[ \left( 1-1/x \right)^x \right]^{\ell / x}
		<
		\ell e^{-\ell/x}.
	\end{equation*}
	We know that $z e^{-z} < 1$ for all $z \in \mathbb{R}$,
	and thus,
	$\ell e^{-\ell /  x} < x$,
	which completes the proof.
\end{proof}

\begin{lemma}
	The modified Baswana-Sen algorithm computes a $(2k-1)$-spanner of $G$,
	comprising $O(kn^{1+1/k}/p)$ edges in expectation.
	\label{lemma:modifiedBS}
\end{lemma}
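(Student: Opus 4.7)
The plan is to establish the stretch and size claims separately, arguing that the modification (restricting the re-clustering condition in the main loop to neighbors in the sampled subgraph $G_i$ rather than in all of $G$) preserves the $(2k-1)$ stretch while paying for the subsampling only through a multiplicative $1/p$ factor in the expected size.

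For the \textbf{stretch}, I would use the standard Baswana--Sen invariant: whenever $c_i(v)\neq\bot$, there is a path of length at most $i$ in $H$ from $v$ to $c_i(v)$. This invariant is preserved unchanged by the modification, since whenever $v$ is re-clustered to some $u\in N_i(v)$ the algorithm still adds $\{u,v\}$ to $H$, and by induction $u$ already has a length-$(i-1)$ path in $H$ to the surviving center. Given any edge $\{u,v\}\in E$, I would let $i$ be the smallest step at which one of $u,v$ (say $v$) transitions from clustered to unclustered. Since $u$ is still clustered at step $i-1$, the cluster $X_{i-1}(c_{i-1}(u))$ contains a neighbor of $v$ in $G$ (namely $u$), so the final loop adds an edge $\{u',v\}$ with $c_{i-1}(u')=c_{i-1}(u)$; concatenating the length-$(i-1)$ paths from $u'$ and from $u$ to their common center with the edge $\{u',v\}$ yields $\dist_H(u,v)\leq 2(i-1)+1\leq 2k-1$. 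The crucial point that keeps this argument identical to the classical one is that the final loop still quantifies over $N(v)$ rather than $N_i(v)$, so edge subsampling plays no role in the stretch analysis.

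For the \textbf{size}, I would split $H$ into edges added during re-clustering (at most one per vertex per level, contributing $\leq kn$ in total) and edges added in the final loop. For the latter, I would condition on $v$ being adjacent in $G$ to $\ell$ distinct level-$(i-1)$ clusters with centers $c_1,\ldots,c_\ell$; the contribution of $v$ at step $i$ is zero unless $v$ becomes unclustered, and at most $\ell$ otherwise. For each neighboring cluster $c_j$, the event that $c_j$ saves $v$ at step $i$ requires $c_j\in C_i$ (probability $q:=1/n^{1/k}$) \emph{and}, when $c_j\neq c_{i-1}(v)$, at least one edge from $v$ into $X_{i-1}(c_j)$ being sampled into $G_i$ (probability at least $p$). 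The center-sampling coins and the edge-sampling coins decouple across distinct $j$, so these events are independent and each has probability at least $pq$; hence the probability that $v$ becomes unclustered at step $i$ is at most $(1-pq)^{\ell}$, harmlessly absorbing the corner case where $c_{i-1}(v)$ itself is among the $c_j$ (for which the per-cluster probability is only larger). The expected contribution of $v$ at step $i$ is therefore at most $\ell(1-pq)^{\ell}$, which by Claim~\ref{claim:exp_bound} with $x=1/(pq)=n^{1/k}/p$ is bounded by $n^{1/k}/p$. Summing over all $n$ vertices and $k$ levels yields the advertised $O(k n^{1+1/k}/p)$ bound, which dominates the $kn$ re-clustering edges.

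The main technical obstacle I anticipate is the careful independence bookkeeping inside the unclustering probability: one has to verify that the per-cluster ``save'' events are genuinely independent across distinct neighboring clusters (which reduces to the fact that the relevant coin flips lie on disjoint edges and on distinct centers in $C_{i-1}$), and to check that the case where $c_{i-1}(v)$ coincides with one of the $c_j$ only strengthens the bound. Once this is in place, everything reduces to a direct invocation of Claim~\ref{claim:exp_bound}, which is exactly the classical Baswana--Sen tail estimate rescaled by $p$.
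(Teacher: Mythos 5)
Your stretch argument and the bulk of your size argument follow the paper's proof closely: the same eccentricity invariant for cluster centers, the same decomposition of $H$ into re-clustering edges (at most $kn$) and removal edges, the same conditioning on the previous steps' randomness, and the same invocation of Claim~\ref{claim:exp_bound} with $x = n^{1/k}/p$. Your treatment of the corner case where $v$'s own cluster is among the adjacent clusters (include it, observe its ``save'' probability is only larger and its coin is independent of the others) is a legitimate variant of the paper's choice to exclude that cluster from the list; both give $\Pr[v \text{ is removed}] \leq (1-p/n^{1/k})^{\ell}$. (One wording slip: what you bound is the probability that $v$ is \emph{removed} -- unclustered and not re-clustered -- not the probability that it ``becomes unclustered,'' which depends only on whether $c_{i-1}(v) \in C_i$.)

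The genuine gap is the last level. Your per-cluster save probability uses $\Pr[c_j \in C_i] = 1/n^{1/k}$, but the algorithm sets $C_k = \emptyset$ deterministically, so at $i = k$ this probability is $0$ and no pair can survive: every vertex that is still clustered at level $k-1$ is removed with probability $1$, and the bound $\ell(1-p/n^{1/k})^{\ell} \leq n^{1/k}/p$ simply does not apply there. Consequently ``summing over all $n$ vertices and $k$ levels'' is not justified as written. The paper closes this case with a separate estimate: at level $k$ each vertex contributes one edge per adjacent level-$(k-1)$ cluster, and the expected number of level-$(k-1)$ clusters is at most $n / n^{(k-1)/k} = n^{1/k}$, so level $k$ adds at most $n^{1+1/k}$ edges in expectation, which is absorbed into the $O(kn^{1+1/k}/p)$ bound. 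Adding this observation (or restricting your probabilistic argument to $i \leq k-1$ and handling $i=k$ by the expected size of $C_{k-1}$) completes your proof; everything else is sound.
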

\begin{proof}[Proof.]
	Let $H$ be the set of edges output by our modified Baswana-Sen.
	An easy induction on the number of steps shows that:
	\begin{itemize}
		\item The eccentricity of each center $c \in C_i$ in the subgraph
			induced by $X_i(c)$ on $H$ is at most $i$: this is because
			in each step $i$, if we set $c_i(v) = c$,
			then either $c_{i-1}(v) = c$ or $v$ has some neighbor $u$
			with $c_{i-1}(u) = c$, and in this case we add $\sett{u,v}$ to $H$.
		\item For each $v \in V$, if $c_i(v) \neq \bot$, then $c_j(v) \neq \bot$ for all $j < i$.
	\end{itemize}

	We first bound the stretch of $H$.
	Let $\sett{u,v} \in E$,
	let $i$ be the step where $v$ is removed,
	and assume w.l.o.g.\ that $u$ is removed
	no earlier than $v$,
	that is, $c_{i-1}(u) \neq \bot$.
	Let $c = c_{i-1}(u)$.
	In step $i$, after we set $c_i(v) = \bot$,
	we add to $H$ one edge connecting $v$ to each adjacent level-$(i-1)$ cluster;
	in particular, since $X_{i-1}(c)$ is adjacent to $v$ (as $u \in N(v)$),
	there is some $w \in X_{i-1}(c) \cap N(v)$ such that $\sett{v,w} \in H$.

	Since the eccentricity of $c$ in the subgraph induced by
	$X_{i-1}(c)$ on $H$ is at most $i-1 \leq k - 1$,
	there is a path of length at most $k - 1$ between $u$ and $c$ in $H$,
	and a path of length at most $k - 1$ between $c$ and $w$ in $H$.
	Together with the edge $\sett{u,w}$, we obtain a path of length at most $2k-1$ between $u$ and $v$ in $H$.

	Now let us bound the expected size of $H$.
	Edges are added to $H$ at the following points in the algorithm:
	\begin{itemize}
		\item Upon re-clustering a node:
		if $v \in V$ becomes unclustered at level $i$,
			but has a neighbor in $G_i$ that is still clustered,
			we add one edge to $H$.
			The total number of all such edges
			added in all $k$ steps is at most $k \cdot n$
			(strictly speaking, $(k - 1) \cdot n$,
			since in step $k$ we have $C_k = \emptyset$
			and all nodes are unclustered).
\item 	Upon removing a node:
if node $v \in V$ 
is removed
in step $i$, 
then we add one edge from $v$ to each level-$(i-1)$ cluster $X$ adjacent to $v$ (i.e.,
each cluster $X$ such that $X \cap N(v) \neq \emptyset$).
We show that the expected number of edges of this type that are added for a given node $v$ in a given step $i$ is bounded by $O(n^{1/k}/p)$; multiplying this by $n$ and by $k$ gives us a bound on the total number of edges added in all steps by all nodes, in expectation.

Fix a node $v$ and a step $i$, and fix all the random choices for the preceding steps.
Let $c_1,\ldots,c_{\ell}$ be the centers of all level-$(i-1)$-clusters adjacent to $v$,
not including $v$'s level-$(i-1)$ cluster (if any).
Let us say that the pair $(v, c_j)$ \emph{survives} in step $i$
if $c_j \in C_i$, and in addition, $G_i$ contains
at least one edge $\sett{v, u}$ such that $u \in X_i(c_i)$.

We consider three possible events:
\begin{itemize}
\item Node $v$ does not become unclustered in step $i$. In this case node $v$ is not removed, and we add no edges.
\item Node $v$ becomes unclustered in step $i$, but for some $1 \leq j \leq \ell$,
the pair $(v, c_j)$ survives.
In this case, we re-cluster $v$ instead of removing it, and no edges of the type we are currently bounding are added to the spanner.
\item Node $v$ becomes unclustered in step $i$, and for all $1 \leq j \leq \ell$
the pair $(v,c_j)$ does not survive.
In this case node $v$ is removed in step $i$, causing
us to add $\ell$ edges to the spanner.
\end{itemize}
For each $j$, the probability that $(v,c_j)$ survives is at least $1 - p/n^{1/k}$:
the probability that $c_j \in C_i$ is $1/n^{1/k}$, and the probability that some edge connecting $v$ to $X_i(c_j)$ is sampled into $G_i$ is at least $p$ (since we know that $X_i(c_j)$ is adjacent to $v$, i.e., $G$ contains at least one edge from $v$ to $X_i(c_j)$).
This holds independently for each $j$, and it is also independent of whether $v$
becomes unclustered or not, because we did not count $v$'s own level-$(i-1)$ cluster (if any). Thus, the probability that none of the pairs $(v,c_1),\ldots,(v,c_{\ell})$ survives is at most $(1-p/n^{1/k})^{\ell}$, even conditioned on $v$ becoming unclustered.
All together, we see that the expected number of edges added if $v$
is removed in step $i$ is
\begin{align*}
    \ell \cdot \left( 1 - \frac{p}{n^{1/k}}\right)^{\ell} \leq \frac{n^{1/k}}{p},
\end{align*}
where the last inequality uses the fact that
$\ell(1 - 1/x)^\ell < x$ for every $x > 1$ and $\ell \geq 1$.

		\item At level $k$, all nodes become unclustered,
			and we add an edge from each vertex $v$
			to each level-$(k-1)$ cluster adjacent to $v$.
			The expected number of level-$(k-1)$ clusters
			is at most $n / n^{(1/k)\cdot(k-1)} = n^{1/k}$,
			so the expected total number of edges added
			is at most $n \cdot n^{1/k} = n^{1+1/k}$.
	\end{itemize}
	All together, the expected size of $H$ is $O(k n^{1+1/k} / p)$.
\end{proof}

\begin{figure}
\centering
\begin{subfigure}[b]{1\linewidth}
\centering
   \includegraphics[scale=0.6]{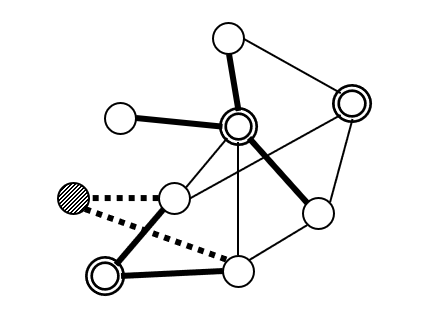}
   \caption{The first level of Baswana-Sen in the original graph.}
   \label{fig:Ng1} 
\end{subfigure}

\begin{subfigure}[b]{1\linewidth}
\centering
   \includegraphics[scale=0.6]{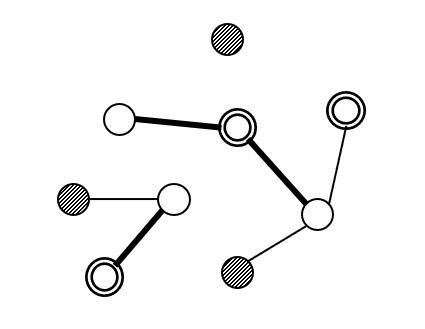}
   \caption{The first level of our modified Baswana-Sen, as executed by the large machine on the sub-sampled graph. The large machine is able to re-cluster fewer nodes than true Baswana-Sen (Fig.~\ref{fig:Ng1} above), and as a result, it adds fewer edges and removes more nodes.}
   \label{fig:Ng2}
\end{subfigure}

\begin{subfigure}[b]{1\linewidth}
\centering
   \includegraphics[scale=0.6]{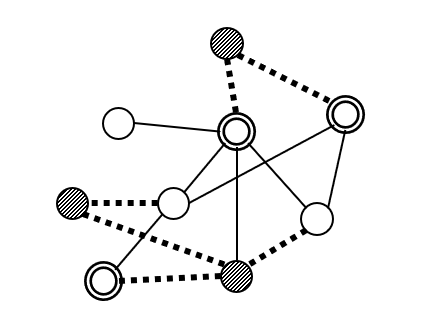}
   \caption{The first level of our modified Baswana-Sen, as executed on the small machines after learning the cluster to which each node is assigned by the large machine, and which nodes were removed. Since the large machine removes more nodes than true Baswana-Sen, the small machines add more edges than true Baswana-Sen.}
   \label{fig:Ng3}
\end{subfigure}

\caption[]{The first level of Baswana-Sen, the original version and our modified version. Initially, each node forms its own cluster ($C_0 = V$). Nodes selected for $C_1$ are indicated with double circles. Nodes that are no longer clustered ($C_0 \setminus C_1$) must either be re-clustered by adding them to some neighboring cluster (these nodes are shown as white circles), or removed (these nodes are shown as a striped circle). Bold lines indicate edges added to the spanner when a node is re-clustered, and dashed lines indicate edges added to the spanner when a node is removed.}
\label{fig:spanner}
\end{figure}
\paragraph*{Implementing modified Baswana-Sen in the \modelname{} model}
Assume that the input to the modified Baswana-Sen algorithm is a graph $G = (V,E)$ stored on the small machines,
an integer $1 \leq k \leq \log{n}$,
and a sampling probability $p \in (0,1)$
such that $p \cdot \vert E \vert = \tilde{O}(n)$.
To implement modified Baswana-Sen,
we have the small machines sample the subgraphs $G_1,\ldots,G_k$ locally
(sampling each edge indenendently with probability $p$),
and send them to the large machine.
This suffices for the large machine to carry out lines~\ref{line:BS_start}--\ref{line:BS_mid}
of Algorithm~\ref{alg:BS_1},
where we compute the clusters,
as this part of the algorithm depends only on $G_1,\ldots,G_k$
and does not require knowledge of the full graph $G$.

After executing lines~\ref{line:BS_start}--\ref{line:BS_mid},
the large machine sends to the small machines the clusters that it computed during the run,
specifying for each node $v$ and level $i$ the center $c_i(v)$ of the cluster to which $v$ belonged
in level $i$:
each small machine that stores some edge adjacent to $v$ receives the centers $(c_0(v),\ldots,c_k(v))$. This is done in constant number of rounds using Claim~\ref{spreading_claim}.

Finally, the small machines carry out lines~\ref{line:BS_mid1}--\ref{line:BS_end} of Algorithm~\ref{alg:BS_1}:
we add, for each vertex $v$ that was removed in some step $i$,
one edge connecting $v$ to each level-$(i-1)$-cluster that $v$ is adjacent to.
To find these edges, 
each small machine $M$ creates a set of \emph{candidates},
\begin{align*}
	A_M &= \sett{ (v,c,u) \in V^3 : \exists i \ c_{i-1}(v) \neq \bot, c_{i}(v) = \bot, 
	\right.
	\\
	&
	\qquad \left. c_{i-1}(u) = c \neq \bot, \text{ and } \sett{u,v} \in E }
\end{align*}
The candidate $(v,c,u)$ represents the fact node $v$ was removed
at level $i$, and at that point it was adjacent to a vertex $u$ belonging to a level-$(i-1)$ cluster
centered at $c$.
The edge $\sett{u,v}$ is thus a candidate for being added to the spanner, but we must ensure that we take
only one edge per level-$(i-1)$ cluster that $v$ is adjacent to:
using Claim~\ref{aggregation_claim},
in $O(1)$ rounds
we select for each vertex $v$ and center $c$
the smallest vertex $u$ such that $(v,c,u)$ is a candidate,
and add edge $\sett{u,v}$ to the spanner.


\paragraph*{Putting everything together}
The variant of Baswana-Sen that we introduced above is particularly suitable for computing
spanners over the clustering graphs in the \modelname{} model. Recall that each of the clustering graphs $A_i$ contains at most $O(n \cdot i / 2^i)$ vertices and $O(n 2^i)$ edges, except for $A_0$, which has $O(n)$ vertices and $O(n)$ edges. 
Accordingly, we set our sampling probability to
$p_i = \min\sett{k^2i^{1+1/k}/2^i, 1}$.
Now there are three cases:
\begin{itemize}
    \item For $i = 0$, we can afford to send the entire edge set $E_0$ to the large machine, since $\vert E_0\vert = O(n)$.
    \item For $i > 0$ such that $p_i = 1$,
    we can also afford to send the entire edge set $E_i$ to the large machine:
    if $p_i = 1$,
    then $2^i \leq k^2 i^{1+1/k} \leq k^2 i^2$,
    and thus $\vert E_i\vert = O(n 2^i) = O(k^2 i^2 n)$.
    Since $k = O(\log{n})$ and $i \leq \log\Delta$,
    this means that $\vert E_i\vert = O(n \polylog{n})$.
    Thus, the large machine
    can compute an optimal spanner of size $O( (ni / 2^i)^{1+1/k} )$.
    
    \item For $i > 0$ such that $p_i = k^2i^{1+1/k}/2^i < 1$,
    we use modified Baswana-Sen.
    In this case 
    we have
    $p_i \cdot \vert E_i\vert = O( k^2 i^{1+1/k} / 2^i \cdot n 2^i) = O( k^2 i^{1+1/k} n ) = \tilde{O}(n)$,
    so indeed after sub-sampling the graph with probability $p_i$,
    the resulting edge set can fit on the large machine.
    By Lemma~\ref{lemma:modifiedBS},
    we obtain a $(2k-1)$-spanner $H_i$
    of expected size $O(k(ni/2^i)^{1+1/k} \cdot 2^i/k^2 i^{1+1/k}) = O(n^{1+1/k}/k2^{i/k})$.
\end{itemize}
Taking the maximum
of the options above, 
for each $i \geq 1$,
the spanner $H_i$ constructed for the $i$-th clustering graph has expected size
\begin{equation*}
    h_i = O\left( n^{1+1/k}
    \left(
    \left( \frac{i}{2^i}\right)^{1+1/k}
    +
    \frac{1}{k 2^{i/k}}
    \right)
    \right)
    .
\end{equation*}

Using the fact that
$\sum_{i = 0}^{\infty} (i^2 / 2^i) = 6$
and 
$\sum_{i = 0}^{\infty}
    1/(k 2^{i/k})
    <
    1/(1 - 1/2^{1/k})
    < 1$, we have that 
    $\sum_{i = 0}^{\infty} h_i = O(n^{1+1/k})$
    in expectation.
    
    Thus, the $(6k-1)$-spanner $H$ of $G$ 
    obtained by combining the spanners of the individual clustering graphs
    has an expected size of $O(n^{1+1/k})$.

\section{Maximal Matching in $O(\sqrt{\log(m/n)}\log\log(m/n))$ Rounds}
In this section we give an algorithm for maximal matching in the \modelname{} model, and show:

\begin{theorem}\label{MM_theorem}
There is an $O(\sqrt{\log{d}}\log\log{d})$-round algorithm in the \modelname{} model that computes a maximal matching with high probability in graphs of average degree $d$.
\end{theorem}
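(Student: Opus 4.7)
The plan is to reduce, in $O(1)$ rounds, to maximal matching on a residual graph of maximum degree $\tilde O(d)$, then invoke the sublinear-MPC algorithm of~\cite{Sparsifying} on the residual, and finally use the large machine to absorb the ``deterministic cleanup'' part of that algorithm so that the $\sqrt{\log\log n}$ additive term disappears.

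For the reduction, have the small machines sample each edge of $G$ independently with probability $p = \Theta((n\log n)/m)$, producing a subgraph $G_s$ with $\tilde O(n)$ edges w.h.p. These edges are shipped to the large machine (they fit by the choice of $p$), which locally computes a maximal matching $M_1$ of $G_s$ and then disseminates each vertex's matched/unmatched status via Claim~\ref{spreading_claim}. Let $U$ be the set of unmatched vertices and $G' := G[U]$ the induced residual graph. A standard Luby/Israeli--Itai sampling argument (via the principle of deferred decisions) shows that w.h.p.\ every $v \in U$ has at most $O((\log n)/p) = \tilde O(m/n) = \tilde O(d)$ neighbors in $G'$: otherwise, with high probability one of $v$'s incident edges would have been sampled into $G_s$ and picked up by any maximal matching of $G_s$, contradicting $v \in U$.

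Next, run the sublinear-MPC maximal matching algorithm of~\cite{Sparsifying} on $G'$. On a graph of maximum degree $\Delta'$ it runs in $O(\sqrt{\log \Delta'}\log\log \Delta' + \sqrt{\log\log n})$ rounds; with $\Delta' = \tilde O(d)$ this is $O(\sqrt{\log d}\log\log d + \sqrt{\log\log n})$. To eliminate the $\sqrt{\log\log n}$ summand we exploit the large machine: in the algorithm of~\cite{Sparsifying} the $\sqrt{\log\log n}$ is incurred only by the deterministic cleanup that finishes off the ``bad'' components left over after the randomized shattering phase, and those components have polylogarithmic size each and are at most $n$ in number, so their total edge count is $\tilde O(n)$ w.h.p. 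We therefore stop the sublinear-MPC algorithm right after shattering, aggregate the residual edges onto the large machine using Claim~\ref{aggregation_claim}, verify the size bound, and running $O(\log n)$ independent copies in parallel to ensure that at least one succeeds w.h.p.; the large machine then completes the matching in $O(1)$ rounds. The union of $M_1$ and the matching produced for $G'$ is a maximal matching of $G$, and the total round count is $O(\sqrt{\log d}\log\log d)$.

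The main obstacle is the analysis rather than the scheduling: we must carefully couple the random sample $G_s$ with the production of $M_1$ to certify the $\tilde O(d)$ max-degree bound on $G'$, and we must verify that the shattering guarantees of~\cite{Sparsifying} (originally used to bound the post-shattering $\mathsf{LOCAL}$-round count) in fact yield an $\tilde O(n)$-edge residual that fits on the large machine. Both parts follow the same templates used in the MST and spanner algorithms earlier in the paper: a sampling bound combined with aggregation/dissemination via Claims~\ref{aggregation_claim} and~\ref{spreading_claim} and parallel repetition to boost the success probability.
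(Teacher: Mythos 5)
The reduction to a residual graph of maximum degree $\tilde O(d)$ is the step that fails. After sampling $G_s$ with probability $p$ and taking an arbitrary maximal matching $M_1$ of $G_s$, maximality only tells you that every edge with \emph{both} endpoints unmatched is unsampled; it does not tell you that an unmatched vertex has few unmatched neighbors. Your justification (``one of $v$'s incident edges would have been sampled into $G_s$ and picked up by any maximal matching'') conflates ``sampled'' with ``forces $v$ to be matched'': a sampled edge $\{v,u\}$ only forces \emph{one} of $v,u$ to be matched, so all of $v$'s sampled edges can be covered from the other side while $v$ stays unmatched together with a huge number of neighbors whose edges to $v$ simply were not sampled. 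Concretely, take $m=\Theta(n^{3/2})$ (so $d=\Theta(\sqrt n)$), a vertex $v$ adjacent to $\Theta(n)$ vertices $u_1,\dots,u_D$, each $u_i$ also adjacent to $\Theta(\log n/p)$ vertices of a hub set $W$ of size $\Theta(\sqrt n\,\mathrm{polylog}\,n)$ whose members form a clique. With high probability there is a maximal matching of the sampled graph that matches every $u_i$ with a sampled edge to $v$ into $W$, matches the remaining hubs among themselves, and leaves $v$ and $\Omega(n)$ of its neighbors unmatched --- so the residual degree of $v$ is $\Omega(n)\gg\tilde O(d)$, and since your large machine computes an arbitrary maximal matching, correctness would have to hold for this choice. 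What maximality does give you is only the global filtering bound (Lemma~3.1 of~\cite{filtering}, which the paper uses in Theorem~\ref{MM_with_super_lin}): $O(n/p)$ residual edges, which with your $p=\Theta(n\log n/m)$ is $O(m/\log n)$ --- far too weak to bound $\Delta'$ or to fit on the large machine. (A secondary issue: even granting $\Delta'=\tilde O(d)$, the extra $\mathrm{polylog}\,n$ factor reintroduces additive $\mathrm{poly}(\log\log n)$ terms when $d$ is small, whereas the theorem's bound must depend on $d$ alone.)

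Your second ingredient --- truncating the algorithm of~\cite{Sparsifying} before its cleanup and absorbing the $\le n$ leftover edges on the large machine --- is essentially the paper's Lemma~\ref{lem:sub_MPC_MM} and Corollary~\ref{cor:sub_MPC_MM}, and that part is fine. But the paper achieves the degree reduction differently, and you need some replacement for your sampling step: it splits $V$ at the threshold $d^2$, runs the Corollary only on the subgraph induced by the low-degree vertices (whose maximum degree is genuinely $d^2$, giving $O(\sqrt{\log d}\log\log d)$ rounds with no slack), and then exploits that there are at most $n/d$ high-degree vertices: the large machine pulls $2d\log n$ \emph{random incident edges of each high-degree vertex} (only $O(n\log n)$ edges in total) and extends the matching greedily, processing these vertices sequentially. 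The key probabilistic argument (Lemma~\ref{lem:mm_phase_3_edge_count}) is a per-vertex deferred-decisions bound showing that when a high-degree vertex is processed, if it and a $1/d$-fraction of its neighbors are still unmatched, then one of its $2d\log n$ random edges hits an unmatched neighbor w.h.p.; this is what yields the $\le 2n$ residual that the large machine can finish off. Note that this argument works precisely because the random edges are drawn per high-degree vertex and the greedy decisions are interleaved with them --- it is not a consequence of maximality on a single global sample, which is where your proposal breaks.
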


In our algorithm, we rely on the following claim from \cite{Sparsifying}:
\begin{lemma}[Section 3.6 and Proof of Theorem 3.2 in \cite{Sparsifying}]\label{lem:sub_MPC_MM}
In sublinear MPC, it is possible in $O(\sqrt{\log\Delta}\log\log{\Delta})$ rounds to find a matching $M$, such that the number of edges
with both endpoints unmatched in $M$ is at most $m/\Delta^{10} \leq n$.
\end{lemma}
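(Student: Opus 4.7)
The plan is to use sampling and the large machine to reduce the maximum degree of the ``uncovered'' subgraph to $\poly(d)$, and then invoke Lemma~\ref{lem:sub_MPC_MM} on that subgraph. First, using aggregation I would compute $d = m/n$ in $O(1)$ rounds. If $d = O(\polylog n)$ then $m = O(n\polylog n)$, so the entire edge set can be routed to the large machine using Claims~\ref{sorting_claim}--\ref{aggregation_claim}, and a maximal matching is computed there in $O(1)$ rounds. Otherwise, each small machine samples each of its edges independently with probability $p = \Theta(\log n / d)$; by Chernoff the sampled subgraph $G_s$ has $\tilde O(n)$ edges w.h.p., so it can be shipped to the large machine in $O(1)$ rounds.

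The large machine computes a maximal matching $M_s$ of $G_s$ locally, and then, using Claim~\ref{spreading_claim}, disseminates to each small machine the matching status of every vertex incident to an edge it stores (matched or not, and the partner if matched). Every small machine then locally identifies its \emph{residual edges}: edges $\{u,v\}$ with both endpoints unmatched in $M_s$. Let $G_r$ denote the subgraph formed by all residual edges, and let $U$ be the corresponding unmatched vertex set.

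The central technical claim is that w.h.p.\ $G_r$ has maximum degree $O(d)$. The key observation is that whenever $u$ and $v$ are both unmatched by $M_s$ and $\{u,v\} \in E$, then by maximality of $M_s$ the edge $\{u,v\}$ could not have been sampled into $G_s$. Since edges are sampled independently with probability $p$, the probability that a specified set of $k$ neighbors of $v$ are all in $U$ is bounded by $(1-p)^k \leq e^{-\Theta(k\log n/d)}$. A union bound over vertices and $k$-element neighbor sets, combined with a separate Chernoff-type argument showing that vertices of very large original degree are matched by $M_s$ with high probability (because their expected sampled degree is $\omega(\log n)$, so they cannot contribute to $G_r$ at all), then yields the $O(d)$ bound. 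Given the bound, I apply Lemma~\ref{lem:sub_MPC_MM} to $G_r$ with $\Delta' = O(d)$, which takes $O(\sqrt{\log d}\log\log d)$ rounds and leaves at most $|E(G_r)|/{\Delta'}^{10} \leq m/d^{10} \leq n$ edges with both endpoints unmatched. These $\leq n$ leftover edges are forwarded to the large machine, which computes a maximal matching among them in $O(1)$ rounds and outputs $M_s \cup M_r$ together with that final completion; this union is a maximal matching of $G$ because any edge of $G$ is either already covered by $M_s$, covered by $M_r$ (since it lies in $G_r$ and $M_r$ is a near-maximal matching in $G_r$), or present among the $\leq n$ leftover edges handled by the large machine.

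The main obstacle will be making the degree-bound argument rigorous, because the events ``$v \in U$'' and ``$u \in U$'' are correlated through the sample and the matching $M_s$. The enabling insight is the one-sided implication ``both endpoints in $U$ forces the edge out of $G_s$'', which isolates the randomness into independent Bernoulli coin flips and lets a union bound kick in for vertices of moderate degree, while a separate concentration argument handles vertices of very large degree by showing they are matched by $M_s$ w.h.p. Once the $O(d)$ degree bound on $G_r$ is in hand, all remaining steps are straightforward applications of the HMPC primitives collected in Section~\ref{sec:prelims}.
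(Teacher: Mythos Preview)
Your proposal does not prove the stated lemma. Lemma~\ref{lem:sub_MPC_MM} is a result about the \emph{sublinear} MPC model (no large machine), quoted from~\cite{Sparsifying}; the paper does not prove it but uses it as a black box. Your write-up instead \emph{invokes} Lemma~\ref{lem:sub_MPC_MM} in order to establish Theorem~\ref{MM_theorem}, the \modelname{} maximal-matching result. So at the level of the stated target, the proposal is off by one statement.

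Assuming you meant Theorem~\ref{MM_theorem}, your route is genuinely different from the paper's. The paper first restricts to the subgraph induced by vertices of degree at most $d^2$, applies the sublinear-MPC procedure (Corollary~\ref{cor:sub_MPC_MM}) there, and only afterwards handles the at most $n/d$ high-degree vertices by sampling $O(d\log n)$ incident edges per such vertex and greedily extending the matching on the large machine (Lemma~\ref{lem:mm_phase_3_edge_count}). You reverse the order: sample globally with probability $p=\Theta(\log n/d)$, have the large machine compute a maximal matching $M_s$ of the sample, and then apply Lemma~\ref{lem:sub_MPC_MM} to the residual graph $G_r$.

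The gap in your argument is the maximum-degree bound on $G_r$. Your inequality $\Pr[\text{$k$ fixed neighbors of $v$ all lie in $U$ and $v\in U$}]\le (1-p)^k$ is correct (the event forces the $k$ corresponding edges to be unsampled), but the union bound over the $\binom{\deg(v)}{k}$ choices of the $k$-set blows up when $\deg(v)\gg d$: with $k=\Theta(d)$ and $\deg(v)=n^{\Omega(1)}\cdot d$, the factor $\binom{\deg(v)}{k}$ overwhelms $(1-p)^k=n^{-\Theta(1)}$. You try to patch this by asserting that vertices of very large original degree are matched by $M_s$ w.h.p.\ ``because their expected sampled degree is $\omega(\log n)$''. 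That implication is false in general: a vertex can have arbitrarily large sampled degree and still be left unmatched by a maximal matching, as long as all of its sampled neighbors are matched elsewhere. Nothing in your sketch rules this out, and even the weaker statement you actually need---that an unmatched high-degree vertex has only $O(d)$ unmatched neighbors---is not a Chernoff consequence of the edge sampling alone. The paper avoids this issue by isolating the high-degree vertices up front and treating them with a direct per-vertex sampling-and-greedy step, rather than trying to control them through a single global edge sample.
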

After finding the matching guaranteed by Lemma~\ref{lem:sub_MPC_MM},
the number of edges that have both endpoints unmatched
is small enough to store all of them on the large machine and compute a maximal matching over them.
Thus
we obtain the following immediate corollary in \modelname{}:
\begin{corollary}\label{cor:sub_MPC_MM}
In \modelname{}, it is possible to find a maximal matching in $O(\sqrt{\log\Delta}\log\log{\Delta})$ rounds.
\end{corollary}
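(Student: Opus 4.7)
The plan is to obtain the corollary as a direct reduction to Lemma~\ref{lem:sub_MPC_MM}: run the sublinear MPC algorithm from that lemma on the small machines to produce a matching $M$ that leaves at most $m/\Delta^{10} \leq n$ edges with both endpoints unmatched, then hand those leftover edges to the large machine and finish the matching there.

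First, I would invoke Lemma~\ref{lem:sub_MPC_MM} directly. The small machines in the \modelname{} model each have $\tilde{O}(n^\gamma)$ memory, i.e., they satisfy the standard sublinear MPC constraint, and collectively hold the entire edge set; hence any sublinear-MPC algorithm runs on them unchanged in $O(\sqrt{\log\Delta}\log\log\Delta)$ rounds, producing a matching $M$ stored distributively on the small machines, together with, for each vertex, a bit indicating whether it is matched in $M$. This bit can be made known to every small machine storing an edge incident to that vertex using Claim~\ref{aggregation_claim} followed by Claim~\ref{spreading_claim} (aggregating the ``matched'' flag per vertex, then disseminating it), in $O(1)$ additional rounds.

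Next, each small machine locally marks every edge $\{u,v\}$ it stores such that both $u$ and $v$ are unmatched in $M$; call the set of such edges $E'$. By Lemma~\ref{lem:sub_MPC_MM}, $|E'| \leq m/\Delta^{10} \leq n$, so $E'$ fits in the large machine's $O(n\,\polylog n)$ memory. The small machines then ship $E'$ to the large machine in a single round (the total communication is $O(n\log n)$ bits, well within the large machine's receive budget). The large machine computes a maximal matching $M'$ of the subgraph $(V, E')$ by any local algorithm (e.g., greedy), since it now holds the entire relevant edge set.

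Finally, I would output $M \cup M'$ as the matching. Correctness: $M$ is a matching, and $M'$ is a matching among vertices left unmatched by $M$, so $M \cup M'$ is a matching. It is maximal because any edge $\{u,v\}$ of $G$ with both endpoints unmatched in $M \cup M'$ would have to lie in $E'$ (both endpoints unmatched in $M$) and would contradict the maximality of $M'$ on $(V,E')$. The round complexity is $O(\sqrt{\log\Delta}\log\log\Delta)$ for the sublinear MPC invocation plus $O(1)$ rounds for aggregation, dissemination, shipping $E'$ to the large machine, and local computation, giving the claimed $O(\sqrt{\log\Delta}\log\log\Delta)$ bound. There is no substantive obstacle here; the only thing to verify carefully is that the $\leq n$ leftover edge bound from Lemma~\ref{lem:sub_MPC_MM} genuinely fits on the large machine with room for $O(\log n)$-bit identifiers, which it does since $n \cdot O(\log n) = O(n\,\polylog n)$.
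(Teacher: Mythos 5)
Your proposal is correct and matches the paper's own (very brief) argument: invoke Lemma~\ref{lem:sub_MPC_MM} on the small machines, observe that the at most $m/\Delta^{10}\leq n$ edges with both endpoints unmatched fit on the large machine, ship them there, and complete the matching greedily. The extra implementation details you supply (aggregating and disseminating the matched/unmatched flags via Claims~\ref{aggregation_claim} and~\ref{spreading_claim}) are consistent with how the paper handles the analogous step elsewhere.
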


This is not yet our final result, since we want an algorithm
whose running time depends on the average degree $d$ rather than the maximum degree $\Delta$. Thus, we proceed in three phases:

\paragraph*{Phase 1:}
We divide the vertices into \emph{low-degree vertices},
$V_{\ell} = \sett{\ v : \deg(v) \leq d^2}$,
and \emph{high-degree vertices}, $V_h = V \setminus V_{\ell}$.
There are at most $n/d$ high-degree vertices:
by Markov, if we choose a random vertex $v$,
we have $\Pr[\deg(v) \geq d^2] \leq 1/d$,
and therefore $\vert V_h\vert \leq n/d$.

Using only the small machines, we apply the procedure of Corollary~\ref{cor:sub_MPC_MM}
to the graph induced by the low-degree vertices $V_{\ell}$, to obtain a maximal matching $M_1$. As the maximum degree in this graph is $d^2$, this takes $O(\sqrt{\log{d}}\log\log{d})$ rounds. The small machines send $M_1$ to the large machine. 

\paragraph*{Phase 2:}
For each vertex $v \in V_h$,
the large machine collects $2d \log{n}$ random incident edges of $v$
(including both neighbors in $V_{\ell}$
and in $V_h$), or all edges incident to a $v$ if $\deg(v) < 2 d\log{n}$. Denote this set by $E'(v)$. To do this, each small machine assigns a uniformly random rank $r(e) \in_U \{1,\dots,n^5\}$ to each edge $e$ that it stores.
With probability $1 - 1/n$ each edge is assigned a unique rank,
and we then select the $2d\log{n}$
lowest-ranked edges incident to each vertex (or all edges incident to the vertex)
and send
those edges to the large machine.
This is done in a manner 
similar to Section~\ref{MST_section} (in the MST algorithm, where the large machine
collects a fixed number of the lightest outgoing edges of each vertex).
Note that since $\vert V_h\vert \leq n/d$, the total number of edges
collected by the large machine is $O(n/d \cdot d\log{n}) = O(n\log{n})$.

The large machine greedily constructs a matching $M_2$, as follows:
initially, $M_2 = \emptyset$.
The large machine orders the vertices $V_h$ arbitrarily, and goes over the vertices in this order.
For each vertex $u$ examined, if $u$ is still unmatched in $M_1 \cup M_2$
and there is an edge $\sett{u,v} \in E'(u)$ such that $v$ is also unmatched in $M_1 \cup M_2$, then the large machine chooses one such neighbor $v$,
and adds $\sett{u,v}$ to $M_2$.

After constructing $M_2$, the large machine informs
the small machines about vertices that are matched in $M_1 \cup M_2$:
for each vertex $v$ and small machine $M$
that stores some edge adjacent to $v$,
the large machine informs $M$
whether or not 
$v$ is matched in $M_1 \cup M_2$.
This is done using Claim~\ref{spreading_claim}.


\paragraph*{Phase 3:}  
Let $E''$ be the set of edges that have both endpoints unmatched in $M_1 \cup M_2$.
Each small machine sends the large machine a count of the number
of edges in $E''$ that it stores, and the large machine sums these counts to compute $\vert E''\vert$. If $\vert E''\vert > 2n$,
the algorithm fails. Otherwise, the small machines send $E''$ to the large machine, and the large machine computes a maximal matching $M_3$ over $E''$.

The final output of the algorithm is $M_1 \cup M_2 \cup M_3$.
This is indeed a maximal matching of $G$, as in Phase~3 the large machine receives all edges whose two endpoints remain unmatched in $M_1 \cup M_2$, and $M_3$ completes $M_1 \cup M_2$ into a maximal matching.

To prove that the algorithm succeeds w.h.p., we need only to show that it does not fail in Phase 3, i.e., that the number of edges the small machines
need to send to the large machine is not too large:
\begin{lemma}
\label{lem:mm_phase_3_edge_count}
After Phase~2, w.h.p.\ the total number of edges whose two endpoints are unmatched in $M_1 \cup M_2$  is at most $2n$.
\end{lemma}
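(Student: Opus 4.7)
The plan is to combine a structural observation about which edges can appear in $E''$ with a union-bound argument controlling the sampling in Phase~2. First I would observe that every edge in $E''$ has at least one endpoint in $V_h$, because $M_1$ is a maximal matching of the subgraph induced by $V_\ell$, so any edge with both endpoints in $V_\ell$ is already adjacent to an edge of $M_1$. Letting $U_h \subseteq V_h$ denote the set of high-degree vertices still unmatched after Phase~2, the bound $\vert V_h\vert \le n/d$ from Phase~1 gives $\vert U_h\vert \le n/d$, so it suffices to control, for each $u \in U_h$, the number of unmatched neighbors of $u$.

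The main step is to show that w.h.p.\ every $u \in U_h$ has at most $d_u/d$ unmatched neighbors at the end of Phase~2, where $d_u = \deg(u)$. Fix $u \in V_h$, consider the moment it is examined in Phase~2, and let $k_u$ be the number of neighbors of $u$ that are unmatched at that instant. If $d_u < 2d\log n$ then $E'(u)$ contains all of $u$'s incident edges, so $u \in U_h$ forces $k_u = 0$. Otherwise $E'(u)$ is a uniform sample of $2d\log n$ of the $d_u$ incident edges of $u$, and for $u$ to remain unmatched every sampled edge must avoid the $k_u$ unmatched-neighbor edges, which happens with conditional probability at most $(1 - k_u/d_u)^{2d\log n}$. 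Plugging in $k_u \ge d_u/d$ gives $(1-1/d)^{2d\log n} \le e^{-2\log n} \le 1/n^2$, and a union bound over the $\le n/d$ vertices of $V_h$ leaves total failure probability at most $1/(dn)$. Since the set of unmatched neighbors of $u$ can only shrink as Phase~2 progresses, the end-of-phase count is also at most $d_u/d$.

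Summing yields $\vert E''\vert \le \sum_{u \in U_h} d_u/d \le (1/d)\sum_{v \in V} d_v = 2m/d = n \le 2n$, as required. The main subtlety I expect to deal with is that $E'(u)$ is not literally independent of $u$'s examination-time state: for each earlier-examined $v \in V_h$ adjacent to $u$, knowing whether $\sett{u,v}$ appeared in $E'(v)$ partially reveals $r(\sett{u,v})$. I would address this via a deferred-decisions argument, revealing the ranks of edges incident to $u$ only at the moment $u$ is examined; since at most $\vert V_h\vert \le n/d$ of $u$'s edges have any rank information leaked and $u$ has at least $d^2$ incident edges in total, the conditional probability that $E'(u)$ misses a fixed bad set $X$ of size $\ge d_u/d$ is still $O(1/n^2)$, so the union-bound calculation above goes through up to constants.
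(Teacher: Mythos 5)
Your core argument is the same as the paper's: reduce to edges with an endpoint in $V_h$ (since $M_1$ is maximal on the graph induced by $V_\ell$), define for each high-degree vertex the bad event that it stays unmatched even though a $1/d$-fraction of its neighbors is unmatched when it is examined, bound this event by $(1-1/d)^{2d\log n}\le 1/n^2$, take a union bound, and then bound $\vert E''\vert$ by $\sum_{u\in V_h}\deg(u)/d\le 2n$. The handling of the case $\deg(u)<2d\log n$, the monotonicity remark (unmatched sets only shrink), and the final summation all match the paper's proof.

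The one place you diverge is your final paragraph, and there the quantitative justification does not hold. You argue that the correlation between $E'(u)$ and the samples of earlier-examined vertices is harmless because at most $\vert V_h\vert\le n/d$ of $u$'s incident edges have leaked rank information while $u$ has at least $d^2$ incident edges; but $n/d$ need not be small compared with $d^2$ (or with $\deg(u)$) --- for instance when $d\ll n^{1/3}$, all of $u$'s neighbors may be earlier-examined high-degree vertices, so every incident rank is conditioned. Worse, the conditioning relevant to the bad event is exactly the unfavorable one: an earlier-examined neighbor $v$ that is still unmatched at $u$'s turn necessarily satisfies $\sett{u,v}\notin E'(v)$, which biases $r(\sett{u,v})$ upward, i.e., against $u$ sampling precisely the edges you need it to hit; the ``fresh'' edges you count are not in the bad set and do not help. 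The paper avoids this by treating the randomness used to select $E'(u_i)$ as independent across vertices (fix $R_1,\ldots,R_{i-1}$, then analyze $E'(u_i)$), which is exact if each endpoint draws its own ranks for its incident edges (e.g., independent ranks on the two directed copies of each edge) rather than one shared rank per undirected edge. Replacing your deferred-decisions count with that per-endpoint sampling convention (or with a careful sequential/domination argument over the at-least-$\deg(u)/d$ unmatched neighbors) closes the gap; the rest of your proof then goes through as in the paper.
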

\begin{proof}[Proof.]
First,
since $M_1$ is a maximal matching over the subgraph induced by $V_{\ell}$,
no edge that has both endpoints in $V_{\ell}$ still has both endpoints unmatched in $M_1 \cup M_2$.
Thus, 
we consider only edges that have at least one endpoint in $V_h$.
Let $u_1,\ldots,u_k$ be the vertices of $V_h$, in the order they are processed by the large machine when constructing $M_2$,
and let $R_i$ be the randomness used to select the $\Theta(d\log{n})$ random neighbors of $u_i$ sent to the large machine.

If $\deg(u_i) < 2d\log{n}$, then no incident edges of $u_i$ have both endpoints
unmatched after $u_i$ is processed: in this case the large machine is sent all incident edges of $u_i$,
and when it processes $u_i$, if there is some edge $\sett{u_i, v} \in E$ such that $u_i, v$ are both unmatched,
the large machine adds some incident edge of $u_i$ to the matching.
Following this step, all incident edges of $u_i$ have at least one endpoint matched.

Suppose $\deg(u_i) \geq 2d\log{n}$,
and let $B_i$ be the event that
at the point where $u_i$ is processed by the large machine,
$u_i$ and
at least a $1/d$-fraction of $u_i$'s neighbors are still unmatched,
but the large machine is not able to find an edge $\sett{u_i, v} \in E'(u_i)$
such that $v$ is unmatched.
If $B_i$ does \emph{not} occur, then either at most $\deg(u_i) / d$ incident edges of $u_i$ have both endpoints unmatched after $u_i$ is processed, or the large machine has found an edge $\sett{u_i,v} \in E'(u_i)$ such that $v$ is unmatched.

Fix the randomness $R_1,\ldots,R_{i-1}$ of the nodes preceding $u_i$,
and assume that given this fixing, node $u_i$ is unmatched after processing and has at least $\deg(u_i) / d$ unmatched neighbors when it is processed (otherwise, the event $B_i$ does not occur). The randomness used to select 
$E'(u_i)$ is independent of $R_1,\ldots,R_{i-1}$.
Each time we select the next random neighbor of $u_i$ for $E'(u_i)$,
as long as we have not yet selected an unmatched neighbor,
the probability that we select an unmatched neighbor is at least
$(\deg(u_i) / d) / \deg(u_i) = 1/d$.
Thus, the probability that we fail to select any of $u_i$'s unmatched neighbors is at most $(1 - 1/d)^{2d\log{n}} \leq e^{-2\log{n}} < 1/n^2$.
Since this holds for every fixing of $R_1,\ldots,R_{i-1}$,
we have $\Pr[B_i] \leq 1/n^2$,
and by union bound,
the probability that none of the events $B_i$
for nodes $u_i$ with $\deg(u_i) \geq 2d\log{n}$
occur is at least $1 - 1/n$.
In this case,
the total number of edges that still have both endpoints unmatched
after the large machine finishes processing the nodes of $V_h$
is at most
$\sum_{i = 1}^k \deg(u_i)/d \leq 2nd/d = 2n$.
\end{proof}

This concludes our algorithm. For the more general setting where the large machine has memory of size $\tilde{O}(n^{1+f(n)})$,
the MapReduce algorithm of~\cite{filtering} can be adapted to find a maximal matching as well:

\begin{theorem}\label{MM_with_super_lin}
Given a single machine with memory $\tilde{\Omega}(n^{1+f(n)})$, 
and $\Omega(m / n^{\gamma})$
machines with memory $\tilde{\Omega}(n^{\gamma})$,
there is an $O(1/f(n))$-round algorithm in the \modelname{} model that computes a maximal matching with high probability.
\end{theorem}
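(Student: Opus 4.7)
My plan is to adapt the filtering framework of Lattanzi, Moseley, Suri, and Vassilvitskii~\cite{filtering} to the \modelname{} model, playing the role of their central output machine with our single large machine. Their maximal matching algorithm proceeds in iterations: let $M$ be the matching built so far and $E'$ the current set of edges between vertices still unmatched in $M$. In each iteration, each edge in $E'$ is independently sampled with probability $p = \min\{1,\, c \cdot n^{1+f(n)}/|E'|\}$ for a large enough constant $c$; a single machine computes a maximal matching of the sample and adds those edges to $M$; and every edge incident to the newly matched vertices is deleted from $E'$. The main technical lemma of~\cite{filtering} shows that if $|E'| > n^{1+f(n)}$, then after one iteration $|E'|$ shrinks by a factor of $n^{f(n)}$ with high probability, so after $O(1/f(n))$ iterations the entire residual edge set fits on the large machine and a maximal matching can be completed there.

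To realize a single iteration in the \modelname{} model, the small machines locally flip coins to sample each live edge they store with probability $p$. By a Chernoff bound the total sample size is $\tilde{O}(n^{1+f(n)})$ w.h.p., which both fits in the memory of the large machine and can be routed to it in $O(1)$ rounds using the sorting primitive of Claim~\ref{sorting_claim}. The large machine greedily extends $M$ by a maximal matching of the sample, and then uses Claim~\ref{spreading_claim} to broadcast the identities of all newly matched vertices to every small machine that stores an incident edge. Each small machine discards any edge incident to a matched vertex, so the surviving edges constitute the new $E'$; an aggregation via Claim~\ref{aggregation_claim} reports $|E'|$ to the large machine, which decides whether to iterate or, once $|E'| \leq n^{1+f(n)}$, to collect all remaining edges in a single round and finish the matching locally.

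The point I would be careful to verify is that the shrinkage analysis of~\cite{filtering} carries over essentially unchanged with $\epsilon$ replaced by $f(n)$, including the mildly sub-constant regime $f(n) = \Omega(1/\log n)$. The argument is combinatorial: for any vertex $v$ of residual degree $d_v > n^{f(n)}$ it bounds the probability that the sample contains no incident edge of $v$ by $(1-p)^{d_v}$, and $p \cdot n^{f(n)} = \Omega(1)$ exactly in the regime $|E'| \geq n^{1+f(n)}$. Nothing in this estimate requires $f(n)$ to be constant, only that $n^{f(n)}$ grows, which holds for $f(n) = \Omega(1/\log n)$. Combined with the $O(1)$-round cost per iteration from the primitives above, this yields the claimed bound of $O(1/f(n))$ rounds.
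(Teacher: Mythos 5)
Your proposal is correct and follows essentially the same route as the paper: both adapt the maximal-matching filtering algorithm of~\cite{filtering}, invoking its sampling lemma (that after computing a maximal matching on a $p$-sampled subgraph, the edges with both endpoints unmatched number $O(n/p)$ w.h.p.\ --- an argument that indeed does not require $f(n)$ to be constant), and implement each of the $O(1/f(n))$ phases of sampling, matching on the large machine, and notifying small machines of matched vertices via Claims~\ref{sorting_claim}, \ref{spreading_claim} and~\ref{aggregation_claim}. The only cosmetic difference is that the paper states the algorithm recursively with a fixed per-level sampling probability $p = 1/n^{f(n)}$, while you iterate with the adaptive probability $p = \min\{1,\, c\, n^{1+f(n)}/\vert E'\vert\}$; the two formulations are equivalent in substance and yield the same $O(1/f(n))$ round bound.
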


\begin{proof}[Proof of theorem~\ref{MM_with_super_lin}.]
We follow the algorithm of \cite{filtering} for finding a maximal matching the MapReduce model. By \cite{filtering} Lemma~3.1, if we sample each edge of an input graph $G$ independently with some probability $p$ and find a maximal matching $M$ over the sampled subgraph $G_p$, the number of edges in $G$ whose both end-points are unmatched in $M$ (i.e., potentially can be added to the matching) is $O(n/p)$ with high probability. 

The algorithm is recursive. If the current graph has $O(n^{1+f(n)})$ edges, we reach the stop condition, and send all edges to the large machine to find a maximal matching. Otherwise, we sample each edge independently with probability $p = \frac{1}{n^{f(n)}}$ to create a random subgraph $G_p$. When the algorithm returns, a maximal matching $M$ of the graph $G_p$ is known to the large machine. The large machine notifies the small machines about the matched vertices in $M$ as described in Claim~\ref{spreading_claim}. Denote the set of edges that can potentially be added to the matching $M$ by $E_M$.
From the statement in the beginning of the proof, we have that $\vert E_M\vert = O(n^{1+f(n)})$, thus they can fit the memory of the large machine. Finally, the large machine computes the matching $M_2$ such that $M \subseteq M_2$ and returns $M_2$ from the recursive call. The algorithm takes at most $O(1/f(n))$ recursive iterations, at which point it reaches the stop condition.

\end{proof}
\section{Conclusion and Future Directions}

In this first work we focused on a heterogeneous MPC regime where we add a single machine with near-linear (or larger) memory to the sublinear regime. We showed that even a single machine can make a big difference in the round complexity of fundamental graph problems, circumventing several conditional hardness results. This is only one special case of the heterogeneous setting, and in general one can ask --- just how many machines of each memory size (sublinear, near-linear or superlinear) are required to efficiently solve a given problem? 
And if we allow the total memory use of all the machines to exceed the input size (as in, e.g.,~\cite{subgraph_counting,mapreduce,hardness}), does that help even further?
To study these questions, we propose a more general version of the heterogeneous model,
parameterized by the total memory of each type of machine:
the $(S_{\text{sub}}(m,n),S_\text{lin}(m,n),S_\text{sup}(m,n))$-\modelname{} model
has machines with sublinear, near-linear or super-linear memory,
using a total of $\tilde{\Theta}(S_{\text{sub}}(m,n)),\tilde{\Theta}(S_\text{lin}(m,n))$ and $\tilde{\Theta}(S_\text{sup}(m,n))$
memory, respectively.
(This means that the number of near-linear machines is $\tilde{\Theta}(S_\text{lin}(m,n)/n)$,
while the number of sublinear and near-linear machines 
is $\tilde{\Theta}(S_{\text{sub}}(m,n) / n^{\gamma})$ and $\tilde{\Theta}(S_{\text{sup}}(m,n) / n^{1+\gamma})$,
respectively, for some parameter $\gamma \in (0,1)$.)
From this more general perspective, the model that we studied in this paper
is the $(\tilde{\Theta}(m), \tilde{\Theta}(n), 0)$-\modelname{} model.

We mention several concrete open problems. First, it is interesting to ask whether MST and maximal matching can be solved in the \modelname{} model (the specific version that we studied in this paper) in the same round complexity as in the near-linear model,
and whether this holds for the other problems that
we did not address here --- e.g., those whose conditional hardness is proven in~\cite{hardness}.
Another question is whether problems that appear to require \emph{polynomial} time in the near-linear model, but can be efficiently solved in the superlinear model, can also
be efficiently solved in a hybrid near-linear / superlinear model where we have many near-linear machines and a small number of superlinear machines.
Finally, it is very interesting to ask whether any conditional hardness results can be strengthened to yield non-trivial lower bounds for the heterogeneous model, in order to better characterize the benefit resulting from adding one near-linear machine to the sublinear MPC model.

\section*{Acknowledgements}
Research funded by the Israel Science Foundation, Grant No. 2801/20, and also supported by Len Blavatnik and the Blavatnik Family foundation. This version is partially funded by the European Research Council (ERC) under the
European Union’s Horizon 2020 research and innovation programme (grant agreement No. 949083).

\bibliographystyle{plain}
\bibliography{mybibliography}

\appendix
\section{Constructing the Clustering Graphs from Section~\ref{sec:spanners}}
\label{app:spanners}

We provide a more detailed description of the clustering graphs from \cite{spanners}.

A \emph{star} in a graph $G$ is a tree-shaped subgraph of $G$ with one or more vertices and diameter at most $2$.

For graph $G = (V,E)$ of maximum degree $\Delta$, let $S_0,...,S_{\log\Delta-1}$ be sets of stars in $G$. From the \emph{star} definition, each $s \in S_i$ is either a single vertex $u \in V$, or a center vertex $u \in V$ surrounded by a set of vertices connected to it. In both cases, we call $u$ the center of the star $s$, and denote it by $c(s) = u$. Moreover, for $v \in V$ denote by $s_i(v)$ the center of star to which $v$ belongs in the set $S_i$, i.e., if $v$ belongs to the star $s \in S_i$ then $s_i(v) = c(s)$ (in case $v$ does not belong to any of the stars in $S_i$, $s_i(v)$ is undefined).

Now, define the corresponding clustering graphs $A_0,...,A_{\log\Delta-1}$ of the sets $S_0,...,S_{\log\Delta-1}$ in the following way: $V_i = \sett{c(s) \mid s \in S_i}$, and 
\begin{align*}
                 E_i &= \{(c,c') \mid \exists(u,v) \in E: s_i(u)=c , s_i(v)=c' ,
                 \\
                 &  \min\sett{\deg(u), \deg(v)} \in [2^i,2^{i+1}) 
                   \}
\end{align*}

Given an edge $(c,c') \in E_i$, let $E_G((c,c')) \in E$ be the lexicographically-smallest edge $(u,v) \in E$ such that $s_i(u)=c$, $s_i(v) = c'$ and $\min\sett{\deg(u), \deg(v)} \in [2^i,2^{i+1})$.
(such an edge must exist, from the definition of $E_i$.)
Similarly, for a set of edges $X \subseteq E_i$, we let $E_G(X) = \bigcup_{ (c,c') \in X} E_G((c,c'))$.

\begin{lemma}[\cite{spanners} Section 4]\label{lem:clustering}
        For a graph $G=(V,E)$ of maximum degree $\Delta$, there exist star sets $S_0,...,S_{\log\Delta-1}$ and corresponding clustering graphs $A_i=(V_i,E_i)$ such that:
\begin{itemize}
    \item $ \vert V_0 \vert  = n,  \vert E_0 \vert  = O(n)$, and for each $0 \leq i \leq \log\Delta-1$, $ \vert V_i \vert  = O(ni/2^i)$ and $ \vert E_i \vert  = O(n\cdot 2^i)$.
    \item For any $e \in E$, there exists $0 \leq i \leq \log\Delta-1$ such that $e$ is either contained in some star $s \in S_i$, or $e \in E_i$.
        \item If a vertex $v \in V$ belongs to two stars $s \in S_i$ and $s' \in S_j$ for $i \neq j$ such that $v \not = c(s)$ and $v \not = c(s')$, then $c(s) = c(s')$.
\end{itemize}
\end{lemma}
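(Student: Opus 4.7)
The plan is to construct the star sets level by level using random sampling, where the partition of vertices into the degree classes $V^{(i)} = \sett{v : \deg(v) \in [2^i, 2^{i+1})}$ drives the construction. Each edge $e = (u,v) \in E$ falls into a unique level $i^*(e)$ determined by the smaller of $\deg(u), \deg(v)$, so it suffices to cover each edge at its own level. For level $0$, I would declare every vertex its own singleton star; this immediately gives $V_0 = V$ with $\vert V_0 \vert = n$, and $E_0$ contains only edges with $\min(\deg) = 1$, i.e.\ edges incident to a degree-$1$ vertex, so $\vert E_0 \vert \leq n$.

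For each level $i \geq 1$, the plan is to sample a center set $C_i \subseteq V$ independently with probability $p_i = c \cdot i / 2^i$ for a sufficiently large constant $c$; by a Chernoff bound, $\vert V_i \vert = \vert C_i \vert = O(n i / 2^i)$ w.h.p. Then for each vertex $v$ with $\deg(v) \geq 2^i$, the star center $s_i(v)$ is defined as some level-$i$ center in the closed neighborhood $N(v) \cup \sett{v}$. The probability that no such center exists is at most $(1 - p_i)^{2^i + 1} \leq e^{-ci}$; choosing $c$ large enough and union-bounding over $v$ ensures that w.h.p.\ $s_i(v)$ is well defined for every eligible $v$, and the stars in $S_i$ are exactly the fibers of $s_i$.

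The bound $\vert E_i \vert = O(n \cdot 2^i)$ then falls out from the definition: every edge contributing to $E_i$ has at least one endpoint in $V^{(i)}$, and the number of edges incident to $V^{(i)}$ is at most $\vert V^{(i)} \vert \cdot 2^{i+1} \leq n \cdot 2^{i+1}$. Property~(2) is also immediate: for $i = i^*(e)$, both endpoints of $e$ have degree at least $2^i$ and are therefore assigned level-$i$ stars, so $e$ is either internal to a single star (covered by $S_i$) or connects two distinct centers (yielding an element of $E_i$).

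The main obstacle is property~(3), which demands that a non-center vertex appearing in stars at two different levels be assigned the \emph{same} center across those levels. The natural fix is to sample the center sets as a \emph{nested} hierarchy $C_1 \supseteq C_2 \supseteq \cdots \supseteq C_{\log\Delta - 1}$, obtained by iteratively thinning $C_{i-1}$ with probability $p_i/p_{i-1}$, and to pick $s_i(v)$ by a level-independent canonical rule — for example, take the vertex of minimum identifier in $(N(v) \cup \sett{v}) \cap C_i$, and resolve ties across levels in favor of a center $v$ was already assigned to at any smaller $i$. Under this coupled choice, if $v$ is a non-center at levels $i \neq j$, the same minimum-ID vertex from the coarser of the two sets is forced at both levels. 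Verifying that this canonical choice rule is still compatible with the existence argument for $s_i(v)$ — so that high-probability coverage of all qualifying vertices survives the coupling — is the technical heart of the construction, and is what is carried out in detail in~\cite{spanners}.
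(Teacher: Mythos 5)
Your overall plan (per-degree-class random sampling of centers) is in the same spirit as the construction the paper relies on (it cites \cite{spanners} and implements it in Algorithm~\ref{alg:clustering}), but two steps fail as written. First, the coverage argument: with sampling probability $p_i = c\,i/2^i$, a vertex of degree at least $2^i$ has no sampled center in its closed neighborhood with probability about $e^{-ci}$, which for small constant $i$ is a constant. You therefore cannot union-bound to get w.h.p.\ coverage of all eligible vertices; in fact, for $i=O(1)$ a constant fraction of that degree class is expected to be uncovered, so the stars needed for property (2) simply do not exist under your construction. Raising $p_i$ to $\Theta(\log n/2^i)$ would restore coverage but destroys the bound $\vert V_i\vert = O(ni/2^i)$. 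The construction of \cite{spanners} (mirrored in Algorithm~\ref{alg:clustering}) sidesteps this dilemma: every high-degree vertex that is \emph{not} dominated by the sample is added to the hitting set as its own center, and one bounds the \emph{expected} number of such additions by roughly $n e^{-i} = O(ni/2^i)$, so the size bound survives; the $\log n$ parallel trials with the $\argmin$ step are only there to convert the expectation bound into a high-probability one.

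Second, property (3): picking, at each level, the minimum-identifier center in $(N(v)\cup\{v\})\cap C_i$ over nested sets $C_i \supseteq C_j$ (for $i<j$) does not force consistency --- the minimum-ID element of the larger set $C_i$ may lie outside $C_j$, so $v$ can legitimately receive different centers at the two levels, and your tie-breaking clause does not cure this (the problem is not ties). The construction the paper follows resolves this more simply: each vertex $u$ is assigned a \emph{single} center $\sigma_u$, chosen at the largest level $i_u$ at which $u$ or one of its neighbors lies in the nested hitting set $B_{i_u}$; since $B_{i_u}\subseteq B_i$ for every $i\le i_u$, that one center is valid at all levels where $u$ appears as a non-center, so property (3) holds by construction rather than by a delicate canonical-choice argument. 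Deferring "the technical heart" to \cite{spanners} does not close these gaps, since the mechanism you describe (w.h.p.\ domination by the raw sample, plus min-ID selection) is not the one that makes the argument go through.
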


Let $A_0,...,A_{\log\Delta-1}$ be the clustering graphs from Lemma~\ref{lem:clustering}. 
The following lemma states that by combining individual spanners for these clustering graphs we can obtain a spanner for the original graph:
\begin{lemma}[\cite{spanners} Section 4.4]\label{lem:combining_spanners}
    Let $H_0,\dots,H_{\log\Delta-1}$ be $(2k-1)$-spanners of $A_0,\dots,A_{\log\Delta-1}$, respectively. Let $E_{stars}^i = \cup_{s\in S_i}\sett{e \mid e \in s}$ for all $0 \leq i \leq \log\Delta-1$. 
    Then, $H = \bigcup_{0 \leq i \leq \log\Delta-1} E_G(H_i) \cup E_{stars}^i$ is a $(6k-1)$-spanner of $G$,
    of size at most $O(n) + \sum_{i = 0}^{\log\Delta - 1}  \vert H_i \vert $.
\end{lemma}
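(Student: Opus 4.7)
The plan is to prove the two parts of the lemma separately: the stretch bound and the size bound. For the size bound, I would start from the trivial estimate $|H| \leq \sum_{i} |E_G(H_i)| + |\bigcup_{i} E_{stars}^i|$. Since $E_G$ maps each edge of the clustering graph $A_i$ to a single edge of $G$, we have $|E_G(H_i)| \leq |H_i|$, so the first term contributes at most $\sum_i |H_i|$. The key point is bounding $|\bigcup_i E_{stars}^i|$ by $O(n)$: every star edge is a center-to-leaf edge, and by property~3 of Lemma~\ref{lem:clustering}, whenever a vertex $v$ is a leaf in two different star sets the corresponding centers coincide, so $v$ is incident to at most one distinct ``leaf'' edge across all levels. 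Charging each distinct star edge to its unique leaf endpoint gives $|\bigcup_i E_{stars}^i| \leq n$, which together with the previous estimate yields the claimed size bound.

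For the stretch bound it suffices to show $\dist_H(u,v) \leq 6k-1$ for every edge $e = (u,v) \in E$; general distances then follow by summing along a shortest $G$-path. I would split by property~2 of Lemma~\ref{lem:clustering}. If $e$ is contained in some star $s \in S_i$, then $e \in E_{stars}^i \subseteq H$ and $\dist_H(u,v) = 1$. Otherwise, there is some level $i$ at which the corresponding cluster-edge $(c_u, c_v) \in E_i$, where $c_u = s_i(u)$ and $c_v = s_i(v)$, and in particular both $s_i(u)$ and $s_i(v)$ are defined.

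In this second case I would concatenate three segments. First, the star edges $(u, c_u)$ and $(v, c_v)$ lie in $E_{stars}^i \subseteq H$ (trivially, if $u = c_u$ or $v = c_v$, the hop is free). Second, since $H_i$ is a $(2k-1)$-spanner of $A_i$ and $(c_u, c_v)$ is an edge of $A_i$, there is a $c_u \rightsquigarrow c_v$ path in $H_i$ of length at most $2k-1$. Third, each edge $(c, c')$ of this $H_i$-path must be ``lifted'' into $H$: by the definition of $E_G$, the edge $(c,c')$ corresponds to some $E_G((c,c')) = (x, y) \in E$ with $s_i(x) = c$ and $s_i(y) = c'$, so the three edges $(c, x), (x, y), (y, c')$ all lie in $H$ (the two star edges in $E_{stars}^i$ and the graph edge in $E_G(H_i)$). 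Hence each $H_i$-edge lifts to an $H$-path of length $3$, giving $\dist_H(c_u, c_v) \leq 3(2k-1) = 6k-3$ and, combined with the two star hops at the endpoints, $\dist_H(u,v) \leq 1 + (6k-3) + 1 = 6k-1$.

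The main subtle point is the lifting step, which is the only place that uses the specific semantics of the map $E_G$; I expect most of the proof-writing effort to go into checking that the star edges $(c,x)$ and $(y,c')$ are indeed present in $E_{stars}^i$, which follows directly from $s_i(x) = c$ and $s_i(y) = c'$ together with the definitions of $S_i$ and $E_{stars}^i$. Everything else is a routine triangle-inequality-style calculation, and degenerate cases (e.g.\ $u = c_u$, or the same edge appearing in $E_{stars}^i$ at several levels) only improve the bounds.
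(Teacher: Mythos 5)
Your proof is correct, and since the paper does not prove this lemma itself (it defers to \cite{spanners}, Section~4.4), there is no internal proof to compare against; your argument --- per-edge stretch obtained from the two star hops at $u$ and $v$ plus lifting each edge of the $(2k-1)$-hop $H_i$-path to a $3$-hop path $c\!-\!x\!-\!y\!-\!c'$ using $E_G$ and level-$i$ star edges, giving $1+3(2k-1)+1=6k-1$, together with the $O(n)$ bound on $\bigcup_i E_{stars}^i$ via property~3 of Lemma~\ref{lem:clustering} --- is exactly the standard argument from that source. The only implicit step is that within a single level each vertex lies in at most one star (so a leaf is charged at most once per level as well as across levels), which is built into the well-definedness of $s_i(v)$ and is worth stating explicitly.
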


\paragraph*{Clustering graphs in the \modelname{}.} 
Algorithm~\ref{alg:clustering} constructs the clustering graphs $A_0,...,A_{\log\Delta-1}$ from Lemma~\ref{lem:clustering} in $O(1)$ rounds in the \modelname{} model in the same manner as \cite{spanners}. We define the sets $U_i = \{v \in V \mid 2^i \leq deg(v) \leq 2^{i+1} \}$ for $0 \leq i \leq \log\Delta$. The set $B_i \subseteq V$ is called a hitting-set of $U_i$ if for each $v \in U_i$, either $v \in B_i$ or $v$ has a neighbor $u \in B_i$. Algorithm~\ref{alg:clustering} starts by computing the hitting-set $B_i$ for each $U_i$. Then, the clustering graphs $A_i = (V_i,E_i)$ for each $0 \leq i \leq \log\Delta-1$ are constructed based on these hitting-sets. We note that for each $e=(c,c') \in E_i$, we attach the lexicographically-smallest edge $(u,v) \in E$ that cause $e$ to be contained in $E_i$, so that in the end, for each $(c,c') \in H_i$ we will be able to compute $E_G((c,c'))$ and add it to the final spanner $H$ of $G$.

\section{Pseudocode}
\label{app:MST}
In this section, we provide pseudo-code for key procedures of Section~\ref{sec:MST} in the \modelname{} model, namely, the doubly-exponential Bor{\r u}vka procedure, and a procedure for identifying the $F$-light edges.

We also provide the pseudo-code for Section~\ref{sec:spanners}, for constructing the clustering graphs and for constructing the spanner.

\begin{algorithm2e*}
\caption{DoublyExpBoruvka($G=(V,E,w),k$)}\label{alg:lotker}
\DontPrintSemicolon
\SetKwInOut{Input}{Input}
\SetKwInOut{Output}{Output}
\SetKw{Disseminate}{disseminate}
\SetKw{Sort}{sort}
\SetKw{Aggregate}{aggregate}
\SetKwBlock{SmallMachines}{small machines do in parallel:}{end}
\SetKwBlock{LargeMachine}{large machine:}{end}
\SetKwBlock{LocallyCompute}{locally compute:}{end}
\Input{$G$ is a weighted undirected graph, k is an integer}
\Output{$G_k$ is the contracted graph resulting from applying $k$ iterations of $\Boruvka$, $F$ are the edges that were used for contraction}
\hrulealg
$F \gets \phi, V_0 \gets V, E_0 \gets E, \forall v \in V: c_{0}(v)=v$ \tcp*{$c_i : V \rightarrow V_i$ map each vertex $v \in V$ to the vertex into which it was merged in $V_i$}
\For{$0 \leq i \leq k-1$} {
\SmallMachines{
\Sort all edges $E_i$ primarily by $1^{st}$ endpoint id, then by weight, then by the $2^{nd}$ endpoint id \tcp*{Claim~\ref{sorting_claim}}
}
\LargeMachine{
\LocallyCompute{
\ForEach{$v \in V$}
{compute $deg_{out}(v)$ and $M_{first}(v)$ \tcp*{Claim~\ref{consec_machines}}
\ForEach{small machine $M$}
{compute $k(v,M) \gets $ out of $v$'s lightest $\min(2^{2^i}, \deg_{\mathrm{out}}(v))$ edges, the number of edges that are stored on $M$}
}}
\Disseminate $k(v,M)$ to small machine $M$ that holds an adjacent edge of $v$ for each $v \in V$ \;}
\SmallMachines{
\ForEach{$v \in V$ and machine $M$}
{$E_{lightest} \gets $the $k(v,M)$ lightest edges of $v$ that are stored on $M$}
send $E_{lightest}$ to the large machine\;
}
\LargeMachine{
\LocallyCompute{
$F_{contracted} \gets \phi$, $V_{i+1} \gets \phi$\;

\ForEach{$e \in E_{lightest}$ from the lightest to the heaviest} {
\If{$e = (u,v)$ dose not create a cycle in $F_{contracted}$} {
$F_{contracted} \gets F_{contracted} \cup \{(u,v)\}$\;
$V_{i+1} \gets V_{i+1} \cup \sett{\min\sett{u,v}}$\;
$c_{i}^{'}(u),c_{i}^{'}(v) \gets \min\{u,v\}$ \tcp*{$c'_i : V_i \rightarrow V_{i+1}$ map each $v \in V_i$ to the vertex into which it was merged in $V_{i+1}$}
$F \gets F \ \cup \{ (u',v') \in E \mid c_i(u') = u, c_i(v') = v, (u',v')$ is the minimum weight such edge\}\;
}
}
}
\Disseminate $c_i^{'}(v)$ to each machine that holds an adjacent edge of $v$ for each $v \in V$
\tcp*{Claim~\ref{spreading_claim}}
}
\SmallMachines{
\Aggregate $E_{i+1} \gets \{(u'',v'',w(u,v)) \mid (u,v) \in E_i, c_i^{'}(u) = u'', c_i^{'}(v) = v''$, $u'' \not = v''$ and $(u,v)$ is the minimum weight edge for which the previous conditions hold\} \tcp*{Claim~\ref{aggregation_claim}}
\ForEach{$v \in V$} {
$c_{i+1}(v) = c_i'(c_i(v))$\;
}
}
}

output $(G_k = (V_k, E_k, w), F)$\;
\end{algorithm2e*}

\begin{algorithm2e*}
\caption{F-LightEdges($G=(V,E,w),F$)}\label{alg:F_light}
\DontPrintSemicolon

\Input{$G$ is a weighted, undirected graph, $F$ is a spanning forest of $G$ stored on the large machine}
\Output{$E_{light}$ are the $F$-light edges in $G$}\tcp*{see section \ref{sec:MST} for $F$-light, $\mathcal{M}_{\mathrm{flow}}$ and $\mathcal{D}_{\mathrm{flow}}$ definitions}
\hrulealg

$E_{light} \gets \phi$\;
\LargeMachine {
\LocallyCompute{
compute labels $L:V \rightarrow \sett{0,1}^{O(\log^2 n)}$ by applying $\mathcal{M}_{\mathrm{flow}}(F)$\;
}
\Disseminate $L(v)$ to each small machine $M$ that holds an edge incident to $v$ for each $v \in V$ \tcp*{Claim~\ref{spreading_claim}}
}

\SmallMachines{
\ForEach{$(u,v) \in E$} {
\If{$w(u,v) \leq \mathcal{D}_{\mathrm{flow}}( L(u), L(v) )$} {
$E_{light} \gets E_{light} \cup \sett{(u,v)}$
}
}
}
output $E_{light}$
\end{algorithm2e*}

\clearpage


\begin{algorithm2e}
\caption{ClusteringGraphs(G)}\label{alg:clustering}
\DontPrintSemicolon

\Input{$G$ is a weighted, undirected graph}
\Output{$H$ are the edges used to construct the clustering graphs, $A_0,...,A_{\log\Delta-1}$ are the clustering graphs}
\hrulealg

\LargeMachine{
\LocallyCompute{
\For{$1 \leq j \leq \log(n)$} {
$D_0^j \gets V$\;
\For{$1 \leq i \leq \log\Delta-1$} {
$D_i^j \gets$ sample each $u \in V$  w.p. $p = \frac{i}{2^i}$\;
}
}
}
\Disseminate $D_i^j$ to each machine that holds an incident edge of $v \in D_i^j$ for each $v \in V$, $1 \leq j \leq \log(n)$ and $1 \leq i \leq \log\Delta-1$  \tcp*{Claim~\ref{spreading_claim}}
}

\SmallMachines{
\DoParallel{$1 \leq j \leq \log(n)$} {
\DoParallel{$1 \leq i \leq \log\Delta-1$} {
\Aggregate $D_i^j \gets D_i^j \cup \{u \in V \mid 2^k \leq deg(u) \leq 2^{k+1}, i \leq k, \ u $ has no neighbor in $D_i^j\}$ \tcp*{Claim~\ref{aggregation_claim}}
send $D_i^j$ to the large machine\;
}
}
}
\LargeMachine{
\LocallyCompute{
\For{$1 \leq i \leq \log\Delta-1$} {
$D_i \gets D_i^j$ such that 
$D_i^j = \argmin_{1 \leq j \leq \log{n}}\{\vert D_i^j \vert\}$\;
}
}
\Disseminate $D_i$ to each machine that holds an incident edge of $v \in D_i$ for each $v \in V$ and $1 \leq i \leq \log\Delta-1$ \tcp*{Claim~\ref{spreading_claim}}}
\SmallMachines{
\DoParallel{$0 \leq i \leq \log\Delta-1$} {
$B_i \gets \cup_{j=i}^{\log\Delta - 1}D_j$\;
}
\ForEach{$u \in V$} {
\Aggregate $i_u \gets \max\{i \mid u \in B_{i}$ or $N(u) \cap B_{i} \not = \emptyset\}$ \tcp*{Claim~\ref{aggregation_claim}}
\If{$u \in B_{i_u}$} {
$\sigma_u \gets u$\;
\DoParallel{$0 \leq i \leq i_u$} {
$V_i \gets V_i \cup \{u\}$\;
}
}
\Else {
$\sigma_u \gets $ a random neighbor of $u$ in $B_{i_u}$ \;
$H \gets (u,\sigma_u)$\;
}
}
\ForEach{$(u,v) \in E$} {
\If{$\min\{\deg(u),\deg(v)\} = 2^i$ and $\sigma_u \not = \sigma_v$} {
$E_i \gets (\sigma_u, \sigma_v)$\;
}
}
}
output $H$ and $(V_i,E_i)$ for $0\leq i\leq \log\Delta-1$\;

\end{algorithm2e}

\begin{algorithm2e}
\caption{Spanner($G,k$)}\label{alg:spanner_alg}
\DontPrintSemicolon

\Input{$G$ is a weighted, undirected graph, $k$ is an integer}
\Output{$H$ is an $O(k)$-spanner of $G$ of expected size $O(n^{1+\frac{1}{k}})$}
\hrulealg

$H,A_0, ... ,A_{\log\Delta-1} \gets$ ClusteringGraphs($G$) \tcp*{$A_i = (V_i, E_i)$}

\DoParallel{$0\leq i \leq \log\Delta-1$} { 
\If{ $i=0$ or $p_i = \frac{k^2i^{1+1/k}}{2^i} > 1$} { 
\SmallMachines{
send $A_i$ to the large machine\;
}
\LargeMachine{
\LocallyCompute{
$H_i \gets (2k-1)$-spanner of $A_i$\;
$H \gets H \cup E_G(H_i)$\; \tcp*{$E_{G}(H_i) = \{(u,v) \in E \  \mid  \min\{\deg(u),\deg(v)\} = 2^i, \ (\sigma_u, \sigma_v) \in H_i$ and $(u,v)$ is the minimum weight such edge$\}$}
}
}
}
\Else {
\SmallMachines{
\DoParallel{$1 \leq j \leq k-1$} {
$E_i^j \gets$ sample each $e \in E_i$ w.p. $p_i=\frac{k^2i^{1+1/k}}{2^i} > 1$\;
send $A_i^j=(V_i,E_i^j)$ to the large machine\;
}
}
\LargeMachine{
\LocallyCompute{
$H_i \gets$  compute lines~\ref{line:BS_start}--\ref{line:BS_mid}
of ModifiedBaswanaSen($A_i,k,1/2^i$) using the sampled subgraphs $(A_i^1,...,A_i^{k-1})$\;
\ForEach{$v \in V_i$} {
$l_i^v \gets (i,c_0(v),...,c_{t-1}(v))$ such that $v$ stopped being clustered at iteration $t $, $c_j(v)$ is $v$'s cluster center at iteration $j$\;
}
}
\Disseminate $l_i^v$ to each machine that holds edges of $v$ for each $v \in V$\tcp*{Claim~\ref{spreading_claim}}
}
\SmallMachines{
\ForEach{$(u,v) \in E_i$}{
\If{$ \vert l_i^v \vert  >  \vert l_i^u \vert $} {
Create a record $(i,u,c_{t-1}(v),v)$\;
}\tcp*{$u$ stopped being clustered before $v$ at iteration $t$}
\ElseIf{$ \vert l_i^u \vert  >  \vert l_i^v \vert $} { 
Create a record $(i,v,c_{t-1}(u),u)$\;
}\tcp*{$v$ stopped being clustered before $u$ at iteration $t$}
}
\ForEach{$u \in V_i$ and a cluster $c$} {
\Aggregate $v \gets \argmin_{v \in V_i}\{ (i,u,c_{t-1}(v),v) \mid c_{t-1}(v) = c\}$ \tcp*{Claim~\ref{aggregation_claim}}
$H_i \gets H_i \cup \sett{(u,v)}$\;
}
$H \gets E_G(H_i)$\;
}
}
}
output $H$
\end{algorithm2e}

\clearpage
\section{Prior Works That Extend to the \modelname{} Model}
Several prior works in the near-linear MPC model, while not explicitly presented for a model like \modelname{}, easily translate to this model. In this section, we give an overview of these results, and provide some relevant adaptation details.

\subsection{Connectivity in $O(1)$ Rounds}\label{app:conn}

In this subsection we show how by leveraging the existing techniques of linear sketches and $l_0$-sampling which are mostly used in the context of streaming algorithms \cite{l0_sampling,linearMeas,linear_sket_stream_1,linear_sket_stream_2,linear_sket_stream_3,linear_sket_stream_4}, as well as in the congested clique (\cite{connectivity_in_cc}, we get constant-round algorithms for connectivity in the \modelname{} model.
The implementation is very similar to the implementation in near-linear MPC, except that now the edges associated with a single vertex may be stored across more than a single machine;
using \emph{linear sketches}, this is trivial to overcome.

\begin{theorem} \label{conn_theorem}
There is an $O(1)$-round algorithm to identify the connected components of a graph with high probability in the \modelname{}.
\end{theorem}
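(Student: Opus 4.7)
The plan is to port the standard near-linear MPC connectivity algorithm based on linear sketches and $\ell_0$-sampling (as in~\cite{linearMeas,connectivity_in_cc}) to the \modelname{} model, exploiting the fact that linear sketches of per-vertex edge-incidence vectors are additive and can therefore be assembled from partial sketches held on different small machines. The overall strategy is: small machines compute partial sketches, the large machine receives one full sketch per vertex, and the large machine then simulates Bor{\r u}vka's algorithm entirely locally using only sketch arithmetic.

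First, I would have each small machine produce, for every vertex $v$ whose incident edges it stores, a partial linear sketch obtained by summing the signed edge-indicator contributions of those incident edges. I use $\Theta(\log n)$ independent sketches per vertex (one for each Bor{\r u}vka round), each of size $O(\polylog n)$ bits, so the partial sketch of $v$ on a small machine is $O(\polylog n)$ bits. Because partial sketches on disjoint edge sets add to give the sketch of the union, pointwise addition over the sketch alphabet is an aggregation function in the sense of Definition~\ref{def:agg}. Applying Claim~\ref{aggregation_claim} with the domain $S = V \times \{\text{sketch space}\}$ partitioned by vertex identity, in $O(1)$ rounds each vertex's partial sketches are aggregated to a single designated small machine, which forwards the combined per-vertex sketch to the large machine. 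The total volume sent to the large machine is $n \cdot O(\polylog n)$ bits, which fits in its memory.

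Next, the large machine carries out $O(\log n)$ Bor{\r u}vka iterations purely by local computation. In iteration $j$, it uses the $j$-th independent sketch of every vertex: for each current component $C$, it sums the $j$-th sketches of the vertices in $C$; by linearity, intra-component edge contributions cancel, leaving a sketch of the component's outgoing edge set, from which an $\ell_0$-sampler extracts a random outgoing edge with high probability. Components are merged along the sampled edges, and after $O(\log n)$ rounds all components have stabilized. Since the $j$-th sketches are independent across $j$, the standard analysis gives correctness with probability $1 - 1/\poly(n)$. These $O(\log n)$ iterations cost zero communication rounds because they run entirely on the large machine.

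The only potentially delicate step is the aggregation: one must verify that the summed-sketch function truly satisfies the associativity condition of Definition~\ref{def:agg}, which it does because linear sketches are, by construction, linear maps of the edge-incidence vector. A minor implementation detail is deciding, for each edge $\{u,v\}$ endpoint-canonically, which machine contributes which sign, so that duplicated storage of an edge on multiple small machines does not double-count; this is handled by first invoking Claim~\ref{consec_machines} to place a canonical copy of each edge on a well-defined machine before sketch construction. With these pieces in place, the total round complexity is the $O(1)$ rounds of Claim~\ref{consec_machines} plus the $O(1)$ rounds of Claim~\ref{aggregation_claim}, giving the desired $O(1)$-round algorithm.
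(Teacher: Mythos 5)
Your overall route is the same as the paper's: small machines build partial per-vertex sketches, Claim~\ref{aggregation_claim} adds them up in $O(1)$ rounds, the large machine stores all $\tilde{O}(n)$ bits of sketches and locally simulates the sketch-based Bor{\r u}vka-style connectivity algorithm of~\cite{linearMeas}, which costs no communication. However, there is one genuinely missing idea: the randomness behind the sketches. The additivity you invoke (``partial sketches on disjoint edge sets add to give the sketch of the union'') only holds if \emph{every} small machine applies the \emph{same} random linear map when sketching, both so that the partial sketches of a single vertex $v$ held on different machines combine into a valid sketch of $v$'s full incidence vector, and so that, when the large machine later sums sketches over a component, the intra-component edge contributions actually cancel. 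The \modelname{} model explicitly provides only private randomness, so this consistency does not come for free, and with independently chosen sketch coefficients the aggregated vectors are meaningless and the $\ell_0$-sampling guarantee fails. The paper handles exactly this point by replacing the shared-randomness sketches of~\cite{l0_sampling} with sketches based on $O(\log n)$-wise independence, following~\cite{linearMeas}: a single machine draws an $O(\polylog n)$-bit seed and disseminates it to all machines in $O(1)$ rounds, and all partial sketches are then generated deterministically from that common seed. Your write-up never establishes a common seed, so as stated the aggregation step would fail; adding the seed-dissemination step repairs it.

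Two smaller remarks. First, in the model each edge is stored once on the small machines, so your worry about duplicated edges double-counting does not arise; the invocation of Claim~\ref{consec_machines} is unnecessary (though harmless), and the sign convention can simply be fixed by a canonical orientation of each edge, e.g.\ positive at the smaller endpoint, with no extra rounds. Second, your accounting of $\Theta(\log n)$ independent sketches per vertex, each of $O(\polylog n)$ bits, for a total of $\tilde{O}(n)$ bits at the large machine, matches the paper's $O(n\log^3 n)$ bound and is fine.
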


\begin{proof}[Proof.]
We apply the connectivity algorithm from ~\cite{linearMeas} which is based on linear-sketches for $\ell_0$-sampling, as stated originally in \cite{l0_sampling},
but replacing the shared randomness assumed in~\cite{l0_sampling}
with $O(\log n)$-wise independence~\cite{linearMeas},
so that one machine can generate $O(\polylog n)$ random bits,
disseminate them to all the other machines,
and these are then used to generate all the sketches.
The sublinear machines compute a linear sketch $s(v)$ for each node $v \in V$. This sketch can be used to sample a random neighbor of $v$, and for a set of nodes $S = \{v_1,...,v_k\}$, the sketch $s = s(v_1) + ... + s(v_k)$ can be used to randomly sample an edge from $E[S,V \backslash S]$. The total number of bits required for all nodes sketches is $O(n\log^3(n))$, thus, all sketches can be stored in the near-linear-spaced machine which can then simulate the connectivity algorithm of \cite{linearMeas}. During the algorithm, nodes from the same connected component are merged using the sampled edges into contracted virtual nodes. The nodes in the final contracted graph represent the connected components of the input graph.  

In order to construct a sketch for some node $v$, it is required to know all neighbors of $v$. But, we cannot assume that the sublinear machine has a sufficient memory to hold all of $v$'s neighbors. Instead, we use the following property of the linear sketches:

\begin{property} \label{linearity_property}
Let $G_1,\dots,G_k \subseteq G$ such that $N(v) = N_{G_1}(v) \uplus \dots \uplus N_{G_k}(v)$, and let $s_1(v),\dots,s_k(v)$ be linear sketches in these graphs, then $s(v) = \sum_{i=1}^k s_i(v)$ is a linear sketch of $v$.
\end{property}

Each sublinear machine computes a partial-sketch for each $v \in V$ based on the neighbors of $v$ which it holds. Using Claim~\ref{aggregation_claim}, the final sketches are constructed by adding up all the partial sketches, which takes constant number of rounds.
\end{proof}

\subsubsection{$(1+\epsilon)$-Approximation of MST in $O(1)$ Rounds}

\begin{theorem}
For any constant $\epsilon > 0$, there is an $O(1)$-round algorithm to compute an $(1+\epsilon)$-approximation of the minimum spanning tree in the \modelname{}.
\end{theorem}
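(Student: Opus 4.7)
The plan is to follow the classical weight-rounding approach from the streaming literature. Each small machine locally rounds each edge weight $w$ up to the nearest power of $(1+\epsilon)$; since weights are bounded by $\poly(n)$ and $\epsilon$ is constant, this produces $L = O(\log n / \epsilon) = O(\log n)$ distinct rounded weight classes. Let $G_i$ denote the subgraph consisting of edges with rounded weight at most $(1+\epsilon)^i$, let $\tilde{G}$ denote the graph with rounded weights, and let $\lambda_i$ be the number of connected components of $G_i$. Since every rounded weight is at most $(1+\epsilon)$ times its original, $\mathrm{MST}(\tilde{G}) \leq (1+\epsilon)\,\mathrm{MST}(G)$, so it suffices to output a spanning tree of $G$ whose rounded weight is at most $\mathrm{MST}(\tilde{G})$.

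The first step is to run, in parallel for each weight class $i=0,\dots,L-1$, the $O(1)$-round connectivity algorithm of Theorem~\ref{conn_theorem} on $G_i$. The Ahn--Guha--McGregor procedure, as simulated on the large machine from the linear sketches, performs Boruvka-style merging whose sampled $\ell_0$-edges naturally form a spanning forest $F_i$ of $G_i$, so we extract $F_i$ as a byproduct. Memory is not an issue: each weight class needs $O(n\log^3 n)$ bits of sketch data on the large machine, and summing over the $L = O(\log n)$ classes gives $O(n\log^4 n) = \tilde{O}(n)$, well within its budget. The small machines build the partial sketches for each class via Property~\ref{linearity_property} and aggregate using Claim~\ref{aggregation_claim}, all in $O(1)$ rounds.

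Next, the large machine collects $F = \bigcup_{i=0}^{L-1} F_i$; since $|F_i| \leq n-1$, we have $|F| = O(n\log n)$, which fits locally. The large machine then computes a minimum spanning forest of $F$ under the original (unrounded) weights and outputs it. For correctness, imagine running Kruskal on $F$ with rounded weights: after processing all edges of rounded weight at most $(1+\epsilon)^i$, the forest built spans the connected components of $G_i$, because $F_i \subseteq F$ is already a spanning forest of $G_i$. Hence exactly $\lambda_{i-1}-\lambda_i$ edges of rounded weight $(1+\epsilon)^i$ get added, which matches the count in $\mathrm{MST}(\tilde{G})$, so the total rounded weight satisfies
\begin{equation*}
    \mathrm{MST}(F,\text{rounded}) \;=\; \sum_{i=0}^{L-1} (1+\epsilon)^i(\lambda_{i-1}-\lambda_i) \;=\; \mathrm{MST}(\tilde{G}) \;\leq\; (1+\epsilon)\,\mathrm{MST}(G).
\end{equation*}
Since original weights never exceed rounded weights, the spanning forest the large machine actually outputs has original-weight cost at most $(1+\epsilon)\,\mathrm{MST}(G)$.

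The main obstacle I expect is ensuring that the connectivity primitive really yields an explicit spanning forest and not merely component labels, since Theorem~\ref{conn_theorem} is stated in terms of component identification. This is inherent to the sketch-based Boruvka simulation---each merge is driven by an $\ell_0$-sampled edge of $G$, and those edges constitute $F_i$---but the argument must spell this out. A secondary concern is orchestrating $L$ independent sketch instances in parallel (independent randomness per class, disseminated via the large machine using Claim~\ref{spreading_claim}) so that the per-node sketches partitioned across small machines can still be aggregated and transmitted without exceeding any machine's bandwidth; all relevant quantities remain $\tilde{O}(n)$ globally and $\tilde{O}(n^\gamma)$ per small machine, so this is manageable.
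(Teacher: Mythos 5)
Your proposal is correct, but it takes a different route than the paper, and in fact proves something slightly stronger. The paper's proof is a two-line reduction: it invokes the reduction of \cite{linearMeas} from \emph{estimating the weight} of the MST to \emph{counting connected components} in $O(\log n)$ geometric weight-threshold subgraphs, and then runs the $O(1)$-round connectivity algorithm of Theorem~\ref{conn_theorem} on all of them in parallel; the output is a $(1+\epsilon)$-approximate weight estimate, not a tree. You use the same threshold subgraphs $G_i$ and the same parallel sketch-based connectivity machinery, but instead of merely counting components you extract, from the sketch-driven Bor\r{u}vka simulation on the large machine, an explicit spanning forest $F_i$ of each $G_i$, take the union $F$ of size $O(n\log n)$, and output a minimum spanning forest of $F$ under the original weights; your Kruskal/exchange argument that this forest has rounded weight exactly $\sum_i (1+\epsilon)^i(\lambda_{i-1}-\lambda_i)=\mathrm{MST}(\tilde G)\leq(1+\epsilon)\mathrm{MST}(G)$ is sound (with $\lambda_{-1}=n$ and the observation that every edge of $F$ of rounded weight at most $(1+\epsilon)^i$ lies in $G_i$), and the final step of re-minimizing under original weights only helps, since the output is compared against a spanning forest of $F$ whose original weight is bounded by its rounded weight. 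What each approach buys: the paper's argument is shorter and needs nothing beyond the black-box statement of Theorem~\ref{conn_theorem}, but yields only a weight estimate; yours produces an actual $(1+\epsilon)$-approximate spanning tree (matching what \cite{linearMeas} also provides), at the cost of opening up the connectivity primitive --- you correctly flag that Theorem~\ref{conn_theorem} as stated only identifies components, so your proof must rely on the (true, but unstated) fact that the $\ell_0$-sampled merge edges of the simulated Bor\r{u}vka form a spanning forest and are genuine edges of $G_i$ w.h.p. Your resource accounting ($O(n\log^4 n)$ total sketch bits on the large machine, $\tilde O(n^\gamma)$ partial sketches per small machine aggregated via Claim~\ref{aggregation_claim} and Property~\ref{linearity_property}, independent seeds per weight class) is consistent with the model, so the round complexity remains $O(1)$.
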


\begin{proof}[Proof.]
We follow the known reduction from \cite{linearMeas} to reduce the problem of estimating the weight of the minimum spanning tree to the problem of counting the number of connected components in the graph. This approach requires applying the connectivity algorithm in parallel over $O(\log(n))$ subgraphs. This takes constant number of rounds using the connectivity algorithm from theorem \ref{conn_theorem}.
\end{proof}

\subsection{Exact Unweighted Minimum Cut in $O(1)$ Rounds}
In this subsection, we give an overview of the result of \cite{unweightewMincut} for computing the minimum cut of an unweighted graph.

\begin{theorem}
There is an $O(1)$-round algorithm in the \modelname{} model that with high probability computes the minimum cut of an unweighted graph.
\end{theorem}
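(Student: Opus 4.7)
The plan is to verify that the $O(1)$-round near-linear MPC algorithm of \cite{unweightewMincut} for exact unweighted min cut transfers to the \modelname{} model with only cosmetic changes. Such algorithms follow the standard recipe of computing minimum cuts via sparsification: first produce a sparse subgraph $H$ of $G$ with $\tilde{O}(n)$ edges that preserves the minimum cut with high probability, then collect $H$ on the single large machine, and finally run any sequential exact min-cut algorithm on $H$.

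First I would obtain a constant-factor estimate $\tilde{\lambda}$ of the min cut, for instance by building a Nagamochi--Ibaraki spanning-forest packing of size $O(n\tilde{\lambda})$ that also upper-bounds $\lambda$, or by an $O(1)$-round sketching approach. Second, I would further sparsify via Karger-style edge sampling: applying an appropriate inclusion probability $p = \tilde{\Theta}(1/\tilde{\lambda})$ to the intermediate graph yields a random subgraph of $\tilde{O}(n)$ expected edges whose (rescaled) minimum cut equals that of $G$ with high probability. Third, the large machine runs a sequential exact min-cut algorithm locally on this $\tilde{O}(n)$-edge sparsifier and outputs the answer.

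Each of these three phases takes $O(1)$ rounds in the \modelname{} model by reusing the primitives developed earlier in the paper. Edge sampling is performed locally on each small machine; the $\tilde{O}(n)$ selected edges are routed to the large machine in a single round. Per-vertex statistics needed for the sparsifier construction, such as incidence with a spanning forest of the packing or a linear sketch of the neighborhood, can be assembled from partial contributions produced by the different small machines that jointly hold a vertex's edges, using Claim~\ref{aggregation_claim}; any short piece of information the large machine must push back to the small machines is handled by Claim~\ref{spreading_claim}.

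The main obstacle is therefore not the algorithmic idea but a data-placement issue: in near-linear MPC each vertex's adjacency list sits on a single machine, while in the \modelname{} model the incident edges of a vertex may be spread across many small machines, so any computation that the original algorithm performs vertex-locally must be re-expressed as a distributed reduction over the small machines that store that vertex's edges. The same phenomenon already arises for the connectivity algorithm of Section~\ref{app:conn}, and the same remedy applies: any per-vertex quantity that decomposes additively over the neighborhood, in particular linear sketches by Property~\ref{linearity_property}, is computed in $O(1)$ rounds by local sketching followed by aggregation, after which the rest of the near-linear algorithm runs verbatim on the large machine.
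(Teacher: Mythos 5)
There is a genuine gap in your sparsification step. Karger-style skeleton sampling with $p = \tilde{\Theta}(1/\tilde{\lambda})$ only guarantees that every cut is preserved to within a $(1\pm\epsilon)$ multiplicative factor of its expectation; it does \emph{not} make the (rescaled) minimum cut of the sample equal to the minimum cut of $G$. To distinguish cuts of value $\lambda$ from cuts of value $\lambda+1$ you would need relative error below $1/\lambda$, which forces the sampling probability up to essentially $1$ and destroys the sparsification. So running an exact min-cut algorithm on your $\tilde{O}(n)$-edge sample yields only a $(1+\epsilon)$-approximation --- this is precisely the paper's \emph{separate} result for approximate weighted min cut (Theorem~\ref{approx-mincut}, following \cite{weightewMincut}), not the exact unweighted theorem you were asked to prove. (A secondary issue: you invoke a Nagamochi--Ibaraki forest packing as if it were available in $O(1)$ rounds, but that construction is inherently sequential/greedy and you give no MPC implementation; the paper never needs it.)

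The paper's proof instead uses the exactness-preserving sparsification of \cite{unweightewMincut}, which is specific to simple unweighted graphs: a \emph{2-out contraction} (each vertex samples two incident edges and the connected components of the sampled graph are contracted, keeping parallel edges) followed by a \emph{random-sampling contraction} with $p = \frac{1}{2\delta}$, where $\delta$ is the minimum degree of the resulting multigraph. This produces a contracted graph with $O(n)$ edges and $O(n/\delta)$ vertices in which every non-singleton $(2-\epsilon)$-near-minimum cut survives \emph{exactly} with constant probability, amplified to high probability by parallel repetition; singleton cuts are handled separately by comparing with the vertex degrees, which the large machine knows via Claim~\ref{consec_machines}. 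The contracted graph fits on the large machine, which then computes the exact answer. Your observations about data placement, per-vertex aggregation via Claim~\ref{aggregation_claim}, and dissemination via Claim~\ref{spreading_claim} are correct and do carry over, but without replacing the skeleton sampling by the contraction-based sparsification your argument does not establish exactness.
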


\begin{proof}[Proof.]
The heart of the algorithm is a contraction process, which consists of two parts: 
\begin{enumerate}
    \item \textit{$2$-out contraction}: Each node $v \in V$ samples two edges. Then, we contract all connected components of the graph induced over the sampled edges, while allowing parallel edges.
    \item \textit{Random-sampling contraction}: For a multi-graph of minimum degree $\delta$, we contract a set of edges $E_p$, to which we include each $e \in E$ independently with probability $p = \frac{1}{2\delta}$. 
\end{enumerate}

Combining the $2$-out contraction with random-sampling contraction results in a graph $G_c$ with $\vert E_c\vert = O(n)$ and $\vert V_c\vert = O(n/\delta)$, such that, for a constant $\epsilon \in (0,1]$ and any non-singleton $(2-\epsilon)$-minimum-cut, the cut is preserved in $G_c$ with at least a constant probability. We can amplify the success probability in the same way as in \cite{unweightewMincut}, so that with high probability, all non-trivial minimum cuts are preserved. Since $\vert E_c\vert = O(n)$, $G_c$ can fit the memory of the large machine, which can then compute its minimum cut and compare it with all the singleton cut sizes known to it by Claim~\ref{consec_machines}. The whole process takes $O(1)$ rounds in the \modelname{} model.


\end{proof}

\subsection{$(1 + \epsilon)$-Approximating the Minimum Cut in $O(1)$ Rounds}
In this subsection, we overview a procedure of \cite{weightewMincut} which implies an $O(1)$-round algorithm in the \modelname{} model for $(1 + \epsilon)$-approximation of the minimum cut.

\begin{theorem}\label{approx-mincut}
There is an $O(1)$-round algorithm in the \modelname{} model that $(1 + \epsilon)$-approximate the minimum cut of a weighted graph with high probability.
\end{theorem}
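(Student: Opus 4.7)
The plan is to adapt the $(1+\epsilon)$-approximate weighted min-cut algorithm of \cite{weightewMincut} (originally stated for near-linear MPC) to the \modelname{} model, using the same ``route the work through the large machine after sparsification'' paradigm employed in the previous subsections. The high-level strategy has two phases: (i) obtain a constant-factor estimate $\tilde{\lambda}$ of the min-cut value $\lambda$, and (ii) apply Karger-style edge sampling with probability proportional to $\log n / (\epsilon^2 \tilde{\lambda})$ to produce a sparsified multigraph $H$ with $\tilde{O}(n/\epsilon^2)$ edges that, with high probability, preserves every cut of $G$ up to a $(1\pm\epsilon)$ factor. Once $H$ is on the large machine, it computes the exact min-cut of $H$ locally and rescales.

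For phase (i), the idea is to leverage the large machine to compute the constant-factor approximation in $O(1)$ rounds rather than the $\Omega(\log n \log\log n)$ rounds required in sublinear MPC. Concretely, one can compute a $(1+\epsilon)$-approximate MSF using Theorem~\ref{approx-mincut}'s predecessor result (the $(1+\epsilon)$-approximate MST section) and/or pull a sparse constant-factor cut sparsifier onto the large machine via aggregation (Claim~\ref{aggregation_claim}). The large machine then computes $\tilde{\lambda}$ locally and disseminates it using Claim~\ref{spreading_claim}, so every small machine knows $\tilde{\lambda}$ within $O(1)$ rounds.

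For phase (ii), sampling is embarrassingly parallel: each small machine independently includes each weighted edge $e$ it stores with probability $p_e = \min\{1, c w_e \log n / (\epsilon^2 \tilde{\lambda})\}$ for a suitable constant $c$, and a copy of $e$ with scaled weight is placed into the sampled graph $H$. Karger's cut-counting bound guarantees that $|E(H)| = \tilde{O}(n/\epsilon^2)$ with high probability, so the small machines can ship $H$ to the large machine in $O(1)$ rounds (verifying the total count in advance via aggregation and aborting/retrying if it overflows, similarly to Phase~3 of Section~\ref{MST_section}). The large machine runs any polynomial-time exact min-cut algorithm on $H$ and outputs the result, which, by Karger's sparsification theorem, is a $(1\pm\epsilon)$-approximation of $\lambda$ with high probability.

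The main obstacle is phase (i): obtaining the constant-factor estimate in $O(1)$ rounds in \modelname{}, since the sublinear-MPC $(2+\epsilon)$-approximation uses $\Omega(\log n \log\log n)$ rounds. The resolution is that the large machine allows us to collapse this step by collecting an $\tilde{O}(n)$-sized structure (an approximate MSF, a Nagamochi-Ibaraki-style forest decomposition, or a cut sparsifier obtained by mild sub-sampling) on the large machine in $O(1)$ rounds using Claims~\ref{aggregation_claim} and~\ref{spreading_claim}; any of these yields a constant-factor approximation of $\lambda$ that is sufficient to drive Karger's sparsification in phase (ii). All subsequent steps are constant-round by design, giving the claimed $O(1)$-round \modelname{} algorithm.
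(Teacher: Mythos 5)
There is a genuine gap in both phases of your plan. The central error is in phase~(ii): sampling each edge with probability $p_e=\min\{1, c\,w_e\log n/(\epsilon^2\tilde{\lambda})\}$ (or, equivalently, sampling each unit of weight with rate $\Theta(\log n/(\epsilon^2\lambda))$) does \emph{not} yield a graph with $\tilde{O}(n/\epsilon^2)$ edges. Karger's cut-counting argument guarantees that cuts are preserved within $(1\pm\epsilon)$, but says nothing about the number of sampled edges; the expected size is $\Theta\bigl(\sum_e w_e\log n/(\epsilon^2\lambda)\bigr)$, which is $\tilde{O}(n)$ only if the total weight is already $O(n\lambda)$. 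For a dense unweighted graph with a pendant vertex ($\lambda=1$), your $p_e$ equals $1$ for every edge and $H=G$, which cannot be shipped to the large machine. The $\tilde{O}(n/\epsilon^2)$ bound you invoke is the Bencz\'ur--Karger bound, which requires non-uniform sampling by edge strength and cannot be obtained by uniform sampling. This is exactly why the algorithm of~\cite{weightewMincut} that the paper uses has a \emph{first} sub-procedure that reduces the graph to an unweighted multigraph with $O(n\lambda^*)$ edges while preserving the minimum cut, and only then applies the uniform sampling step (skipping it when $\lambda^*=O(\log n)$); your proposal omits this reduction entirely, and without it the shipping step fails.

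Phase~(i) is also not resolved by any of the three structures you suggest: the weight of an (approximate) MSF has no constant-factor relation to the min-cut value (a cycle and a clique already show this); a Nagamochi--Ibaraki forest decomposition has $\Theta(n\lambda)$ edges, which need not fit in the large machine's $\tilde{O}(n)$ memory; and constructing a cut sparsifier ``by mild sub-sampling'' presupposes an estimate of $\lambda$ (or of edge strengths), which is circular. The paper sidesteps both difficulties differently: it invokes Lemma~2.4 and Remark~2.5 of~\cite{weightewMincut}, which state that the entire sparsification procedure (including the $\lambda$-estimation and the edge-count reduction) has MPC round complexity governed solely by that of connected components; since connectivity is solvable in $O(1)$ rounds in the \modelname{} model via linear sketches (Theorem~\ref{conn_theorem}), the whole procedure runs in $O(1)$ rounds, and only the resulting $O(n\log n)$-edge multigraph is sent to the large machine, which computes its min cut and rescales. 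To repair your proposal you would need to either incorporate the edge-reduction sub-procedure and a connectivity-based estimation of $\lambda$ (e.g., testing connectivity of subsamples at geometrically varying rates, in parallel, via sketches), or simply reduce to the cited algorithm as the paper does.
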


\begin{proof}[Proof.]
Algorithm 1 of \cite{weightewMincut} is an algorithm that given a weighted graph $G$ with minimum cut $\lambda^*$, w.h.p computes an $O(n\log n)$ edges unweighted multi-graph $G'$, such that all preserved cuts of $G$ in $G'$ have sizes within $1 \pm \epsilon$ of their expectation, and a minimum cut of $G$ is preserved in $G'$ with a constant probability. 
The algorithm uses two sub-procedures. The first, reduces the number of edges to $O(n\lambda^*)$ while preserving the minimum cut value. The second, is a sampling procedure which takes each edge with probability $c\log{n}/(\epsilon^2\lambda)$ for a value $\lambda$ close to $\lambda^*$, and reduces the size of the graph to $O(n\log{n})$ edges while preserving the cuts within a $(1 \pm \epsilon)$ of their expectations (assuming the min-cut is $\Omega(\log{n})$. In case it is smaller, the graph from the first procedure already has $O(n\log{n})$ edges, so we skip this step). 

\begin{lemma}[Lemma 2.4 and Remark 2.5 from \cite{weightewMincut}, Informal]
The complexity of this algorithm in the MPC model depends only on the complexity of the Connected Components algorithm in that model.
\end{lemma}

In the \modelname{} model, the Connected Components problem can be solved in $O(1)$ rounds, as shown in Section~\ref{sec:MST}. Thus, the algorithm from \cite{weightewMincut} can be implemented in $O(1)$ rounds. The small machines then send the multi-graph $G'$ to the large machine, which outputs the minimum cut, re-normalized to its original value.
\end{proof}



\subsection{MIS in $O(\log\log(\Delta))$ Rounds}
In this subsection, we overview the result \cite{MIS}, which naturally extends to the \modelname{} model, and give some basic details on this algorithm.

\begin{theorem}
There is an $O(\log\log\Delta)$-round algorithm in the \modelname{} model that with high probability computes a Maximal Independent Set for a graph of maximum degree $\Delta$.
\end{theorem}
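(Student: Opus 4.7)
The plan is to show that the near-linear MPC algorithm of~\cite{MIS} can be simulated in the \modelname{} model with only a constant-factor overhead per round, preserving the $O(\log\log \Delta)$ round complexity. The key structural observation is that the algorithm of~\cite{MIS} proceeds in phases during which each vertex $v$ maintains only $\polylog(n)$ bits of state (e.g.\ a marking probability $p_t(v)$, a status indicating whether $v$ is tentatively in the MIS, has been removed, or is still active, and auxiliary counters). Since there are only $n$ vertices, the entire per-vertex state fits in $O(n\polylog n)$ bits and can be stored on the large machine throughout the execution.

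First, I would preprocess the edges with Claim~\ref{consec_machines} so that for each vertex $v$ the edges incident to $v$ are stored on consecutive small machines and the large machine knows the layout. The large machine initializes the per-vertex state of~\cite{MIS}. Then, in each simulated round of the algorithm, I would repeat the following three steps, each of which takes $O(1)$ rounds in the \modelname{} model:
\begin{enumerate}
\item The large machine uses Claim~\ref{spreading_claim} to disseminate, for each vertex $v$, its current state (marking probability, status, and any randomness needed for the round, generated centrally on the large machine) to every small machine that holds an edge incident to $v$.
\item Each small machine processes its local edges. For every stored edge $\{u,v\}$, the machine has the states of both endpoints and can locally compute edge-level indicators such as ``$v$ is marked and has a marked active neighbor'' or ``$u$ survives as an active vertex.''
\item The small machines feed these edge-level indicators into Claim~\ref{aggregation_claim}, partitioning the universe by endpoint, to deliver to the large machine per-vertex aggregates such as the number of marked neighbors, the sum of neighbor marking probabilities, or a single witness neighbor if one exists. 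The large machine then updates the per-vertex state exactly as the original algorithm prescribes and proceeds to the next round.
\end{enumerate}

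Since the algorithm of~\cite{MIS} terminates in $O(\log\log\Delta)$ rounds and each of its rounds is simulated in $O(1)$ rounds of the \modelname{} model, the overall round complexity is $O(\log\log\Delta)$; the high-probability correctness guarantee is inherited verbatim from~\cite{MIS}. The main thing to verify is that every communication step of~\cite{MIS} can indeed be expressed purely in terms of per-vertex dissemination and per-vertex aggregation over an aggregation function fitting Definition~\ref{def:agg}. This is straightforward for the operations used in~\cite{MIS} (counts, sums, and ``pick an arbitrary witness''), so no additional machinery is needed and no edges must ever be sent to the large machine; the whole simulation lives comfortably within the memory budget.
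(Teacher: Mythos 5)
There is a genuine gap: you have mischaracterized the structure of the $O(\log\log\Delta)$ near-linear MPC algorithm of~\cite{MIS}. That algorithm is not a sequence of $O(\log\log\Delta)$ iterations in which each vertex updates a $\polylog(n)$-bit state (marking probabilities, counters) as a function of per-neighbor aggregates; that description fits Luby-style algorithms, which need $\Omega(\log\Delta)$ such iterations. The speedup to $O(\log\log\Delta)$ comes from round compression over a random vertex permutation: in iteration $i$ one machine must hold the \emph{entire subgraph induced by the next block of the permutation} (the vertices of rank roughly $n/\Delta^{\alpha^i}$ to $n/\Delta^{\alpha^{i+1}}$, a subgraph with $\tilde{O}(n)$ edges w.h.p.) and simulate the sequential greedy MIS on it, vertex by vertex in permutation order. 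Whether a vertex joins the MIS depends on the decisions of earlier-ranked vertices in the same block through adjacency chains of length up to the block size, so this step cannot be expressed as counts, sums, or ``pick a witness'' aggregations computed in $O(1)$ rounds via Claims~\ref{spreading_claim} and~\ref{aggregation_claim}. Indeed, if each iteration could be carried out purely by per-vertex dissemination and aggregation on the small machines, the large machine would be unnecessary and you would obtain an $O(\log\log\Delta)$-round MIS algorithm in essentially sublinear MPC, which runs against the conditional hardness landscape the paper is built around.

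Consequently your explicit claim that ``no edges must ever be sent to the large machine'' is exactly where the argument fails, and it is also exactly where the paper's proof differs: the paper has the small machines send the $\tilde{O}(n)$ edges of the induced block subgraph $G_i$ to the large machine in every iteration, the large machine runs the greedy process locally in the order given by the permutation $\pi$ it chose, and then it uses Claim~\ref{spreading_claim} to inform the small machines which vertices joined the MIS so that covered edges can be discarded; after $O(\log\log\Delta)$ iterations the residual graph has $O(n\polylog n)$ edges and is finished on the large machine. Your preprocessing with Claim~\ref{consec_machines}, the dissemination of centrally generated randomness (the ranks), and the use of Claim~\ref{spreading_claim} to broadcast per-vertex outcomes are all consistent with the paper's adaptation; what is missing is the edge-shipping step and the argument (inherited from~\cite{MIS}) that each block subgraph fits in $\tilde{O}(n)$ memory, which is the actual content of the simulation.
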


\begin{proof}[Proof.]
We follow the algorithm by \cite{MIS} and show that it can be implemented in the \modelname{} model to provide the same round complexity. 
The algorithm starts by having the large machine choosing a random permutation $\pi :[n] \rightarrow [n]$ and disseminating the ranks of the vertices to the small machines such that each small machine knows the rank of the end-points of the edges it stores using Claim~\ref{spreading_claim}. Let $\alpha = 3/4$. Repeat the following process iteratively for $i=0,1,\dots$ until the maximum degree is $O(\polylog{n})$:  
in iteration $i$, let $G_i$ be the graph induced by the vertices of rank $n/\Delta^{\alpha^i}$ to $n/\Delta^{\alpha^{i+1}}$. It can be shown inductively that $G_i$ has  $\tilde{O}(n)$ edges. The small machines send the edges of $G_i$ to the large machine, which then locally applies the following algorithm on $G_i$:
\begin{itemize}
    \item Add the vertex $v$ that is next in the order defined by the permutation $\pi$ to the MIS.
    \item Remove all neighbors of $v$ from $G_i$.
\end{itemize}

In the end of the iteration, for each vertex $v\in V$ that was added to the MIS, the large machine notifies all the small machines that store an edge adjacent to $v$ that $v$ was taken into the MIS. These edges are then removed from the graph. It is shown that $O(\log\log \Delta)$ iterations suffice for reducing the number of remaining edges to $O(n\polylog{n})$.

\end{proof}

\subsection{$(\Delta+1)$-Vertex Coloring in $O(1)$ Rounds}
In this subsection, we implement the algorithm of \cite{coloring} into the \modelname{} to obtain an $O(1)$-round algorithm for $(\Delta+1)$-vertex coloring.

\begin{theorem}
There is an $O(1)$-round algorithm in the \modelname{} model that with high probability computes a ($\Delta+1)$-Vertex Coloring.
\end{theorem}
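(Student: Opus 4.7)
The plan is to simulate the near-linear MPC algorithm of \cite{coloring} within the \modelname{} model, showing that its main computational steps can either be performed locally by the large machine or executed in $O(1)$ rounds of communication between the small machines and the large machine. The algorithm of \cite{coloring} rests on \emph{palette sparsification}: each vertex independently samples a palette of $O(\log n)$ candidate colors from $\sett{1,\ldots,\Delta+1}$, after which only the \emph{conflict edges} (edges whose endpoints have intersecting palettes) are relevant to the final coloring, and with high probability there are only $\tilde{O}(n)$ such edges.

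First, I would arrange the edges using Claim~\ref{consec_machines}, so that for each vertex $v \in V$ a designated small machine $M_{\mathrm{first}}(v)$ samples a random palette $L(v) \subseteq \sett{1,\ldots,\Delta+1}$ of size $O(\log n)$. The palettes are then disseminated via Claim~\ref{spreading_claim} in $O(1)$ rounds, so that every small machine storing an edge adjacent to $v$ learns $L(v)$. Each small machine then locally inspects each edge $\sett{u,v}$ that it stores and marks it as a conflict edge iff $L(u) \cap L(v) \neq \emptyset$.

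Second, I would aggregate the number of conflict edges using Claim~\ref{aggregation_claim}. If this count exceeds the memory budget of the large machine we abort and restart; by the analysis of \cite{coloring}, this failure occurs with probability at most $1/\poly(n)$. Otherwise, the small machines ship the conflict edges to the large machine, together with the $n$ palettes (which already fit in $\tilde{O}(n)$ bits). The large machine now has a self-contained instance of the constrained coloring problem and runs the local post-processing of \cite{coloring} to assign each vertex a color from its palette avoiding all conflict edges. The chosen colors are then disseminated back to the small machines in $O(1)$ rounds via Claim~\ref{spreading_claim}.

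The main obstacle will be verifying that the correctness and sparsity guarantees of \cite{coloring} transfer verbatim to \modelname{}. However, these guarantees depend only on the random palette choices and on the structure of the input graph, not on how edges are distributed across machines; the communication primitives of Claims~\ref{aggregation_claim} and \ref{spreading_claim} faithfully transport palettes and colors between the large machine and the relevant small machines, so the simulation preserves the original round complexity up to an additive $O(1)$.
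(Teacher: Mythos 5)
Your proposal is correct and takes essentially the same route as the paper: palette sparsification via the color-sampling lemma of \cite{coloring}, dissemination of the $\Theta(\log n)$-size palettes with Claim~\ref{spreading_claim}, local identification of the $\tilde{O}(n)$ conflicting edges on the small machines, and shipping them to the large machine, which completes the list coloring in $O(1)$ rounds. The only cosmetic difference is that the paper has the large machine sample the palettes (so Claim~\ref{spreading_claim} applies directly), whereas you have designated small machines sample them, which requires first forwarding the palettes to the large machine before dissemination --- a harmless variation.
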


\begin{proof}[Proof.]

We use the following color-sampling lemma of~\cite{coloring}:

\begin{lemma}[Theorem 3.1 in \cite{coloring}]\label{lem:app_coloring_sample}
Let $G = (V, E)$ be any $n$-vertex graph and $\Delta$ be the maximum degree in G. Suppose for each vertex
$v \in V$ , we independently pick a set $L(v)$ of colors of size $\Theta(\log{n})$ uniformly at random from $\{0,\dots,\Delta\}$. Then, with high probability there exists a proper coloring $C : V \rightarrow \{0,\dots,\Delta\}$ of $G$ such that for all vertices $v \in V$, $C(v) \in L(v)$.
\end{lemma}

The algorithm works as follows: the large machine chooses $\Theta(\log{n})$ colors from $\{0,\dots,\Delta\}$ for each vertex, and disseminates them to the small machines using Claim~\ref{spreading_claim}, such that every small machine knows the colors of the vertices that participate in the edges it stores. For an edge $e = (u,v)$, we say that $e$ is conflicting if $L(u) \cap L(v) \neq \emptyset$. From Lemma 4.1 of \cite{coloring}, w.h.p. we have that the total number of conflicting edges in the graph is $O(n\polylog{n})$. The small machines send to the large machine all their hazardous edges. By Lemma~\ref{lem:app_coloring_sample}, the large machine may assign colors such that all conflicting edges are non-monochromatic. This implies that this coloring is proper for the entire graph, as only conflicting edges may be monochromatic in any such coloring.
\end{proof}

\end{document}